%% 
%% Copyright 2007-2025 Elsevier Ltd
%% 
%% This file is part of the 'Elsarticle Bundle'.
%% ---------------------------------------------
%% 
%% It may be distributed under the conditions of the LaTeX Project Public
%% License, either version 1.3 of this license or (at your option) any
%% later version.  The latest version of this license is in
%%    http://www.latex-project.org/lppl.txt
%% and version 1.3 or later is part of all distributions of LaTeX
%% version 1999/12/01 or later.
%% 
%% The list of all files belonging to the 'Elsarticle Bundle' is
%% given in the file `manifest.txt'.
%% 
%% Template article for Elsevier's document class `elsarticle'
%% with harvard style bibliographic references

\documentclass[preprint,12pt,authoryear]{elsarticle}
\usepackage{xcolor}
\usepackage{booktabs}
\usepackage{tikz}
\usetikzlibrary{arrows.meta, positioning, shapes.geometric}
\definecolor{RoyalBlue}{RGB}{65,105,225}
\definecolor{CoralRed}{RGB}{205,92,92}
\definecolor{DarkGray}{RGB}{80,80,80}
\definecolor{LightGray}{RGB}{220,220,220}
\newcommand{\rev}[1]{#1} % CLEAN/final version: revision markup neutralized (prints normal black). For the marked-up version, restore: \newcommand{\rev}[1]{\textcolor{RoyalBlue}{#1}}

%% Use the option review to obtain double line spacing
%% \documentclass[authoryear,preprint,review,12pt]{elsarticle}

%% Use the options 1p,twocolumn; 3p; 3p,twocolumn; 5p; or 5p,twocolumn
%% for a journal layout:
%% \documentclass[final,1p,times,authoryear]{elsarticle}
%% \documentclass[final,1p,times,twocolumn,authoryear]{elsarticle}
%% \documentclass[final,3p,times,authoryear]{elsarticle}
%% \documentclass[final,3p,times,twocolumn,authoryear]{elsarticle}
%% \documentclass[final,5p,times,authoryear]{elsarticle}
%% \documentclass[final,5p,times,twocolumn,authoryear]{elsarticle}

%% For including figures, graphicx.sty has been loaded in
%% elsarticle.cls. If you prefer to use the old commands
%% please give \usepackage{epsfig}

%% The amssymb package provides various useful mathematical symbols
\usepackage{amssymb}
%% The amsmath package provides various useful equation environments.
\usepackage{amsmath}
\usepackage{amsthm}   % <-- must be active!

% Upright (non-italic) body text; numbered by section
\theoremstyle{definition}
\newtheorem{theorem}{Theorem}[section]
\newtheorem{proposition}{Proposition}[section]
\newtheorem{lemma}{Lemma}[section]
\newtheorem{corollary}[theorem]{Corollary} % <-- add this
\newtheorem{assumption}{Assumption}
\newtheorem{definition}{Definition}[section]

\theoremstyle{remark}
\newtheorem{remark}{Remark}

\newtheorem*{proposition*}{Proposition}
\newtheorem*{theorem*}{Theorem}
\usepackage{tikz}
\usetikzlibrary{positioning, shapes.geometric, arrows.meta}
\usepackage[hidelinks]{hyperref}
\setlength{\emergencystretch}{3em}
\begin{document}

\begin{frontmatter}

%% Title, authors and addresses

%% use the tnoteref command within \title for footnotes;
%% use the tnotetext command for theassociated footnote;
%% use the fnref command within \author or \affiliation for footnotes;
%% use the fntext command for theassociated footnote;
%% use the corref command within \author for corresponding author footnotes;
%% use the cortext command for theassociated footnote;
%% use the ead command for the email address,
%% and the form \ead[url] for the home page:
%% \title{Title\tnoteref{label1}}
%% \tnotetext[label1]{}
%% \author{Name\corref{cor1}\fnref{label2}}
%% \ead{email address}
%% \ead[url]{home page}
%% \fntext[label2]{}
%% \cortext[cor1]{}
%% \affiliation{organization={},
%%            addressline={}, 
%%            city={},
%%            postcode={}, 
%%            state={},
%%            country={}}
%% \fntext[label3]{}

\title{Universalization and the Origins of Fiscal Capacity}
%% use optional labels to link authors explicitly to addresses:
%% \author[label1,label2]{}
%% \affiliation[label1]{organization={},
%%             addressline={},
%%             city={},
%%             postcode={},
%%             state={},
%%             country={}}
%%
%% \affiliation[label2]{organization={},
%%             addressline={},
%%             city={},
%%             postcode={},
%%             state={},
%%             country={}}

\author[inst1,inst2]{Esteban Muñoz-Sobrado\corref{cor1}}
\ead{esteban.munoz@urv.cat}

\affiliation[inst1]{%
organization={Department of Economics, Universitat Rovira i Virgili},
addressline={Av. de la Universitat 1},
city={Reus},
postcode={43204},
country={Spain}}

\affiliation[inst2]{%
organization={ECO-SOS, Universitat Rovira i Virgili},
city={Tarragona},
country={Spain}}

\cortext[cor1]{Corresponding author. Email: esteban.munoz@urv.cat. I am grateful to Ingela Alger, Pau Juan-Bartroli, and Enrico Mattia Salonia for insightful comments and suggestions, and to the editor, an associate editor, and two anonymous referees for comments that substantially improved the paper. This work was supported by the Spanish Ministry of Science, Innovation and Universities [grant number PID2022-137382NB-I00].}

%% Abstract
\begin{abstract}
%% Text of abstract
This paper proposes a model of tax compliance and fiscal 
capacity in which citizens partially internalize the 
consequences of concealment by imagining a world in which 
a share of the population acted similarly, linking their compliance decisions 
to the perceived value of public spending.
A selfish elite chooses between public goods and private rents, taking compliance as given. 
In equilibrium, citizens’ moral internalization expands the feasible tax base and induces elites to allocate resources toward provision rather than appropriation. 
When the value of public spending is uncertain, morality enables credible reform: high-value elites can signal their type through provision, prompting citizens to increase compliance and raising fiscal capacity within the same period. 
The analysis thus identifies a moral channel through which states may escape low-capacity traps even under weak enforcement.
\end{abstract}

%%Graphical abstract
\begin{graphicalabstract}
\end{graphicalabstract}

%% Research highlights
%\begin{highlights}
%\item Citizens’ ethics microfound a moral Laffer curve for fiscal capacity.
%\item State capacity emerges endogenously from interactions between elite and citizens.
%\item Moral preferences expand the feasible tax base and sustain compliance.
%\item Citizens’ ethics enable credible and self-enforcing fiscal reform.
%\item Provides a contractual theory of state formation grounded in ethics.
%\end{highlights}

\begin{keyword}
 State capacity  \sep Universalization \sep Moral preferences
\end{keyword}

\end{frontmatter}

%% Add \usepackage{lineno} before \begin{document} and uncomment 
%% following line to enable line numbers
%% \linenumbers

%% main text
%%

\section{Introduction} \label{s1}

What sustains tax compliance and state capacity when enforcement is imperfect and institutions are fragile? 
Much of the economic literature emphasizes a coercive view of the state, where capacity depends on investments that extend its reach and allow it to enforce compliance \citep{besley2009origins,besley2011pillars}.
In contrast, a second view considers the emergence of a strong state as a collective arrangement in which citizens voluntarily comply with taxes in exchange for public goods and institutional legitimacy \citep{weingast1997political,weingast2005constitutional,binmore1994game,binmore1998evolution,ACEMOGLU20051199}. 
Within this view, economic models have microfounded compliance through intrinsic reciprocity, where reciprocal behavior between the elite and citizens is driven by citizens' preferences \citep{besley2020}. 
Tax morale in such frameworks depends on perceptions of government behavior and can sustain either high- or low-capacity equilibria depending on expectations about the expenditure mix of the government.

This paper formalizes a different mechanism based on universalization reasoning, the idea that individuals evaluate their actions by asking, \textit{What if everyone acted as I do?} 
We capture this logic using Homo Moralis preferences 
\citep{alger2013homo,alger2016evolution}, under which 
individuals evaluate outcomes as if a hypothetical share of the population 
acted the same way. The higher this share, the stronger is 
moral internalization. Full universalization is the limiting 
case. These preferences admit an evolutionary microfoundation but can be studied independently of it. In the context of taxation, this approach implies that individuals may comply even when enforcement is limited, because deviation carries an intrinsic moral cost as citizens internalize the universalized payoff implied by their compliance. 
The model shows that moral internalization can expand the fiscal frontier and foster the provision of public goods even under weak institutions. 
Moreover, when moral internalization is strong enough, credible improvements in fiscal behavior become feasible: self-interested elites find it optimal to allocate revenue to public goods rather than private rents, and citizens respond with higher compliance, thereby increasing fiscal capacity. 
By linking universalization reasoning to fiscal behavior, the analysis provides a microfoundation for self-enforcing taxation and offers a novel explanation of how states can move out of persistent low-capacity traps.

We build on the framework of \citet{besley2020}, where an elite 
sets the tax rate and allocates revenue between public goods and 
self-serving transfers. The key departure lies in the behavioral 
foundation of compliance. In Besley's model, reciprocity enters 
through the composition of spending: civic-minded citizens 
comply more when the government allocates resources to public 
goods rather than rents, and civic-mindedness is a fixed trait. 
In our framework, compliance is driven by universalization: the 
moral return to reporting depends on the value of public 
spending $\alpha$ and institutional quality $\sigma$, not only 
on the spending mix. This structural dependence generates 
distinct predictions, particularly under asymmetric information, 
where universalization creates an informational channel absent 
under reciprocity.

We first explore these mechanisms in a static model with full 
information, establishing how morality shapes fiscal capacity 
and elite incentives, and then extend the framework to a dynamic 
setting in which the elite privately observes the value of public 
spending, so that its allocation choice serves as a signal and 
compliance responds to inferred fundamentals.

The static model delivers two main results.
First, higher morality increases tax compliance, expanding the 
feasible revenue set and encouraging elites to allocate 
resources toward public goods rather than private rents. 
This mechanism supports high-tax, high-capacity equilibria. 
Second, moral preferences can sustain public provision even 
when the value of public goods is low, because sufficiently 
strong morality shifts the elite's incentive constraint in 
favor of provision. In this region, morality effectively 
disciplines elites into providing public goods that they would 
otherwise forgo, a role analogous to the cooperation incentives 
in \citet{ACEMOGLU20051199} but operating through moral 
internalization rather than repeated interaction.

To take the analysis beyond this static benchmark, we then study an infinite-horizon environment in which the value of public spending evolves over time and is privately observed by the elite. The elite's allocation choice becomes a signal of fiscal fundamentals: when citizens observe public spending rather than rents, they update their beliefs about the value of public goods. Under universalization, this update translates into higher compliance within the same period. This informational channel is absent under reciprocity, where compliance responds to the observed spending mix but not to inferred fundamentals. The dynamic structure thus produces a distinctive prediction, which we call \emph{credible reform}: a self-enforcing equilibrium in which elite provision reveals a high-value state and the resulting belief update triggers an immediate expansion of fiscal capacity.

The dynamic analysis yields two main results. 
First, morality activates credible reform: when the value of 
public goods is only moderately high, self-interested elites 
would not provide them on material grounds alone, but 
strong enough moral internalization makes provision a credible signal 
of high productivity, prompting citizens to raise compliance 
within the same period. 
Second, morality amplifies the fiscal response to reform: 
higher morality magnifies the compliance jump triggered by 
credible provision, expanding the tax base immediately.

In our analysis, we distinguish between a weak-high state, in which morality is essential for credible reform, and a strong-high state, in which strong provision incentives make reform self-enforcing. 
Together, these findings suggest that moral societies have a 
structural advantage in escaping low-capacity equilibria once 
they experience positive shocks to the value of their common 
goods, an implication that does not follow directly from 
reciprocity-based models. We illustrate these mechanisms with 
two cases: the Nordic countries, where high moral internalization 
sustains compliance under moderate enforcement 
\citep{rothstein2005all,uslaner2008corruption}, and Botswana, 
where pre-existing civic norms enabled credible fiscal reform 
following the discovery of diamonds 
\citep{acemoglu2003african,robinson2006political}.

\paragraph*{Related literature}
This paper relates to a broad literature that examines how cooperation and state authority can be sustained without perfect commitment. 
For instance, \citet{kotlikoff1988social} study self-enforcing intergenerational social contracts in an overlapping-generations framework, while \citet{binmore1998evolution} models the endogenous formation of fairness norms through repeated interaction. 
\citet{ACEMOGLU20051199} develops a related framework in which a consensually strong state capable of taxation and rule enforcement emerges when the gains from cooperation outweigh temptations to expropriate. 
In these approaches, compliance and state capacity arise from strategic or institutional consistency through repeated play, reputation, or procedural rules that make cooperation incentive compatible. 
By contrast, the mechanism developed here does not rely on repetition or belief-based enforcement. 
It derives compliance from moral internalization, grounded in 
universalization reasoning, where individuals behave as if their 
own actions were to be generalized. This provides a 
belief-independent foundation for voluntary compliance and the 
emergence of fiscal capacity, complementing existing 
reciprocity-based explanations.

The paper also connects to work emphasizing the role of cultural 
traits such as trust and shared identity in shaping institutional 
performance \citep{algan2013trust, collier2017culture}, and more 
broadly to the literature on the positive effects of generalized 
trust on economic growth \citep{tabellini2010culture}, and to 
research on the interaction between economic incentives and social 
norms in redistributive contexts \citep{lindbeck1999social}. 
These accounts typically model civic culture as generalized 
trust, in-group identification, or reciprocity norms. Our 
approach instead treats moral universalization as a preference 
that enters individual optimization directly, without requiring 
beliefs about others' compliance. 
Section~\ref{sec:robustness} shows how the population share of 
this preference can itself be sustained endogenously through 
cultural transmission.

Finally, we contribute to a literature that applies Kantian 
reasoning to economic behavior, building on the early insight 
of \citet{laffont1975macroeconomic}. Specifically, we draw on 
the Homo Moralis framework of 
\citet{alger2013homo, alger2016evolution}, in which 
individuals evaluate their actions as if universally adopted. 
This framework has been applied to settings such as team 
incentives \citep{sarkisian2017team, sarkisian2021optimal, 
sarkisian2021screening}, collective choice 
\citep{algerlaslier2022, algerdierkslaslier2025}, climate policy 
\citep{eichner2020climate}, Pigovian 
taxation \citep{eichner2020kantians}, monetary institutions 
\citep{norman2020evolution}, moral preferences in bargaining 
\citep{juan2024moral}, injunctive norms 
\citep{juan2024injunctive}, trade under information 
asymmetries \citep{rivero2025trade}, norm emergence 
\citep{alger2025norms}, and the interaction between moral 
concerns and social preferences 
\citep{juanbartroli2025socialpreferencesmoralconcerns}\footnote{A companion paper, \citet{munoz2022taxing}, applies 
{Homo Moralis} preferences to optimal income taxation. 
The present paper focuses instead on the emergence and 
persistence of fiscal capacity. \citet{lopezcantor2025individual} 
analyzes the interplay between individual morality, public good 
provision, and redistribution in a static political economy 
setting, complementing the present analysis by examining the 
distributional consequences of Kantian moral preferences through 
voting equilibria.}.

The remainder of the paper is organized as follows. 
Section~\ref{sec:baseline} introduces the baseline model and 
Homo Moralis preferences. Section~\ref{sec:static} 
characterizes equilibrium under complete information, deriving the 
moral Laffer curve and the elite's optimal allocation. 
Section~\ref{sec:dynamics-aligned} extends the framework to 
asymmetric information, characterizes the signaling equilibria 
that produce the jump-start dynamic, and discusses robustness. 
Section~\ref{sec:discussion} discusses the relation to 
reciprocity-based models, the behavioral foundations for 
moral universalization, and historical illustrations. 
Section~\ref{sec:conclusion} concludes.

\section{Baseline Model}\label{sec:baseline}
We consider an economy with two types of agents: a ruling elite that sets fiscal policy, and a unit mass of identical tax-paying citizens. The elite acts as a collective decision-maker, while citizens take policy as given and decide how much to comply. All formal proofs are collected in the Mathematical Appendix.

Each citizen earns the same income $w > 0$. Citizens choose an income report $\tilde{w} \geq 0$ to the tax authority. Truthful reporting corresponds to $\tilde{w} = w$, while underreporting corresponds to $\tilde{w} < w$. Because citizens are ex-ante identical and atomistic, the equilibrium report is common across citizens (see the discussion following Lemma~\ref{lem:optimal_report}). We omit the citizen index throughout and write $\tilde{w}$ for this common report.

The elite sets fiscal policy to maximize its own payoff subject 
to institutional constraints, while compliance behavior is 
shaped by moral preferences introduced in the next subsection. 
The interaction takes the form of a two-stage game, formally 
defined in Section~\ref{sec:static}: the elite moves first, 
choosing fiscal policy. Citizens then observe the policy and 
choose income reports. In the dynamic extension 
(Section~\ref{sec:dynamics-aligned}), the game is embedded in 
an infinite-horizon environment with asymmetric information.

The material payoff of each citizen depends on both public and private consumption and is assumed linear:
\begin{equation}
\pi(G, y) = \alpha \cdot G + y,
\end{equation}
where $G$ is a public good financed by tax revenue, $y$ is private consumption, 
and $\alpha>0$ captures the marginal value of the public good. 
For the baseline analysis we assume $\alpha$ is common to citizens and the elite. \ref{app:unaligned} extends the analysis to the unaligned case, in which the elite and citizens have distinct valuations $\alpha_E$ and $\alpha_C$, respectively.

\subsection{Policy and institutions}

The elite sets fiscal policy, which consists of four elements: the income tax rate $t$, public good provision $G$, transfers to the elite $B$, and transfers to citizens $b$. Institutional strength is captured by a parameter $\sigma \in (0,1)$, which governs the extent to which elite appropriation must be matched by transfers to citizens. For every unit appropriated by the elite, institutions require $\sigma$ units to be transferred to citizens, so that $b=\sigma B$. The elite's effective share of residual revenue is $\theta(\sigma)=1/(1+\sigma)$, strictly decreasing in $\sigma$, with $\theta(0)=1$ and $\theta(1)=1/2$.

This specification follows \citet{besley2011pillars}. The parameter $\sigma$ summarizes the accountability institutions that constrain elite appropriation. At $\sigma\to 0$ diverted revenue flows entirely to the elite. At $\sigma\to 1$ each unit diverted is matched one-for-one by a citizen transfer, leaving the elite with half. The composite $\sigma\,\theta(\sigma) = \sigma/(1+\sigma)$, the fraction of diverted revenue recaptured by citizens as transfers, will recur throughout the analysis.

For analytical convenience, we parameterize the allocation of tax revenue by a share $g\in[0,1]$, so that $G = gT$ and $B = \theta(\sigma)(1-g)T$, where $T$ denotes aggregate tax revenue. The elite's problem then reduces to a choice over $(t,g)$.

\subsection{Compliance under universalization}

Citizens choose a reported income $\tilde{w}\ge 0$ to the tax authority. Post-tax, pre-transfer income is
\begin{equation}
z(\tilde{w}) = w - t\,\tilde{w} - c\, C(\tilde{w}-w),
\end{equation}
where $c>0$ measures enforcement intensity and $C:\mathbb{R}\to\mathbb{R}_+$ is a convex misreporting-cost function of the deviation $d=\tilde{w}-w$, with $C(0)=0$, $C'(0)=0$, and $C''(d)>0$.\footnote{This reduced-form cost admits two interpretations. Under a \emph{material} reading, $cC(\tilde w - w)$ is the expected penalty from detection, and $c$ indexes the government's detection effort, in parallel with the strategic investments in coercive capacity studied by \citet{besley2009origins}, \citet{besley2011pillars}, and \citet{besley2020}. Under a \emph{lying-cost} reading, it is an intrinsic psychic cost of dishonesty, and $c$ scales the citizen's aversion to untruthful reporting. Because the term enters the citizen's actual and universalized net-income expressions symmetrically, the two readings yield identical first-order conditions and equilibrium objects. We adopt the material reading as the default.} Final private consumption is
\begin{equation}
y(\tilde{w}) = b + z(\tilde{w}).
\end{equation}

Throughout the analysis, we adopt the quadratic specification 
$C(d)=\tfrac{1}{2}d^2$, which delivers closed-form solutions 
while preserving the key convexity properties.\footnote{This 
specification penalizes deviations from truthful reporting 
symmetrically. Since Assumption~\ref{ass:kappa} below ensures 
that the optimal report never exceeds true income, the cost of 
over-reporting is never incurred in equilibrium and the 
symmetric form is without loss.}

\paragraph*{Universalization benchmark} If all citizens report the same $\tilde{w}$, per-capita objects are
\begin{align*}
T^{\mathcal{M}}(\tilde{w}) &= t\,\tilde{w} && \text{(Tax revenue)} \\
G^{\mathcal{M}}(\tilde{w}) &= g\, T^{\mathcal{M}}(\tilde{w}) && \text{(Public good)} \\
b^{\mathcal{M}}(\tilde{w}) &= \sigma\,\theta(\sigma)\,(1-g)\, T^{\mathcal{M}}(\tilde{w}) && \text{(Transfers to citizens)} \\
z^{\mathcal{M}}(\tilde{w}) &= w - t\,\tilde{w} - c\, C(\tilde{w}-w) && \text{(Net income)}.
\end{align*}

\begin{definition}[Homo Moralis utility]\label{def:HM}
Each citizen is characterized by a degree of morality $\kappa \in [0,1]$.\footnote{We follow the convention of \citet{alger2013homo}: $\kappa=0$ is the Homo Economicus benchmark and $\kappa=1$ is the pure-Kantian limit. The regularity bound $\kappa\alpha < 1-c/2$ in Assumption~\ref{ass:kappa} below ensures that the Laffer-maximizing tax rate is interior and revenue remains finite.}
Given a report $\tilde{w}$, the citizen’s utility is
\begin{equation}\label{eq:HM}
U^{(\kappa)}(\tilde{w}) 
= (1-\kappa)\,\pi\!\left(G,\, b+z(\tilde{w})\right) 
+ \kappa\,\pi\!\left(G^{\mathcal{M}}(\tilde{w}),\, b^{\mathcal{M}}(\tilde{w})+z^{\mathcal{M}}(\tilde{w})\right),
\end{equation}
where $\pi(G,y)=\alpha G + y$ is the material payoff function.
\end{definition}

The Homo Moralis utility is a convex combination of two payoff evaluations. The citizen attaches weight $1-\kappa$ to a selfish evaluation, in which policy is taken as given, and weight $\kappa$ to a universalized evaluation, in which she assesses the public good provision and transfers that would obtain if everyone submitted the same report\footnote{This counterfactual ties individual compliance decisions to 
the macroeconomic constraints of the economy, in the sense of 
\citet{laffont1975macroeconomic}.}.

Given the macroeconomic constraints and the linear tax structure, the fiscal variables (public good provision $G$, transfers to citizens $b$, and elite rents $B$) can be written as functions of the tax rate $t$, the public good share $g$, and citizens’ reports. For any given policy parameters $(t,g)$, each citizen chooses a report $\tilde{w}\ge 0$ to solve
\begin{equation*}
\begin{split}
\tilde{w}^*(\kappa; g,\alpha,t,c,\sigma)
= \arg\max_{\tilde{w}\ge 0}\;&
(1-\kappa)\,\pi\!\left(G,\, b+z(\tilde{w})\right) \\
&+ \kappa\,\pi\!\left(G^{\mathcal{M}}(\tilde{w}),\, 
   b^{\mathcal{M}}(\tilde{w})+z^{\mathcal{M}}(\tilde{w})\right),
\end{split}
\end{equation*}
where $\pi(G,y)=\alpha G+y$ is the material payoff function and 
\begin{equation}
z(\tilde{w}) = w - t\,\tilde{w} - c\,C(\tilde{w}-w)
\end{equation}
denotes post-tax, pre-transfer income.

The following lemma characterizes the citizen's optimal report 
under the quadratic specification.
\begin{lemma}[Optimal report]\label{lem:optimal_report}
Under the quadratic specification $C(d)=\tfrac{1}{2}d^2$, any interior solution to the citizen’s problem satisfies
\begin{equation}
\tilde{w}^*(\kappa; g,\alpha,t,c,\sigma)
= w + \frac{t}{c}\,\Big[\kappa\,\varphi(g,\alpha,\sigma) - 1\Big],
\end{equation}
where
\[
\varphi(g,\alpha,\sigma) := g\,\alpha + (1-g)\,\sigma\,\theta(\sigma)
\]
is the effective moral return to reporting, a weighted average of the marginal value of public goods and transfers under universalized behavior.
\end{lemma}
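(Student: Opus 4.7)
The plan is to compute the first-order condition of the citizen's Homo Moralis problem term by term, exploit the linearity of $\pi$ so each component simplifies to a constant marginal contribution, and then solve a linear equation in $\tilde{w}$. Because $\pi$ is linear and $C$ is quadratic, the objective is strictly concave in $\tilde{w}$, so the FOC identifies the unique interior maximizer whenever it lies in the admissible region.

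First I would differentiate the selfish part. Holding $G$ and $b$ fixed, only $z(\tilde w)=w-t\tilde w-\tfrac{c}{2}(\tilde w-w)^2$ varies with the report, giving
\[
\frac{\partial}{\partial \tilde{w}}\,\pi(G,b+z(\tilde{w})) \;=\; -t-c\,(\tilde{w}-w).
\]
For the moral part I would differentiate each universalized object with respect to $\tilde w$: $T^{\mathcal{M}}=t\tilde w$, $G^{\mathcal{M}}=gt\tilde w$, $b^{\mathcal{M}}=\sigma\theta(\sigma)(1-g)t\tilde w$, and $z^{\mathcal{M}}$ as in the selfish case. Linearity of $\pi$ then yields
\[
\frac{\partial}{\partial \tilde{w}}\,\pi\!\left(G^{\mathcal{M}}, b^{\mathcal{M}}+z^{\mathcal{M}}\right) \;=\; \alpha g t + \sigma\theta(\sigma)(1-g)\,t - t - c(\tilde{w}-w) \;=\; t\,[\varphi(g,\alpha,\sigma)-1]-c(\tilde{w}-w),
\]
where the bracket collapses exactly to $\varphi(g,\alpha,\sigma)-1$ by the definition of $\varphi$.

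I would then combine the two pieces weighted by $(1-\kappa)$ and $\kappa$. The misreporting-cost terms $-c(\tilde w-w)$ appear in both evaluations and add up to $-c(\tilde w-w)$, while the revenue-side terms collapse to $(1-\kappa)(-t)+\kappa t(\varphi-1)=t(\kappa\varphi-1)$. Setting the sum to zero and solving for $\tilde w$ delivers the stated closed form
\[
\hat{\tilde w}=w+\tfrac{t}{c}\bigl[\kappa\varphi(g,\alpha,\sigma)-1\bigr].
\]
The second-order condition is immediate: the only curvature comes from the quadratic cost, yielding a second derivative equal to $-c<0$, so the critical point is the global maximum.

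The only real subtlety is ensuring interiority, i.e.\ $\hat{\tilde w}\ge 0$; since $\kappa\varphi-1<0$ whenever $\varphi\le 1$, the bracket is negative and the citizen underreports, and one must assume parameter values such that $w+\tfrac{t}{c}(\kappa\varphi-1)\ge 0$ for the FOC-based expression to describe the actual optimum. I would note this caveat explicitly and restrict attention to the interior regime in the subsequent analysis, since it is the economically relevant case emphasized by the paper.
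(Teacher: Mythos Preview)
Your proof is correct and is exactly the natural approach: differentiate each component of the Homo Moralis objective, exploit linearity of $\pi$ to collapse the moral term to $t[\varphi-1]-c(\tilde w-w)$, combine with weights $(1-\kappa)$ and $\kappa$, and solve the linear FOC. The paper does not include an explicit proof of this lemma in its appendix (it treats the result as immediate), so your derivation is precisely the routine first-order computation the statement invites, including the correct second-order check and the interiority caveat.
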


Since each citizen is atomistic and ex ante identical, and $\pi$ is linear in $G$ and $b$, Lemma~\ref{lem:optimal_report} yields a unique best reply common to all citizens. As a result, the Stage~2 equilibrium is symmetric.\footnote{Since the optimal report in Lemma~\ref{lem:optimal_report} is linear in $\kappa$, aggregate compliance in a heterogeneous population depends only on the mean degree of morality. The representative-agent specification is therefore without loss of generality for aggregate fiscal capacity. \ref{app:cultural} develops the two-type case with endogenous shares.}

Compliance behavior is governed by the product 
$\kappa\,\varphi(g,\alpha,\sigma)$. The effective moral return 
$\varphi$ aggregates the value of the public good, the 
allocation share to provision, and the degree of institutional 
matching, so that moral agents respond endogenously to the 
government's expenditure mix. If 
$\kappa\,\varphi(g,\alpha,\sigma)<1$, citizens under-report. If 
$\kappa\,\varphi(g,\alpha,\sigma)=1$, they report truthfully. 
If $\kappa\,\varphi(g,\alpha,\sigma)>1$, they over-comply. The 
benchmark $\kappa=0$ reproduces the purely selfish case in 
which citizens always under-report, with the extent of 
concealment determined only by enforcement $c$ and the tax
rate $t$.

\rev{In what follows we normalize income to $w=1$ without loss of generality. We discuss this point when deriving the elite's allocation rule (Proposition~\ref{prop:Elite}).}

\begin{assumption}\label{ass:kappa}
The degree of morality satisfies $\kappa\,\alpha < 1 - c/2$.
\end{assumption}

This restriction does two things. First, it rules out 
over-compliance: whenever $\kappa\alpha \ge 1$, the universalized 
return to contribution under $g=1$ weakly dominates unity, so 
reported income weakly exceeds true income and feasible revenue 
is unbounded. Second, and strictly stronger, it ensures that the 
Laffer problem admits an interior solution: the revenue-maximizing 
tax rate $\hat{t}(g,\kappa,c,\sigma)$ defined in~\eqref{eq:that} 
satisfies $\hat{t}<1$ for every feasible $g$, so the mechanism 
operates in the regime where citizens respond to tax-rate changes 
through compliance adjustments rather than corner outcomes. All qualitative results are continuous in 
$\kappa$ and hold for any $\kappa$ bounded away from the boundary 
$\kappa\alpha = 1-c/2$.

\section{The static framework}\label{sec:static}

This section characterizes equilibrium under complete 
information about the value of public goods. We first define 
the game formally, then solve for citizens' compliance and the 
resulting fiscal capacity, and finally characterize the elite's 
optimal allocation.

\subsection{Game form and equilibrium concept}\label{sec:game-form}

\begin{definition}[Static fiscal game]\label{def:static-game}
The \emph{static fiscal game} $\Gamma(\alpha,\kappa,c,\sigma)$ is defined by:
\begin{enumerate}
\item \textbf{Players.} A ruling elite~$E$, formalized as a single decision-maker, and a unit mass of identical citizens $i\in[0,1]$.
\item \textbf{Timing.}
  \begin{enumerate}
    \item[\emph{Stage~1.}] The elite publicly chooses a fiscal policy $(t,g)\in[0,1]^2$, where $t$ is the tax rate and $g$ is the share of revenue allocated to the public good. Transfers $B$ and $b$ are determined residually by the budget constraint and the institutional parameter $\sigma$.
    \item[\emph{Stage~2.}] Each citizen $i$ observes $(t,g)$, and all citizens simultaneously choose reports $\tilde{w}_i\geq 0$.
  \end{enumerate}
\item \textbf{Payoffs.}
  \begin{itemize}
    \item Each citizen's payoff is the Homo Moralis utility $U^{(\kappa)}(\tilde{w}_i)$ defined in equation~\eqref{eq:HM}.
    \item The elite's payoff is $U_E = \alpha\,G + B$, where $G = g\,T$ and $B = \theta(\sigma)(1-g)\,T$ are determined by aggregate revenue $T = \int_0^1 t\,\tilde{w}_i\,di$.
  \end{itemize}
\item \textbf{Information.} All parameters $(\alpha,\kappa,c,\sigma)$ are common knowledge.
\end{enumerate}
\end{definition}

As noted after Lemma~\ref{lem:optimal_report}, each citizen's best reply is independent of other citizens' reports, so the Stage~2 equilibrium is symmetric and coincides with the individual optimum characterized there.
The solution concept is \emph{Subgame Perfect Equilibrium} 
(SPE): in Stage~2, each citizen plays the unique best response 
$\tilde{w}^*(\kappa;g,\alpha,t,c,\sigma)$ from 
Lemma~\ref{lem:optimal_report}. In Stage~1, the elite 
anticipates this response and chooses $(t^*,g^*)$ to maximize 
$U_E$. Figure~\ref{fig:timeline-static} 
summarizes the timing.

\begin{figure}[!htb]
\centering
\includegraphics{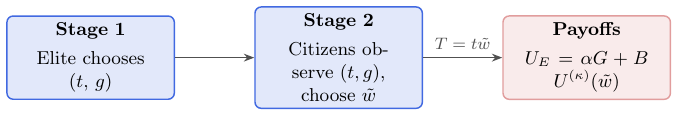}
\caption{Timing of the static fiscal game $\Gamma(\alpha,\kappa,c,\sigma)$. 
The elite moves first, choosing fiscal policy $(t,g)$. Citizens then observe policy and choose income reports $\tilde w$. Payoffs are realized after revenue $T=t\tilde w$ is generated. All parameters $(\alpha,\kappa,c,\sigma)$ are common knowledge, and the game is solved by backward induction.}
\label{fig:timeline-static}
\end{figure}

\subsection{Fiscal capacity and the Laffer curve}

Fiscal capacity is defined as the maximum tax revenue the government can raise, given citizens’ degree of morality $\kappa$ and the enforcement parameter $c$. 
Per-capita tax revenue, given a tax rate $t$ and the reporting choice $\tilde{w}(\kappa; g, \alpha, t, c, \sigma)$, is
\begin{equation}\label{eq:T}
\begin{split}
T(t, g, \kappa, c, \sigma)
&= t\,\tilde{w}(\kappa; g, \alpha, t, c, \sigma) \\
% R2 (w=1). Original coefficient: \frac{t\,w}{c}
&= \rev{\frac{t}{c}} \Big[c - t \big(1 - \kappa\, \varphi(g, \alpha, \sigma) \big) \Big].
\end{split}
\end{equation}

This expression yields a Laffer-type curve: tax revenue is hump-shaped in the tax rate, rising approximately linearly at low $t$ and eventually declining as reduced compliance offsets the mechanical rate increase.

The revenue-maximizing tax rate solves 
$\max_t T(t, g, \kappa, c, \sigma)$, yielding
\begin{equation}\label{eq:that}
\hat{t}(g,\kappa,c,\sigma) = \frac{c/2}{1 - \kappa\,\varphi(g,\alpha,\sigma)}.
\end{equation}
Substituting into \eqref{eq:T} gives the peak of the Laffer curve:
\begin{equation}\label{eq:Tpeak}
% R2 (w=1). Original: \frac{w c}{4\,[1 - \kappa\,\varphi]}
T(\hat{t}, g, \kappa, c, \sigma) = \rev{\frac{c}{4\,[1 - \kappa\,\varphi(g,\alpha,\sigma)]}}.
\end{equation}

Figure~\ref{fig:laffer} plots these curves for several values of $\kappa$ and both allocation policies $g \in \{0,1\}$.

\begin{figure}[!htb]
    \centering
    \includegraphics[width=0.75\linewidth]{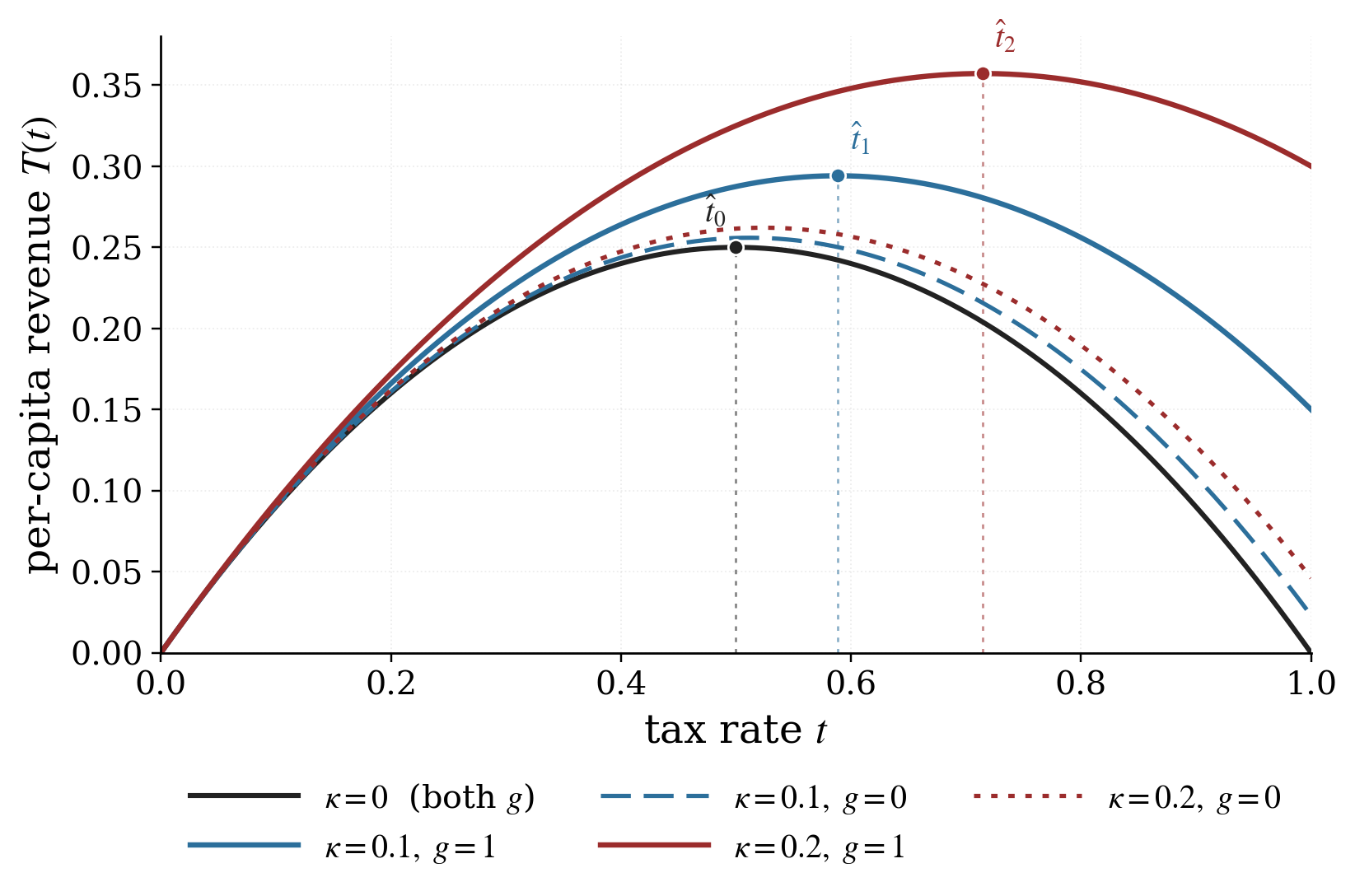}
    \caption{Moral Laffer curves. Per-capita tax revenue $T(t)$ as a 
    function of the tax rate for three morality levels 
    $\kappa\in\{0,0.1,0.2\}$ and two allocation policies 
    $g\in\{0,1\}$, evaluated at $\alpha=1.5$, $\sigma=0.3$, and 
    $w=c=1$. At $\kappa=0$ the two allocation policies yield 
    identical revenue, shown as a single solid black line. For 
    $\kappa>0$, solid lines denote full provision ($g=1$) and 
    dashed lines denote full rents ($g=0$). Filled dots mark the 
    revenue-maximizing tax rate $\hat{t}$ on each $g=1$ curve, 
    with vertical dashed lines indicating its position on the 
    tax-rate axis.}
    \label{fig:laffer}
\end{figure}

Three features of Figure~\ref{fig:laffer} bear on the mechanism 
developed below. First, at $\kappa=0$ compliance depends only on 
the tax rate and enforcement, so the allocation of revenue is 
irrelevant: the Laffer curve is fixed. Second, as $\kappa$ rises 
the whole curve shifts up and the revenue-maximizing rate 
$\hat{t}$ shifts to the right, jointly enlarging the height and 
location of fiscal capacity. Third, and most important for what 
follows, the gap between the $g=1$ (solid) and $g=0$ (dashed) 
curves widens with $\kappa$: this is the \emph{moral premium for 
provision}. In the figure, the $g=0$ curves sit close to the 
$\kappa=0$ baseline because at these parameters the moral return 
under rents ($\kappa\sigma\theta(\sigma)$) is several times 
smaller than the moral return under provision ($\kappa\alpha$). 
Because $\varphi(1,\alpha,\sigma)=\alpha$ substantially 
exceeds $\varphi(0,\alpha,\sigma)=\sigma\theta(\sigma)$ when 
institutions are weak, morality couples fiscal capacity to the 
allocation decision. This coupling is what generates the elite 
discipline characterized in Proposition~\ref{prop:Elite}: the tax 
base the elite can extract is not a primitive of the economy but a 
function of how it chooses to use revenue.

\subsection{The elite’s problem}

The elite’s utility is given by $U_E = \alpha G + B$, where $G$ denotes public good provision and $B$ rents.  
The elite chooses $(t, g)$ to maximize this utility.  
Conveniently, the problem can be re-expressed as a choice over $g \in [0,1]$, the share of tax revenue allocated to the public good.  

For each $g$, the tax rate $\hat{t}(g,\kappa,c,\sigma)$ is chosen to maximize revenue, and the resulting allocations satisfy
\begin{equation}\label{eq:G-B-constraints}
G(g) = g \cdot \hat{T}(g),
\qquad  
B(g) = \theta(\sigma)\,(1-g)\,\hat{T}(g),
\end{equation}
where
\begin{equation}
\hat{T}(g) := T(\hat{t}(g,\kappa,c,\sigma), g, \kappa, c, \sigma)
\end{equation}
denotes per-capita revenue under the Laffer-maximizing tax rate at allocation share $g$.  

Using the constraints in equation~\eqref{eq:G-B-constraints}, 
the elite's problem reduces to
$\max_{g \in [0,1]} U_E(\alpha; g)$, where
\begin{equation}\label{eq:Elite}
U_E(\alpha; g) = \hat{T}(g)\,[\alpha g + \theta(\sigma)(1-g)].
\end{equation}

Because the tax rate is optimized for each $g$, the problem 
reduces to a one-dimensional maximization over the allocation 
share $g$. The elite's objective $U_E(\alpha;g)$ is the product 
of two functions that are affine in $g$, so it is single-peaked 
and the optimum is always at a corner. The following result 
characterizes which corner obtains.\footnote{\rev{Having defined the elite's objective, we elaborate on the normalization $w=1$. For general $w$, the misreporting cost $C(d)=d^{2}/2$ depends on the deviation $d=\tilde{w}-w$ alone, so equation~\eqref{eq:T} reads $T=(t/c)\,[\,wc-t(1-\kappa\varphi)\,]$, the revenue-maximizing rate is $\hat t=wc/[2(1-\kappa\varphi)]$, and the peak of the Laffer curve~\eqref{eq:Tpeak} is $\hat T(g)=w^{2}c/[4(1-\kappa\varphi(g,\alpha,\sigma))]$. Since revenue enters the elite's objective~\eqref{eq:Elite} multiplicatively, general $w$ scales that objective by the positive constant $w^{2}$. The normalization thus does not affect the elite's optimal strategy or the thresholds characterized below. A more general treatment would require Assumption~\ref{ass:kappa} to read $\kappa\alpha<1-wc/2$.}\label{fn:wnorm}}

\begin{proposition}[Elite's allocation under alignment]\label{prop:Elite}
The elite's optimal allocation $g^*\in\{0,1\}$ satisfies:

\begin{enumerate}
    \item \textbf{Strong provision state.}  
    If $\alpha > \theta(\sigma)$, provision is always optimal, i.e. $g^* = 1$ for all feasible $\kappa$.

    \item \textbf{Weak provision state.}  
    If $\sigma\,\theta(\sigma) < \alpha \le \theta(\sigma)$, there exists a unique morality threshold
    \begin{equation}
    \bar\kappa(\alpha,\sigma)
    = \frac{\theta(\sigma)-\alpha}{\alpha\,\theta(\sigma)\,(1-\sigma)}
    \in \big(0,\,1/\alpha\big),
    \end{equation}
    such that
    \begin{equation}
g^* =
    \begin{cases}
    0, & \text{if } \kappa < \bar\kappa(\alpha,\sigma),\\[0.4em]
    1, & \text{if } \kappa \ge \bar\kappa(\alpha,\sigma).
    \end{cases}
    \end{equation}

    \item \textbf{Transfer state.}  
    If $\alpha \le \sigma\,\theta(\sigma)$, the elite always prefers rents, i.e. $g^* = 0$ for all feasible $\kappa$.
\end{enumerate}

Moreover, the threshold $\bar\kappa(\alpha,\sigma)$ satisfies:
\begin{equation}
\frac{\partial \bar\kappa}{\partial \alpha}<0,
\qquad
\frac{\partial \bar\kappa}{\partial \sigma}
\begin{cases}
<0 & \text{if } \alpha>\tfrac{1}{2},\\
=0 & \text{if } \alpha=\tfrac{1}{2},\\
>0 & \text{if } \alpha<\tfrac{1}{2}.
\end{cases}
\end{equation}
Hence higher $\alpha$ always reduces the morality required to induce provision, and stronger institutions reduce (resp. increase) the required morality when $\alpha>\tfrac{1}{2}$ (resp. $\alpha<\tfrac{1}{2}$).

%\begin{proof}
%See \ref{app:proof-elite-alignment}.
%\end{proof}
\end{proposition}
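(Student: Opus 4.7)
The plan is to exploit the fact that $U_E(\alpha;g)$ is a ratio of two affine functions of $g$, so the maximum is necessarily attained at a corner. Specifically, using the Laffer peak from \eqref{eq:that}, one can write
\begin{equation}
U_E(\alpha;g)=\frac{wc}{4}\cdot\frac{\alpha g+\theta(\sigma)(1-g)}{1-\kappa\bigl[g\alpha+(1-g)\sigma\theta(\sigma)\bigr]}.
\end{equation}
Both numerator and denominator are affine in $g$, and under Assumption~1 together with $\sigma\theta(\sigma)<1$ the denominator is positive for every $g\in[0,1]$. A Möbius transformation of $g$ with positive denominator is strictly monotone on $[0,1]$, so $g^{*}\in\{0,1\}$. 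This reduces the problem to a single scalar comparison.

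Next, I would compute the two corner payoffs, $U_E(\alpha;1)=\tfrac{wc}{4}\cdot\alpha/(1-\kappa\alpha)$ and $U_E(\alpha;0)=\tfrac{wc}{4}\cdot\theta(\sigma)/(1-\kappa\sigma\theta(\sigma))$, and write the preference for provision ($U_E(\alpha;1)\ge U_E(\alpha;0)$) as the cross-multiplied inequality $\alpha(1-\kappa\sigma\theta(\sigma))\ge\theta(\sigma)(1-\kappa\alpha)$. Rearranging gives
\begin{equation}
\theta(\sigma)-\alpha\le\kappa\,\alpha\,\theta(\sigma)(1-\sigma),
\end{equation}
from which the three regimes drop out mechanically. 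If $\alpha>\theta(\sigma)$ the left-hand side is negative and the inequality always holds, giving $g^{*}=1$. If $\sigma\theta(\sigma)<\alpha\le\theta(\sigma)$, the inequality defines the threshold $\bar\kappa(\alpha,\sigma)=(\theta(\sigma)-\alpha)/[\alpha\theta(\sigma)(1-\sigma)]$, and a direct algebraic check shows $\bar\kappa<1/\alpha$ precisely because $\alpha>\sigma\theta(\sigma)$, so crossing the threshold is feasible. If $\alpha\le\sigma\theta(\sigma)$, the same check yields $\bar\kappa\ge 1/\alpha$, and Assumption~1 rules out any feasible $\kappa$ that would trigger provision, so $g^{*}=0$.

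For the comparative statics, I would substitute $\theta(\sigma)=1/(1+\sigma)$ to obtain the cleaner form $\bar\kappa(\alpha,\sigma)=[1-\alpha(1+\sigma)]/[\alpha(1-\sigma)]$. Differentiating in $\alpha$ gives $\partial\bar\kappa/\partial\alpha=-1/[\alpha^{2}(1-\sigma)]<0$. Differentiating in $\sigma$ using the quotient rule and simplifying yields $\partial\bar\kappa/\partial\sigma=(1-2\alpha)/[\alpha(1-\sigma)^{2}]$, whose sign coincides with the sign of $\tfrac12-\alpha$, giving the three cases stated.

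The only mildly delicate step is verifying that $\bar\kappa$ is strictly interior to $(0,1/\alpha)$ in the weak provision regime: positivity follows from $\alpha<\theta(\sigma)$, and the upper bound follows from the equivalence $\bar\kappa<1/\alpha\iff\alpha>\sigma\theta(\sigma)$, which is exactly the boundary separating the weak provision state from the transfer state. This is the one place where the three mutually exclusive regions of Proposition~\ref{prop:Elite} have to be stitched together carefully; everything else is affine/Möbius algebra and a quotient-rule computation.
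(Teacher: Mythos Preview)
Your proposal is correct and follows essentially the same route as the paper: reduce to the corners via the affine-over-affine (M\"obius) structure of $U_E(\alpha;g)$, cross-multiply the two corner payoffs to obtain the linear-in-$\kappa$ inequality $\theta(\sigma)-\alpha\le\kappa\,\alpha\,\theta(\sigma)(1-\sigma)$, and then read off the three regimes and the comparative statics by direct differentiation. Your treatment of the transfer state---showing $\bar\kappa\ge 1/\alpha$ precisely when $\alpha\le\sigma\theta(\sigma)$ and invoking Assumption~1---is in fact a bit more careful than the paper's sign argument there, but the overall strategy is identical.
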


The elite's allocation determines the equilibrium tax base, 
which responds to both the spending mix and citizens' morality. 
The three regions in Proposition~\ref{prop:Elite} admit a clear 
economic interpretation. In the \emph{strong provision state} 
($\alpha>\theta(\sigma)$), the elite's valuation of the public 
good exceeds the residual rents from diversion, and provision 
is chosen unconditionally. Morality still affects the magnitude
of the tax base but is inessential for the allocation itself. 
The \emph{transfer state} ($\alpha\le\sigma\theta(\sigma)$) is 
the mirror case: the public good is valued so weakly that no 
admissible morality can make provision optimal, and the elite 
diverts throughout. The paper's mechanism operates in the 
intermediate \emph{weak provision state}, where the elite's 
private ranking favors rents but lies close enough to 
indifference that morality can flip the allocation. Once 
citizens' internalization $\kappa$ crosses $\bar\kappa(\alpha,\sigma)$, 
universalized reasoning expands the tax base under $g=1$ enough 
to overturn the elite's preference for rents, and provision 
becomes optimal.

\begin{corollary}[Taxes raised in equilibrium]\label{cor:aligned-taxbase}
At the elite's optimum, the equilibrium tax base is given by
\begin{align}
    % R2 (w=1). Original numerators: \frac{w c}{4\,(\cdots)}
    T_1(\kappa,\sigma) &= \rev{\frac{c}{4\,(1 - \kappa\,\alpha)}} && \text{if } g^* = 1, \label{eq:T1}\\
    T_0(\kappa,\sigma) &= \rev{\frac{c}{4\,(1 - \kappa\,\sigma\,\theta(\sigma))}} && \text{if } g^* = 0. \label{eq:T0}
\end{align}
Denote by $T^*(\kappa,\sigma)$ the tax base at the elite's optimal allocation $g^*$.
\end{corollary}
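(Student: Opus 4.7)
The plan is to observe that this corollary is a direct substitution: Proposition~\ref{prop:Elite} pins down $g^* \in \{0,1\}$ at every optimum, and the preceding derivation of the Laffer peak in Section~\ref{sec:static} provides a closed-form expression for per-capita revenue under $\hat t(g,\kappa,c,\sigma)$ as a function of $g$. The proof therefore reduces to evaluating the Laffer-peak formula at each of the two corner allocations.

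Concretely, I would begin by recalling
\[
\hat T(g) = T(\hat t, g, \kappa, c, \sigma) = \frac{wc}{4\,[1-\kappa\,\varphi(g,\alpha,\sigma)]},
\]
which was already obtained by substituting $\hat t(g,\kappa,c,\sigma) = (c/2)/[1-\kappa\varphi(g,\alpha,\sigma)]$ into the revenue function \eqref{eq:T}. Next I would evaluate the effective moral return $\varphi(g,\alpha,\sigma) = g\alpha + (1-g)\,\sigma\,\theta(\sigma)$ at the two corners: $\varphi(1,\alpha,\sigma)=\alpha$ and $\varphi(0,\alpha,\sigma)=\sigma\,\theta(\sigma)$. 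Plugging these into the Laffer peak delivers $T_1(\kappa,\sigma)$ and $T_0(\kappa,\sigma)$ exactly as stated.

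The only substantive step is checking that the denominators are strictly positive at the equilibrium corner. For $g^*=1$, positivity is exactly Assumption~1, $\kappa\alpha<1$. For $g^*=0$, Proposition~\ref{prop:Elite} tells us this arises either in the weak-provision state with $\kappa<\bar\kappa$, where $\sigma\theta(\sigma)\le \alpha$ and hence $\kappa\,\sigma\theta(\sigma)\le \kappa\alpha<1$, or in the transfer state, where positivity follows from the maintained requirement $\kappa\,\varphi<1$ underlying the interior characterization of Lemma~\ref{lem:optimal_report}. In either case, the expression for $T_0$ is well-defined.

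I do not expect any real obstacle: the computation is pure bookkeeping, and the mild subtlety is limited to tracking which regime guarantees that the denominator remains positive so that the Laffer-peak formula continues to describe the equilibrium tax base.
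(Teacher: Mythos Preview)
Your proposal is correct and matches the paper's approach: the corollary has no separate proof in the paper because it is pure substitution of $\varphi(1,\alpha,\sigma)=\alpha$ and $\varphi(0,\alpha,\sigma)=\sigma\theta(\sigma)$ into the Laffer-peak formula, and the same corner evaluations already appear in the proof of Proposition~\ref{prop:Elite} as $V(1)$ and $V(0)$. One small simplification for the transfer-state denominator: since $\kappa\in[0,1)$ by Definition~\ref{def:HM} and $\sigma\theta(\sigma)=\sigma/(1+\sigma)<1/2$, positivity of $1-\kappa\,\sigma\theta(\sigma)$ is immediate without appealing to Lemma~\ref{lem:optimal_report}.
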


Two features of the threshold $\bar\kappa(\alpha,\sigma)$ deserve 
emphasis. First, $\partial\bar\kappa/\partial\alpha<0$: morality 
and fundamentals are \emph{substitutes} in inducing provision. 
Societies with a low value of public spending can 
compensate through strong internalized norms, and vice versa. 
This substitutability is the basis of the moral channel identified 
in the introduction: the ability of weak states to sustain fiscal 
capacity through citizen universalization rather than coercive 
enforcement alone. Second, $\partial\bar\kappa/\partial\sigma$ 
changes sign at $\alpha=1/2$: morality and institutions are 
\emph{complements} when the public good is sufficiently valuable 
($\alpha>1/2$) and \emph{substitutes} otherwise. In the 
complementary regime, better institutions \emph{reduce} the 
morality required to induce provision: coercive capacity and 
moral capacity work in tandem to discipline the elite. In the 
substitutive regime, stronger institutions \emph{raise} the 
threshold: the two channels compete for doing the same work, and 
one partially crowds out the other.\footnote{This regime distinction is only relevant where the
threshold is interior. Under Assumption~\ref{ass:kappa}, that requires
$\alpha$ to exceed $\alpha_{\min}(c,\sigma)=\theta(\sigma)\,[1-(1-c/2)(1-\sigma)]$.
Notably, $\alpha_{\min}$ equals $1/2$ when $c=1$, for every $\sigma$,
exactly the value at which the complementary and substitutive regimes switch.}

Beyond the allocation margin, the equilibrium tax base 
$T^*(\kappa,\sigma)$ is monotone in $\kappa$: moral agents comply 
more, expanding capacity regardless of the allocation. The 
expansion is greatest under provision, since the denominator 
$1-\kappa\alpha$ at $g^*=1$ falls more steeply in 
$\kappa$ than $1-\kappa\sigma\theta(\sigma)$ does at $g^*=0$. Stronger morality thus operates on two margins 
simultaneously: it amplifies the fiscal returns to provision and 
shifts the elite's optimal allocation toward it, disciplining 
elites into providing public goods even when $\alpha<\theta(\sigma)$.

\section{Dynamics and asymmetric information}\label{sec:dynamics-aligned}

The static model takes the value of the public good as given. In practice, fiscal fundamentals shift over time, and the elite, through privileged access to administrative data, fiscal projections, and policy expertise, may be better informed than citizens about the current state. This section extends the framework to an infinite-horizon environment in which the elite privately observes the value of public spending each period, so that its allocation choice becomes a potential signal of fiscal fundamentals.

The central message is that moral universalization enables 
credible fiscal reform under asymmetric information, even when 
institutions are weak. High-value elites signal their type 
through provision, citizens infer the improvement in fundamentals 
and raise compliance, and the tax base expands within the same 
period. Lemma~\ref{lem:stage-reduction} shows that equilibrium 
behavior in this signaling environment admits a tractable 
stage-game characterization, and 
Propositions~\ref{prop:weak-high}--\ref{prop:strong-high} identify 
the morality thresholds that sustain credible reform together 
with the resulting fiscal multiplier.

\subsection{Environment and timing}

The value of the public good follows a two-state Markov chain privately observed by the elite at the start of each period.\footnote{In \citet{besley2020}, the value of public spending is publicly observed each period, so no signaling game arises. Here, the elite's private observation makes its allocation choice informative about fiscal fundamentals.} As in the static model, Assumption~\ref{ass:kappa} restricts morality to $\kappa\,\alpha^H < 1 - c/2$.

\begin{definition}[Dynamic fiscal signaling game]\label{def:dynamic-game}
The \emph{dynamic fiscal signaling game} extends $\Gamma$ to an 
infinite-horizon setting with asymmetric information and time-varying 
fundamentals. Primitives, technology, and action spaces are as in 
Definition~\ref{def:static-game}. Time is discrete, 
$n = 0, 1, 2, \ldots$. Each period proceeds in three stages.

\medskip\noindent\textbf{Stage 0.}\quad Nature draws the value of the public good 
$\alpha_n \in \{\alpha^{L},\alpha^{H}\}$ from a Markov chain 
with transition probabilities 
$q^{H}\equiv\Pr(\alpha_{n+1}=\alpha^{H}\mid \alpha_n=\alpha^{H})$ 
and 
$q^{L}\equiv\Pr(\alpha_{n+1}=\alpha^{H}\mid \alpha_n=\alpha^{L})$, 
with $q^{H}\ge q^{L}$. The chain is initialized at its stationary 
distribution, so the marginal probability is
\[
\Pr(\alpha_n=\alpha^{H}) \;=\; \rho \;\equiv\; \frac{q^{L}}{1-q^{H}+q^{L}} \;\in\; (0,1)
\]
each period. At each $n$ 
the elite privately observes 
$\alpha_n$.\footnote{The specification covers two limiting cases: 
$q^{H}=q^{L}$ yields i.i.d.\ draws on $\{\alpha^{L},\alpha^{H}\}$ 
(with $\rho=q^{L}$), the two-state analog of the i.i.d.\ 
benchmark in \citet{besley2020}, and $(q^{H},q^{L})=(1,0)$ 
yields an absorbing fixed-$\alpha$ environment, in which case 
$\rho$ is exogenously pinned down by the initial draw rather 
than by the stationary formula.}

\medskip\noindent\textbf{Stage 1.}\quad The elite, knowing $\alpha_n$, publicly 
chooses a fiscal policy $(t_n,g_n)\in[0,1]^2$.

\medskip\noindent\textbf{Stage 2.}\quad Citizens observe 
$(t_n,g_n)$, form a posterior $p_n$ over $\alpha_n$ by Bayes' rule 
(on path) and the D1 refinement (off path), and choose reports 
$\tilde{w}_n$ to maximize the {Homo Moralis} utility 
$U^{(\kappa)}$ in equation~\eqref{eq:HM}, with universalization 
applied to the stage-game report rather than to a 
history-contingent strategy.\footnote{This is a modeling 
choice. It is consistent with an overlapping-generations 
interpretation in which each cohort of citizens inherits 
$\kappa$ from its predecessors but has no stake in prior 
periods. Universalizing over strategies would admit 
history-contingent plans, including grim-trigger punishments 
of low-type elites, as equilibrium objects.  Beliefs about the 
fundamental $\alpha$ remain unrestricted: citizens may be 
arbitrarily forward-looking about future states.}

\medskip\noindent\textbf{Strategies.}\quad
Strategies are Markov. The elite's strategy is a mapping $\gamma_E\colon \{\alpha^{L},\alpha^{H}\}\to [0,1]^2$ from the current state to a fiscal policy $(t_n,g_n)$. Each citizen's strategy is a mapping $\gamma_C\colon [0,1]^2\times[0,1]\to\mathbb{R}_+$ from the observed policy and posterior $p_n$ to a report $\tilde{w}_n$.

\medskip\noindent\textbf{Payoffs.}\quad
Each citizen's per-period payoff is $U^{(\kappa)}(\tilde{w}_n)$ 
as specified in Stage~2. The elite's per-period payoff is
\begin{equation}\label{eq:UEdynamic}
U_E(\alpha_n;\, g_n \mid p_n) 
\;=\; T(g_n \mid p_n)\,
\bigl[\alpha_n\, g_n + \theta(\sigma)\,(1-g_n)\bigr],
\end{equation}
where $T(g_n \mid p_n) \equiv g_n\,T_1(p_n) + (1-g_n)\,T_0$ is the 
belief-dependent tax base, with
\begin{equation}\label{eq:T-belief}
% R2 (w=1). Original numerators: \frac{w c}{4\,(\cdots)}
T_1(p_n) \;\equiv\; \rev{\frac{c}{4\,(1-\kappa\,p_n)}},
\qquad
T_0 \;\equiv\; \rev{\frac{c}{4\,(1-\kappa\,s)}},
\qquad
s \;\equiv\; \sigma\,\theta(\sigma).
\end{equation}
The elite maximizes the discounted payoff stream 
$\sum_{n=0}^{\infty}\delta^{n}\,U_E(\alpha_n;\,g_n\mid p_n)$ 
for discount factor $\delta\in[0,1)$. 
Lemma~\ref{lem:stage-reduction} shows that the stage-game 
payoff comparison is sufficient for the elite's incentive 
constraint in every equilibrium configuration.

\medskip
All parameters $(\kappa,c,\sigma,q^{H},q^{L})$ are common 
knowledge. Figure~\ref{fig:timeline-dynamic} summarizes the 
within-period timing.
\end{definition}

\begin{figure}[!htb]
\centering
\includegraphics{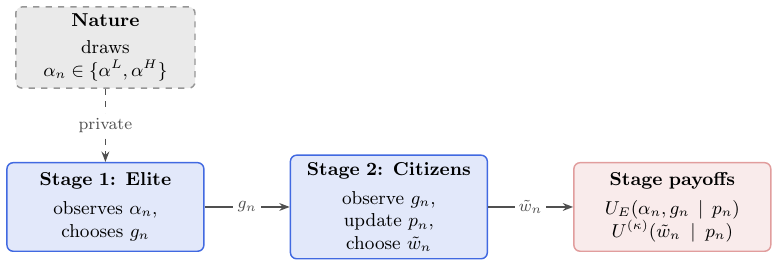}
\caption{Timing of the dynamic fiscal signaling game
(Definition~\ref{def:dynamic-game}). At the start of period $n$,
Nature draws $\alpha_n\in\{\alpha^L,\alpha^H\}$ from a Markov chain with
transition probabilities $(q^H, q^L)$. The dashed box and arrow indicate
that $\alpha_n$ is privately observed by the elite. The elite then chooses
$g_n$ publicly. Citizens observe $g_n$, update their belief $p_n$ via
Bayes' rule (on path) or the D1 refinement \citep{banks1987equilibrium}
(off path), and choose $\tilde{w}_n$.}
\label{fig:timeline-dynamic}
\end{figure}
For any fixed belief $p$, the per-period payoff $U_E$ is 
convex in $g$ (it is the product of two affine functions of 
$g$). Whether this stage-game property extends to the dynamic 
optimization depends on whether the continuation value favors 
interior allocations. Lemma~\ref{lem:stage-reduction} below 
shows it does not. Under provision, the belief-dependent tax 
base reduces to $T_1(p_n)$, generalizing the static provision 
tax base (equation~\eqref{eq:T1}) by replacing the known 
$\alpha$ with the posterior belief~$p$. Under rents it 
reduces to $T_0$, coinciding with equation~\eqref{eq:T0}.

\paragraph*{Solution concept} The equilibrium concept is Markov Perfect Equilibrium (MPE) \citep{maskin2001markov}, refined by the D1 
criterion\footnote{Informally, D1 disciplines 
beliefs about unexpected actions by asking which type of elite 
would have the strongest incentive to deviate. If the high-value 
elite benefits from an unexpected deviation for a strictly larger 
set of citizen responses than the low-value elite does, D1 
requires citizens to attribute the deviation to the high-value elite. 
In our setting, unexpected provision is therefore interpreted as 
evidence of high $\alpha$. \rev{Strictly speaking, the dynamic environment studied here is not a canonical sender--receiver signaling game, so our use of D1 is an adaptation. It remains appropriate because the within-period interaction retains signaling structure, with a privately informed elite taking an action that citizens observe before responding, and because the type set is fixed (the state $\alpha$), so the type-by-type belief comparisons on which D1 rests are well-defined. This is a milder departure than in settings where the type set is endogenous and such comparisons break down. \citet{ekmekci2023signaling} discuss the difficulty of applying standard belief-based refinements outside the canonical signaling game.}} \citep{banks1987equilibrium}.
Equilibrium requires \emph{belief consistency} (citizens' 
posteriors follow the elite's equilibrium strategy via Bayes' 
rule on path and the D1 criterion off path) and \emph{incentive 
compatibility} (given these beliefs, each elite type prefers 
its equilibrium action to imitating the other). Together, these 
conditions ensure that observed reforms are credible and 
self-enforcing.

\paragraph*{Stage-game payoff gain from provision}

The elite’s incentive to provide rather than extract is captured by the stage-game payoff difference:
\begin{equation}\label{eq:DeltaAligned}
\begin{split}
\Delta(\alpha_n \mid p_n)
&= U_E(\alpha_n; g_n{=}1 \mid p_n) - U_E(\alpha_n; g_n{=}0 \mid p_n) \\
% R2 (w=1). Original prefactor: \frac{w c}{4}
&= \rev{\frac{c}{4}}
\left[
\frac{\alpha_n}{1 - \kappa\, p_n}
-
\frac{\theta(\sigma)}{1 - \kappa\, \sigma\,\theta(\sigma)}
\right],
\end{split}
\end{equation}
where $p_n$ denotes citizens’ belief about $\alpha_n$. For any given belief $p_n$, the sign of $\Delta(\alpha_n \mid p_n)$ determines whether provision is optimal for a type with value $\alpha_n$. 
The function $\Delta(\alpha_n \mid p_n)$ is strictly increasing in $\alpha_n$, implying a \emph{single-crossing property}: 
elite best replies can therefore be represented by a cutoff rule in~$\alpha_n$.

Because citizens' beliefs carry forward across periods, the 
elite's period-$n$ choice affects future compliance as well as 
the current payoff. The next lemma shows that this continuation 
channel does not overturn the within-period incentive, so 
equilibrium behavior can be characterized by the stage-game 
payoff gain alone.

\begin{lemma}[Stage-game reduction]\label{lem:stage-reduction}
In any MPE of the game in 
Definition~\ref{def:dynamic-game}, the elite's period-$n$ 
action depends only on the current state $\alpha_n$ and the 
contemporaneous citizen belief $p_n$. The stage-game 
condition
\[
\Delta\bigl(\alpha_n \,\big|\, p_n(g_n{=}1)\bigr) \ \ge\ 0
\]
is sufficient for provision to be a best response, and exact 
whenever future beliefs depend on the period-$n$ action only 
through the realized state $\alpha_n$ (see the characterization in case 3 of Proposition~\ref{prop:weak-high}).
\end{lemma}
The proof (\ref{app:stage-reduction}) applies the one-shot 
deviation principle to each equilibrium configuration, showing 
that a deviation in $g_n$ either leaves the continuation value 
unchanged or weakly lowers it.
Combined with the convexity of $U_E$ in $g_n$, the stage-game 
reduction implies that the corner result $g^*_n\in\{0,1\}$ 
established in Section~\ref{sec:static} extends to the dynamic game.

\subsection{Equilibrium characterization}\label{sec:eq-characterization}

By Lemma~\ref{lem:stage-reduction}, equilibrium conditions 
reduce to stage-game incentive constraints. We seek a separating 
equilibrium in which $g^{*}(\alpha^{H})=1$ and 
$g^{*}(\alpha^{L})=0$. Under separation, on-path beliefs are 
correct: citizens infer $\alpha^{H}$ upon observing $g=1$ and 
$\alpha^{L}$ upon observing $g=0$, so compliance expands from 
the rents tax base $T_{0}$ to the provision tax base 
$T_{1}(\alpha^{H})$ in equation~\eqref{eq:T-belief}.

The two incentive-compatibility conditions correspond to opposite
deviations and must be evaluated at the relevant beliefs.

\emph{High-type IC.} If the high-value elite deviates from $g=1$ to
$g=0$, citizens infer $\alpha^{L}$. Because the rents tax base $T_{0}$
in equation~\eqref{eq:T-belief} does not depend on beliefs about
$\alpha$, the deviation payoff coincides with the on-path comparison
at $p=\alpha^{H}$:
\begin{equation}\label{eq:IC-H}
% R2 (w=1). Original prefactor: \frac{w c}{4}
\Delta(\alpha^{H}\mid \alpha^{H}) \;=\;
\rev{\frac{c}{4}}\!\left[\frac{\alpha^{H}}{1-\kappa\,\alpha^{H}}
- \frac{\theta(\sigma)}{1-\kappa\, s}\right] \;\ge\; 0,
\end{equation}
where $s\equiv\sigma\,\theta(\sigma)$. This yields the lower threshold
\begin{equation}\label{eq:kappa-Hmin-text}
\kappa \;\ge\; \kappa^{H}_{\min}
\;\equiv\; \frac{\theta(\sigma)-\alpha^{H}}{\theta(\sigma)\,\alpha^{H}\,(1-\sigma)}.
\end{equation}

\emph{Low-type IC.} If the low-value elite deviates from $g=0$ to 
$g=1$, citizens observe an action that, in the separating candidate, 
is played on path by the high type with probability one. Bayes' rule 
therefore delivers $p(g=1)=\alpha^{H}$, and the low type's deviation 
payoff is $T_{1}(\alpha^{H})\,\alpha^{L}$, evaluated at the 
high-type belief,\footnote{Evaluating the low-type IC at the prior 
$p=\alpha^{L}$ rather than at the Bayesian posterior $p=\alpha^{H}$ 
would yield a higher upper threshold and thus a wider separating 
interval. The Bayesian posterior is therefore the conservative 
benchmark: it imposes the strongest credibility requirement on the 
low type. The D1 refinement (\ref{app:D1}) plays a parallel 
role off path in the pool-at-rents candidate, where it likewise 
selects $p(g=1)=\alpha^{H}$ and thereby rules out the deviation.} 
and the IC for separation is
\begin{equation}\label{eq:IC-L}
% R2 (w=1). Original prefactor: \frac{w c}{4}
\Delta(\alpha^{L}\mid \alpha^{H}) \;=\;
\rev{\frac{c}{4}}\!\left[\frac{\alpha^{L}}{1-\kappa\,\alpha^{H}}
- \frac{\theta(\sigma)}{1-\kappa\, s}\right] \;\le\; 0,
\end{equation}
yielding the upper threshold
\begin{equation}\label{eq:kappa-Lmax-text}
\kappa \;\le\; \kappa^{L}_{\max}
\;\equiv\; \frac{\theta(\sigma)-\alpha^{L}}
{\theta(\sigma)\,\alpha^{H} - \alpha^{L}\, s}.
\end{equation}
Together, \eqref{eq:kappa-Hmin-text} and \eqref{eq:kappa-Lmax-text}
pin down the separating interval $[\kappa^{H}_{\min},\kappa^{L}_{\max})$
within which credible reform is incentive-compatible. The thresholds
depend only on the signaling-game primitives
$(\alpha^{L}, \alpha^{H}, \sigma, c)$ and not on the transition
probabilities $(q^{H}, q^{L})$.

When morality is low ($\kappa<\kappa^{H}_{\min}$), even high-value
elites prefer rents, and the equilibrium pools at $g=0$. When morality
is high ($\kappa\ge\kappa^{L}_{\max}$), low-value elites also find
provision attractive and mimic, so separation collapses and the
equilibrium pools at $g=1$. Proposition~\ref{prop:weak-high} formalizes
the three regions and states the additional feasibility condition for
pooling at provision.

\paragraph*{A fundamentals floor for credible reform}\label{lem:admissibility}
Not all values of $\alpha^{H}$ above $\sigma\theta(\sigma)$ 
permit credible reform. Define
\[
\underline{\alpha}(c,\sigma)
\;\equiv\;
\theta(\sigma)\!\left[\sigma + \tfrac{c}{2}(1-\sigma)\right].
\]
When $\alpha^{H}\le\underline{\alpha}$, the morality threshold 
$\kappa^{H}_{\min}$ exceeds the upper bound imposed by 
Assumption~\ref{ass:kappa}, so no admissible $\kappa$ satisfies 
the high-type IC and the unique equilibrium is pooling at rents. 
(The condition $\kappa^{H}_{\min}\le(1-c/2)/\alpha^{H}$ 
rearranges to $\alpha^{H}\ge\underline{\alpha}$.) The gap 
$\underline{\alpha}-\sigma\theta(\sigma) = 
\tfrac{c}{2}\,\theta(\sigma)(1-\sigma)$ narrows with stronger 
institutions (higher~$\sigma$) and lower compliance costs 
(lower~$c$), so the constraint binds least when the environment 
is otherwise most favorable to reform. For the remainder of 
this section, we assume $\alpha^{H}>\underline{\alpha}$, which 
is the economically interesting case: the regime in which 
reform is not materially self-enforcing but morality can 
restore it.

\subsection{Weak high state: morality-activated reform}

Consider first the case in which the value of the public good in the high state is only moderately greater than in the low state, but sufficiently above the transfer threshold that morality can activate reform.
Formally, assume
\begin{equation}
\alpha^L < \sigma\,\theta(\sigma) < \underline{\alpha}(c,\sigma) < \alpha^H \le \theta(\sigma),
\end{equation}
so that even in the high state, provision is socially desirable for citizens but not strictly profitable for the elite. 
This configuration therefore represents a situation in which fundamentals are improving yet still too weak to make reform self-enforcing on material grounds. 
We refer to it as the \emph{weak-high state} because, although the value of the public good is higher than in the low state, the elite would not provide it voluntarily without moral internalization. 
In this region, morality is essential for credible reform: the elite provides only if the degree of morality is high enough to make the signal self-enforcing.

\begin{proposition}[Morality-activated reform and equilibrium characterization]\label{prop:weak-high}
Suppose $\alpha^{L} < \sigma\,\theta(\sigma) < \underline{\alpha} < \alpha^{H} \le \theta(\sigma)$, and define the morality thresholds
\begin{equation}\label{eq:kappa-min-max}
\kappa_{\min}^{H}
= \frac{\theta(\sigma)-\alpha^{H}}{\theta(\sigma)\,\alpha^{H}\,(1-\sigma)},
\qquad
\kappa_{\max}^{L}
= \frac{\theta(\sigma)-\alpha^{L}}{\theta(\sigma)\,\alpha^{H}-\alpha^{L}\,\sigma\,\theta(\sigma)}.
\end{equation}

The equilibrium as a function of morality $\kappa$ is characterized as follows:

\begin{enumerate}
    \item \textbf{Pooling at rents $(g=0)$.}  
    When $\kappa < \kappa_{\min}^{H}$, even high–$\alpha$ elites find provision unattractive 
($\Delta(\alpha^{H}\mid p=\alpha^{H}) < 0$). Both types therefore choose $g=0$, and the equilibrium outcome is pooling at rents,
with tax base $T_{0}$ given by equation~\eqref{eq:T0}.

    \item \textbf{Separation (credible reform).}
    If $\kappa_{\min}^{H} \le \kappa < \kappa_{\max}^{L}$, there exists a separating equilibrium in which
    $g^{*}(\alpha^{H})=1$ and $g^{*}(\alpha^{L})=0$, 
    and beliefs satisfy $p(g=1)=\alpha^{H}$ and $p(g=0)=\alpha^{L}$ by Bayes' rule.
    Upon observing $g=1$, citizens infer that $\alpha$ is high and update compliance accordingly, 
    leading to an immediate expansion of the tax base from $T_{0}$ (equation~\eqref{eq:T0}) 
    to $T_{1}(\alpha^{H})$ (equation~\eqref{eq:T-belief}).

    \item \textbf{Pooling at provision $(g=1)$.}  
    When $\kappa \ge \kappa_{\max}^{L}$, the low-value elite prefers to mimic and separation collapses. 
    A pooling equilibrium with both types providing exists if the high--type incentive constraint 
    under pooling is satisfied.  
    Let the on--path belief be $\bar{\alpha} = \rho\,\alpha^{H} + (1-\rho)\,\alpha^{L}$. 
    If $\alpha^{H}\sigma \le \bar{\alpha}$, pooling at provision obtains whenever 
    $\kappa \ge \max\{\kappa^{\text{pool}},\,\kappa^{H,\min}\}$, where
    \begin{equation}\label{eq:kappa-pool-summary}
    \kappa^{\text{pool}}
    = \frac{\theta(\sigma)-\alpha^{L}}{\theta(\sigma)\,[\bar{\alpha}-\alpha^{L}\sigma]},
    \qquad
    \kappa^{H,\min}
    = \frac{\theta(\sigma)-\alpha^{H}}{\theta(\sigma)\,[\bar{\alpha}-\alpha^{H}\sigma]}.
    \end{equation}
    This threshold is tight when $q^{H}=q^{L}$. When 
    $q^{H}>q^{L}$, it is an upper bound on the morality required 
    for pool-at-provision, because continuation-value costs from 
    the shifting belief path deter deviation. 
    No pooling equilibrium exists when $\alpha^{H}\sigma > \bar{\alpha}$.
\end{enumerate}
\end{proposition}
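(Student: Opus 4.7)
My plan is to characterize the three regions by reducing each type's incentive‐compatibility condition — in the three candidate configurations (pooling at rents, separation, pooling at provision) — to a monotone inequality in $\kappa$, and then to compare these thresholds. A useful preliminary observation is that $\varphi(0,p,\sigma)=\sigma\theta(\sigma)$ does not involve the posterior, so the rent‐regime tax base $T_0=wc/[4(1-\kappa\sigma\theta(\sigma))]$ is invariant to whatever (on‐ or off‐path) belief is assigned after $g=0$; only the post‐provision tax base $T_1(p)=wc/[4(1-\kappa p)]$ reacts to beliefs. This isolates where off‐path beliefs can matter and where they are irrelevant.

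\textbf{Separating constraints and Parts 1–2.} Under separation the posteriors are $p(g{=}1)=\alpha^{H}$ and $p(g{=}0)=\alpha^{L}$. Writing the elite's payoff as (type)$\times$(tax base) in each cell, the high‐type IC $\alpha^{H}T_1(\alpha^{H})\ge\theta(\sigma)T_0$ rearranges — by clearing denominators — to $\kappa\ge\kappa_{\min}^{H}$. The low‐type IC compares equilibrium rents to mimicking provision with the high belief, $\theta(\sigma)T_0\ge\alpha^{L}T_1(\alpha^{H})$, and rearranges to $\kappa\le\kappa_{\max}^{L}$. A short cross‐multiplication shows that the weak‐high assumption $\sigma\theta(\sigma)<\alpha^{H}\le\theta(\sigma)$ implies $\kappa_{\min}^{H}<\kappa_{\max}^{L}$; the comparison reduces to the inequality $(\alpha^{H}-\alpha^{L})(\alpha^{H}-\sigma\theta(\sigma))>0$. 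For $\kappa<\kappa_{\min}^{H}$ the high‐type separating IC fails, and since any pooling‐at‐provision candidate uses $\bar\alpha<\alpha^{H}$ — hence $T_1(\bar\alpha)<T_1(\alpha^{H})$ — the analogous high‐type IC for pooling at provision fails a fortiori; only pooling at rents survives, supported by any off‐path belief with $p(g{=}1)\le\alpha^{H}$. For $\kappa\in[\kappa_{\min}^{H},\kappa_{\max}^{L}]$ both separating ICs hold, and the tax‐base jump from $T_0$ to $T_1(\alpha^{H})$ follows by substituting the inferred posteriors into Corollary~\ref{cor:aligned-taxbase}.

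\textbf{Part 3 (pooling at provision).} For $\kappa>\kappa_{\max}^{L}$ the low type strictly prefers to mimic and separation collapses. I would then fix the on‐path belief $\bar\alpha=\rho\alpha^{H}+(1-\rho)\alpha^{L}$ after $g{=}1$ and any off‐path belief after $g{=}0$ (which, as noted, does not affect $T_0$), and derive both ICs. The low‐type IC $\alpha^{L}T_1(\bar\alpha)\ge\theta(\sigma)T_0$ rearranges to $\kappa\ge\kappa^{\text{pool}}$, and the high‐type IC to $\kappa\ge\kappa^{H,\min}$; both therefore require $\kappa\ge\max\{\kappa^{\text{pool}},\kappa^{H,\min}\}$. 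The denominator $\theta(\sigma)[\bar\alpha-\alpha^{i}\sigma]$ appearing in each threshold must be positive: for $i=L$ this is automatic since $\bar\alpha\ge\alpha^{L}>\alpha^{L}\sigma$; for $i=H$ it is exactly the condition $\alpha^{H}\sigma\le\bar\alpha$ in the proposition. When $\alpha^{H}\sigma>\bar\alpha$, the RHS of the high‐type IC becomes non‐positive while $\theta(\sigma)-\alpha^{H}\ge 0$, so no $\kappa>0$ can satisfy it and no pooling‐at‐provision equilibrium exists.

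\textbf{Main obstacle.} The delicate step is Part 3: the sign of $\bar\alpha-\alpha^{H}\sigma$ controls the direction of the high‐type IC, which is why the qualifier $\alpha^{H}\sigma\le\bar\alpha$ cannot be dropped. Once this is handled, everything else reduces to clearing denominators in the two fractions $\alpha/(1-\kappa p)$ and $\theta(\sigma)/(1-\kappa\sigma\theta(\sigma))$ and invoking Corollary~\ref{cor:aligned-taxbase}; verifying non‐emptiness of the separation interval in Part 2 is the one other place where a short algebraic argument — the factorization noted above — is needed.
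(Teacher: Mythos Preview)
Your proposal is correct and follows essentially the same route as the paper's proof: both reduce each incentive constraint to a comparison of $\alpha/(1-\kappa p)$ against $\theta(\sigma)/(1-\kappa\,\sigma\theta(\sigma))$, clear denominators, and read off the thresholds, with the sign of $\bar\alpha-\alpha^{H}\sigma$ controlling the direction of the high-type pooling IC exactly as you identify. Your explicit verification that the separation interval is non-empty via the factorization $(\alpha^{H}-\alpha^{L})(\alpha^{H}-\sigma\theta(\sigma))>0$, and your a-fortiori argument ruling out pooling-at-provision when $\kappa<\kappa_{\min}^{H}$, are welcome additions that the paper leaves implicit (it states separation holds ``provided $\kappa_{\min}^{H}\le\kappa_{\max}^{L}$'' without checking this).
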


The separating equilibrium has an immediate fiscal consequence: 
credible provision triggers a within-period expansion of the 
tax base.

\begin{corollary}[Same-period fiscal expansion]\label{cor:fiscal-jump}
Within the separating region $\kappa_{\min}^{H}\le \kappa < \kappa_{\max}^{L}$, in the D1-refined separating equilibrium, observing provision $(g=1)$ leads citizens to infer $\alpha = \alpha^{H}$ and immediately increase compliance.
The tax base rises within the same period from $T_0$ in equation~\eqref{eq:T0} 
to $T_1(\alpha^{H})$ in equation~\eqref{eq:T-belief}. 
Equivalently, the same-period fiscal multiplier is
\begin{equation}\label{eq:jump-factor}
J(\kappa;\sigma)
= \frac{T_1(\alpha^{H})}{T_0}
= \frac{1-\kappa\,\sigma\,\theta(\sigma)}{1-\kappa\,\alpha^{H}} > 1.
\end{equation}
\end{corollary}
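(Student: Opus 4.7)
The plan is to reduce the corollary to two ingredients already in hand: the separating-equilibrium belief structure from Proposition~\ref{prop:weak-high}(2), and the closed-form tax bases from Corollary~\ref{cor:aligned-taxbase}. The ratio formula then follows by algebra, and strict positivity of the jump rests on the weak high state inequality.

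First, I would invoke Proposition~\ref{prop:weak-high}(2) to pin down beliefs in the separating region: for $\kappa\in[\kappa_{\min}^{H},\kappa_{\max}^{L})$, observing $g=1$ triggers the posterior $p(g=1)=\alpha^{H}$, whereas $g=0$ triggers $p(g=0)=\alpha^{L}$. By the timing described in Section~\ref{sec:dynamics-aligned}, citizens update and choose reports within the same period, so the \textit{Homo Moralis} best response is evaluated at the updated belief. Plugging $\alpha^{H}$ into the reporting rule of Lemma~\ref{lem:optimal_report} and substituting into equation~\eqref{eq:T} delivers the high-state base $T_{1}(\alpha^{H})$ of equation~\eqref{eq:T1}, while the $g=0$ belief yields the low-state base $T_{0}$ of equation~\eqref{eq:T0}. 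The same-period character of the jump is therefore an immediate consequence of the within-period sequencing, not an additional claim to prove.

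Second, I would compute the multiplier directly:
\begin{equation*}
J(\kappa;\sigma)=\frac{T_{1}(\alpha^{H})}{T_{0}}
=\frac{wc/[4(1-\kappa\,\alpha^{H})]}{wc/[4(1-\kappa\,\sigma\,\theta(\sigma))]}
=\frac{1-\kappa\,\sigma\,\theta(\sigma)}{1-\kappa\,\alpha^{H}},
\end{equation*}
matching equation~\eqref{eq:jump-factor}. Assumption~1 gives $\kappa<1/\alpha^{H}$, so $1-\kappa\,\alpha^{H}>0$; and since $\sigma\,\theta(\sigma)<\alpha^{H}<1/\kappa$ (using the weak high state ordering), $1-\kappa\,\sigma\,\theta(\sigma)>0$ as well, so the ratio is well defined and positive.

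Finally, I would argue $J>1$. The separating region requires $\kappa\ge\kappa_{\min}^{H}>0$, so $\kappa$ is strictly positive, and $J>1$ is equivalent to $\kappa\,\alpha^{H}>\kappa\,\sigma\,\theta(\sigma)$, i.e.\ $\alpha^{H}>\sigma\,\theta(\sigma)$, which is precisely the weak high state assumption. There is no real technical obstacle here: the result is almost a bookkeeping exercise, with the only subtle point being to make explicit that the denominators are positive (so the inequality is not reversed in sign) and that $\kappa>0$ in the separating region (so the multiplier is strictly, not weakly, greater than one).
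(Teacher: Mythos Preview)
Your proposal is correct and follows the same implicit route as the paper, which offers no separate proof for this corollary: it simply reads off the separating beliefs from Proposition~\ref{prop:weak-high}(2), substitutes into the closed-form tax bases of Corollary~\ref{cor:aligned-taxbase}, and takes the ratio. One minor point: you assert $\kappa_{\min}^{H}>0$, but since the weak high state allows $\alpha^{H}=\theta(\sigma)$ at the boundary, $\kappa_{\min}^{H}$ can equal zero, in which case $J=1$ at $\kappa=0$; the paper itself notes after the corollary that $J$ ``exceeds one for any $\kappa>0$,'' so the strict inequality should be understood away from that knife-edge.
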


\begin{corollary}[Uniqueness in the activation region]\label{cor:uniqueness}
Suppose $\alpha^{L} < \sigma\theta(\sigma) < \underline{\alpha} < \alpha^{H} \le \theta(\sigma)$
and $\kappa \in (\kappa^{H}_{\min},\,\kappa^{L}_{\max})$. The separating
equilibrium of Proposition~\ref{prop:weak-high} is the unique D1-refined
Markov Perfect Equilibrium of the dynamic fiscal signaling game.
\end{corollary}

Morality plays a dual role. 
First, it activates reform. 
When $\kappa < \kappa_{\min}^{H}$, even high-value elites find provision too costly, so both types choose rents and fiscal capacity remains low. 
Once $\kappa$ exceeds $\kappa_{\min}^{H}$ but remains below $\kappa_{\max}^{L}$, the high-value elite's decision to provide can be credible. 
In the D1-refined separating equilibrium, the low-value elite refrains from mimicking and the improvement in fundamentals is revealed through behavior. 
Economically, the key mechanism is that moral preferences can make elites’ actions informative about fundamentals, so that citizens update beliefs and compliance immediately when reforms are genuinely warranted. This separating logic is particularly relevant for episodes in which credible public-good provision is interpreted as evidence of improved fiscal fundamentals and triggers a rapid compliance response.

Second, morality amplifies the fiscal response. 
Within the separating region $[\kappa_{\min}^{H},\,\kappa_{\max}^{L})$, the same-period increase in the tax base, captured by $J(\kappa;\sigma)$, rises monotonically with $\kappa$ and exceeds one for any $\kappa>0$. 
This reflects the endogenous compliance channel unique to {Homo Moralis} preferences: citizens internalize the universalized return to provision, which rises with $\alpha$, so credible reforms trigger an immediate and stronger fiscal expansion than under reciprocity-based mechanisms.

From equations~\eqref{eq:kappa-min-max}, stronger institutions reduce the level of morality required for credible reform. 
For parameter values such that $\alpha^{H}>\tfrac{1}{2}$ and $\alpha^{H}+\alpha^{L}>1$, 
both morality thresholds $\kappa_{\min}^{H}$ and $\kappa_{\max}^{L}$ decrease as institutional quality improves. 
This implies that when institutions are more cohesive, elites can sustain credible provision with weaker moral support, and citizens can interpret policy actions as more reliable signals of high public-good value.

\subsection{Strong high state: fundamentals-driven reform}

We now turn to the case in which the value of public goods in the high state is sufficiently large for provision to be privately optimal even in the absence of morality. 
Formally, assume
\begin{equation}
\alpha^L < \sigma\,\theta(\sigma) < \theta(\sigma) < \alpha^H,
\end{equation}
so that high-value elites find provision attractive even when $\kappa = 0$. 
Since $\alpha^{H}>\theta(\sigma)>\underline{\alpha}$, the admissibility condition $\alpha^{H}>\underline{\alpha}$ is automatically satisfied. 
In this region, morality is no longer a precondition for credible provision but continues to amplify its fiscal effects by increasing compliance and expanding the tax base.

\begin{proposition}[Strong provision incentives and equilibrium characterization]\label{prop:strong-high}
Suppose $\alpha^{L} < \sigma\,\theta(\sigma) < \theta(\sigma) < \alpha^{H}$. 
Then the equilibrium provision pattern as a function of $\kappa$ is as follows:

\begin{enumerate}
    \item \textbf{Separation (fundamentals-driven reform).}  
    For all $\kappa$ such that
\begin{equation}
0 \le \kappa < \kappa_{\max}^{L}
= \frac{\theta(\sigma)-\alpha^{L}}{\theta(\sigma)\,\alpha^{H}
-\alpha^{L}\,\sigma\,\theta(\sigma)}.
\end{equation}
    the high-value elite strictly prefers provision while the low-value elite does not.  
    Citizens interpret $g=1$ as evidence of high value, forming beliefs $p(g=1)=\alpha^{H}$ and $p(g=0)=\alpha^{L}$.  
    Morality reinforces the credibility of this signal but is not required for reform to occur.

    \item \textbf{Pooling at provision $(g=1)$.}  
    When $\kappa \ge \kappa_{\max}^{L}$, the low-value elite also finds provision profitable and separation collapses.  
    Both types provide, and $g=1$ becomes uninformative.  Let $\bar{\alpha} = \rho\,\alpha^{H} + (1-\rho)\,\alpha^{L}$ denote citizens’ on-path belief.
    The equilibrium is pooling at provision with tax base
    \begin{equation}\label{eq:Tbar}
    % R2 (w=1). Original: \frac{w c}{4\,[1-\kappa\,\bar{\alpha}]}
    T_{1}(\bar{\alpha}) = \rev{\frac{c}{4\,[1-\kappa\,\bar{\alpha}]}}.
    \end{equation}
\end{enumerate}
\end{proposition}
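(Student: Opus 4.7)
The plan is to translate each equilibrium regime into an elementary algebraic condition on $\kappa$, using Corollary~\ref{cor:aligned-taxbase} to substitute the belief-dependent tax bases into the payoff difference $\Delta(\alpha\mid p)$ of equation~\eqref{eq:DeltaAligned}. The proof then splits cleanly along the two claims of the proposition.

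\emph{Separation for $0\le\kappa<\kappa_{\max}^{L}$.} I would fix the candidate separating beliefs $p(g=1)=\alpha^{H}$ and $p(g=0)=\alpha^{L}$ and write both incentive-compatibility conditions. For the high type, $\Delta(\alpha^{H}\mid\alpha^{H})>0$ reduces, after clearing denominators, to
$$\alpha^{H}-\theta(\sigma)\;>\;\kappa\,\theta(\sigma)\,\alpha^{H}\,(\sigma-1).$$
The left-hand side is strictly positive by the maintained hypothesis $\alpha^{H}>\theta(\sigma)$, while the right-hand side is non-positive because $\sigma<1$; hence the high-type IC holds for every feasible $\kappa$. This is precisely the algebraic content of ``fundamentals-driven reform.'' For the low-type IC, I would form $\Delta(\alpha^{L}\mid\alpha^{H})$ and cross-multiply to obtain the linear object
$$f(\kappa)\;=\;\alpha^{L}-\theta(\sigma)+\kappa\,\theta(\sigma)\,[\alpha^{H}-\alpha^{L}\sigma],$$
which is strictly negative at $\kappa=0$ (since $\alpha^{L}<\theta(\sigma)$) and has strictly positive slope (since $\alpha^{H}>\theta(\sigma)>\alpha^{L}\sigma$). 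Its unique root coincides with $\kappa_{\max}^{L}$, so $\Delta(\alpha^{L}\mid\alpha^{H})<0$ on the claimed interval, and the two IC conditions jointly certify the separating PBE with tax base $T_{1}(\alpha^{H})$ on path for $g=1$ and $T_{0}$ for $g=0$.

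\emph{Pooling at provision for $\kappa\ge\kappa_{\max}^{L}$.} The previous step shows that separation breaks exactly at $\kappa_{\max}^{L}$, since the low type then strictly prefers to mimic. I would then propose the pooling candidate in which both types choose $g=1$, with on-path belief $p(g=1)=\bar\alpha$ by Bayes' rule and off-path belief $p(g=0)=\alpha^{L}$ as the most pessimistic deterrent. The on-path tax base is $T_{1}(\bar\alpha)=wc/[4(1-\kappa\bar\alpha)]$ by Corollary~\ref{cor:aligned-taxbase}, and the binding IC is the low type's, which again reduces to a linear inequality in $\kappa$ with the same monotone structure as above, so it is satisfied on the relevant tail.

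\emph{Anticipated obstacle.} The delicate step is the pooling phase: the belief $\bar\alpha\le\alpha^{H}$ damps the compliance reward, so the low-type pooling IC is tighter than the separating one, and strictly speaking the pooling PBE requires $\kappa$ to exceed a belief-dependent threshold that may be weakly above $\kappa_{\max}^{L}$. I would handle this in the spirit of Proposition~\ref{prop:weak-high}: state the off-path belief assumption explicitly, derive the associated pooling cutoff, and argue that under the most deterrent off-path belief the pooling characterization stated in the proposition goes through. I expect this bookkeeping, rather than any genuinely hard inequality, to be where the proof demands the most care.
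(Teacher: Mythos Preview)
Your separation argument is essentially the paper's: reduce both incentive constraints to sign checks on $\Delta(\alpha\mid p)$, observe that the high-type IC holds for every feasible $\kappa$ because $\alpha^{H}>\theta(\sigma)$ makes the left side positive while $\sigma<1$ makes the right side nonpositive, and recover $\kappa_{\max}^{L}$ as the unique root of the low type's linear condition under the deviation belief $p=\alpha^{H}$. This matches the paper step for step.

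The one point to sharpen is the pooling phase. You correctly flag that the low-type IC tightens when the on-path belief drops from $\alpha^{H}$ to $\bar\alpha$, producing a cutoff $\kappa^{\text{pool}}\ge\kappa_{\max}^{L}$. But you assert that ``the binding IC is the low type's,'' and the paper's proof shows this is not always so. Under pooling the high-type constraint reads
\[
\alpha^{H}-\theta(\sigma)\;\ge\;\kappa\,\theta(\sigma)\,(\alpha^{H}\sigma-\bar\alpha),
\]
with the left side strictly positive in the strong-high case. If $\alpha^{H}\sigma\le\bar\alpha$ the right side is nonpositive and the constraint is slack; but if $\alpha^{H}\sigma>\bar\alpha$ it becomes an \emph{upper} bound $\kappa\le\kappa^{H,\max}=(\alpha^{H}-\theta(\sigma))/[\theta(\sigma)(\alpha^{H}\sigma-\bar\alpha)]$, so pooling at provision exists only for $\max\{\kappa_{\max}^{L},\kappa^{\text{pool}}\}\le\kappa\le\kappa^{H,\max}$ in that subcase. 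The proposition's statement suppresses this bookkeeping, but the paper's appendix proof carries it out explicitly. Your ``anticipated obstacle'' locates the difficulty in off-path beliefs for $g=0$; the actual wrinkle is this on-path high-type constraint, so redirect your care there.
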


%\begin{proof}
%See \ref{app:proof-strong-high}.
%\end{proof}

In the strong-high state, strong provision incentives alone ensure that reform takes place even when $\kappa = 0$, so morality is no longer a precondition for credible provision. 
Higher levels of morality, however, continue to amplify the fiscal impact of reform by raising compliance and expanding the tax base, consistent with the positive slope of the fiscal multiplier $J(\kappa;\sigma)$ in equation~\eqref{eq:jump-factor}. 

The separating region is narrower than in the weak-high state 
because the high-value elite's incentive constraint is 
automatically satisfied, leaving only the low-value elite's 
constraint binding. As a result, moral preferences play a 
reinforcing rather than activating role: they magnify the fiscal 
gains of reform and enhance compliance but are not required for 
credibility. Morality and strong provision incentives are 
therefore complementary sources of fiscal capacity, with 
morality strengthening the revenue response that strong 
incentives make possible, and pooling at provision obtains for 
a wider range of morality levels.

\subsection{Robustness and extensions}\label{sec:robustness}

We address the role of equilibrium selection in sustaining 
separation and sketch an extension in which the average 
degree of morality in the population is itself an 
evolutionary outcome.

\paragraph*{Equilibrium selection and off-path beliefs}
The uniqueness of the separating outcome 
(Corollary~\ref{cor:uniqueness}) depends on citizens' off-path 
beliefs in alternative equilibrium candidates. The binding case 
is the pool-at-rents configuration, in which a deviation to 
$g=1$ is genuinely off path: without a refinement, pessimistic 
off-path beliefs $p(g=1)=\alpha^{L}$ would sustain pool-at-rents 
even inside the activation region 
$[\kappa_{\min}^{H},\kappa_{\max}^{L})$.

Throughout we report D1-refined MPEs to discipline these 
off-path beliefs. By the single-crossing property of 
$\Delta(\alpha\mid p)$ in $\alpha$, the high-value elite 
benefits strictly more from any off-path deviation to $g=1$ 
than the low-value elite does, so D1 uniquely assigns 
$p(g=1)=\alpha^{H}$ 
\citep{banks1987equilibrium,cho1990strategic}. In the 
pool-at-rents candidate, this belief makes deviation profitable 
for the high type whenever $\kappa>\kappa_{\min}^{H}$, ruling 
out pool-at-rents within the activation region.

The refinement is applied stage by stage: D1 disciplines beliefs 
at each period's signaling subgame. In the pool-at-provision 
configuration, D1 is silent on the off-path action $g=0$ because 
the elite's deviation payoff is belief-independent. We adopt the
pessimistic convention $p(g{=}0)=\alpha^{L}$, which yields the 
most demanding sufficient condition for pool-at-provision and 
coincides with the selection a dynamic-D1 argument based on 
continuation-value differences would produce 
(\ref{app:stage-reduction}, Case~3).
Without imposing a refinement, additional Markov equilibria may be supported 
by alternative off-path beliefs. The same-period fiscal jump
applies in any equilibrium in which provision is interpreted as 
evidence of the high-value state.\footnote{The weaker Intuitive 
Criterion of \citet{cho1987signaling} yields the same selection 
here. D1 is therefore a conservative choice of refinement.}

\paragraph*{Endogenous morality}
The characterization above takes $\kappa$ as a population 
parameter. \ref{app:cultural} shows that a positive 
level of morality can be sustained endogenously once the 
within-period signaling game is embedded in a long-run cultural 
process. Suppose the population consists of 
\emph{materialists} ($\kappa = 0$) and \emph{universalizers} 
($\kappa = \kappa_H > 0$), with share $\mu_n$ of 
universalizers at the start of period $n$ and aggregate 
morality $\kappa_n \equiv \mu_n\kappa_H$. Because the 
optimal report in Lemma~\ref{lem:optimal_report} is linear 
in $\kappa$, the within-period characterization of this 
section carries over with $\kappa_n$ in place of $\kappa$. 
Each period now ends with a cultural-transmission stage in 
which agents revise their type sporadically, switching 
toward whichever type has higher realized per-period utility 
at the current $\mu_n$ \citep{sethi2001}. The resulting dynamic is a standard
discrete-time replicator \citep{sandholm2010}.

\ref{app:cultural} establishes two results. First, 
the population converges from any interior initial condition 
to a stable steady state with $\mu^* \in (0,1)$ and 
$\kappa^* = \mu^*\kappa_H > 0$, providing a 
cultural-evolutionary rationale for the assumption 
$\kappa > 0$ maintained throughout.\footnote{In the 
quadratic-cost benchmark, $\mu^* = 1/2$. The precise value 
reflects the curvature of the misreporting-cost function.
The interior, globally stable character of the steady state 
is robust across specifications.} Global convergence to an interior $\mu^*$ is a distinctive feature of universalization: in the reciprocity dynamics of \citet{besley2020}, by contrast, the cultural state tips to a boundary depending on initial conditions. Second, when the elite's allocation is endogenous, the long-run fiscal regime is determined by where the steady-state morality $\kappa^*$ falls relative to the separating region of Proposition~\ref{prop:weak-high}: pooling at rents, D1-refined separation, or pooling at provision (Corollary~\ref{cor:endogenous-credibility}). Cultural evolution pins down the moral share, and the level of $\kappa_H$ selects the fiscal regime.

\section{Discussion}\label{sec:discussion}

This section situates the model relative to reciprocity-based 
alternatives, discusses the behavioral and empirical foundations 
for moral universalization, and examines the theory's scope 
through historical illustrations and identification strategies.

\paragraph*{Relation to reciprocity-based models}
The most closely related framework is \citet{besley2020}, in 
which civic-minded citizens condition compliance on the observed 
gap between public spending and elite rents. The key distinction 
is informational. In a signaling environment where fundamentals 
are private information, reciprocity rewards the observed 
allocation $g$ while universalization rewards the state $\alpha$ 
that the allocation reveals. The within-period fiscal multiplier 
under reciprocity depends on preference parameters but is 
invariant to inferred fundamentals. Under universalization it
inherits them, so credible provision triggers an additional 
compliance response through belief updating 
(\ref{app:reciprocity}, Remark~\ref{rmk:recip-structure}). When 
the two frameworks are matched on a common environment, 
universalization generates a strictly larger multiplier for any 
reciprocity preference that does not implicitly perform the 
inference universalization performs 
(Lemma~\ref{lem:recip-dominance}). The two frameworks also 
differ in their cultural dynamics: universalization converges 
globally to an interior steady state 
(Section~\ref{sec:robustness}), whereas reciprocity exhibits 
tipping and path dependence \citep[Proposition~2]{besley2020}. 
Table~\ref{tab:comparison} summarizes.

\begin{table}[!htb]\centering
\footnotesize
\caption{Summary of predictions across behavioral foundations.}
\label{tab:comparison}
\begin{tabular}{@{} p{2.8cm} p{5.5cm} p{5.5cm} @{}}
\toprule
& \textbf{Static (complete information)} 
& \textbf{Dynamic (asymmetric information)} \\
\midrule
\textbf{Selfish~agents}\newline ($\kappa = 0$)
& Standard Laffer curve. Elite provides only if 
$\alpha > \theta(\sigma)$ (strong provision state). 
No morality.
& No signaling incentive. Pooling at rents unless 
$\alpha^H > \theta(\sigma)$. Reform occurs only when 
materially self-enforcing. Compliance is independent 
of beliefs about $\alpha$.
\\[0.6em]
\textbf{Reciprocity}\newline \citep{besley2020}
& Compliance responds to the spending mix $g$ rather than 
to the value of public goods $\alpha$. Civic culture 
disciplines the elite when the civic-minded share is 
high.
& Compliance responds to observed $g$ rather than to the 
inferred $\alpha$, so the within-period fiscal multiplier 
is invariant to fundamentals 
(Remark~\ref{rmk:recip-structure}). Cultural dynamics tip 
between high- and low-capacity steady states 
\citep[Proposition~2]{besley2020}.
\\[0.6em]
\textbf{Universalization}\newline (this paper)
& Moral return $\varphi(g,\alpha,\sigma)$ expands the 
Laffer curve. Elite provides even when 
$\alpha \le \theta(\sigma)$ if 
$\kappa > \bar\kappa(\alpha,\sigma)$ (weak provision 
state). Morality disciplines the elite.
& Morality \emph{activates} credible reform in the weak 
high state, where provision is not self-enforcing: 
separation requires $\kappa \ge \kappa^H_{\min}$. The 
within-period fiscal multiplier is strictly increasing in 
$\alpha^H$ (Remark~\ref{rmk:recip-structure}). Cultural 
dynamics converge globally to an interior steady state 
(Section~\ref{sec:robustness}).
\\
\bottomrule
\end{tabular}
\end{table}
\paragraph*{Foundations for moral universalization}
A central premise of the analysis is that citizens exhibit a 
positive degree of moral universalization ($\kappa > 0$).
The evolutionary foundation of 
\citet{alger2013homo,alger2016evolution} establishes that the 
evolutionarily stable $\kappa$ equals the index of assortativity 
in the matching process \citep{bergstrom2003}, which may be 
close to zero in large anonymous populations. These preferences 
admit an evolutionary microfoundation but can equally be studied 
as a preference class in their own right. The mechanism proposed
here takes $\kappa$ as a behavioral primitive, and the foundations 
developed below do not rely on assortativity.

Recent evidence from moral psychology provides direct support. 
In a series of experiments, \citet{levine2020logic} show that 
universalization reasoning, asking ``what if everyone did 
this?'', is a robust feature of human moral judgment, and 
\citet{kwon2023not} find that subjects spontaneously apply 
universalization when evaluating rule violations, judging 
individually beneficial actions as impermissible when 
universalizing them would produce bad outcomes. These findings 
connect to two well-established cognitive biases in psychology: 
the false consensus effect \citep{ross1977false}, whereby 
individuals project their own behavior onto others, and magical 
thinking \citep{shafir1992thinking}, whereby agents act as if 
their choices causally influence aggregate outcomes. Both 
generate behavior consistent with $\kappa > 0$ in large 
anonymous populations without requiring assortative matching, 
as does the diagnostic reasoning studied by 
\citet{quattrone1984causal} in the context of 
voting.\footnote{\citet{salonia2025foundation} provides an 
axiomatic foundation for universalization preferences grounded 
in choice theory, showing that Homo Moralis and Roemer's 
Kantian equilibrium \citep{roemer2010kantian,roemer2019cooperate} 
are both special cases of a general class of universalization 
preferences. Interior $\kappa$ interpolates between selfish
(Nash) and fully universalizing (Kantian) behavior.}

In experimental economics, the behavioral patterns are 
consistent with universalization, though not exclusively so. 
Anonymous public goods games document conditional cooperation 
\citep{fischbacher2001people,fischbacher2010social,herrmann2009measuring,chaudhuri2011sustaining}, 
a pattern compatible with several behavioral foundations, 
including selfish preferences under beliefs about others' 
behavior. Universalization provides a preference-based 
microfoundation for the same behavioral pattern that does not 
require beliefs about peer behavior: the citizen conditions on a 
counterfactual in which others act as she does, rather than on 
observed peer actions. This distinction matters in atomistic 
settings such as anonymous tax compliance, where peer reports 
are unobservable. Such beliefs-based rationalizations lose force in 
atomistic settings with private actions: 
\citet{dwenger2016extrinsic} document substantial compliance 
in the tax domain under zero enforcement and without feedback 
on others' compliance, isolating a pro-social motive. 
\citet{alger2025norms} develops a model of norms and norm 
change driven by social-Kantian preferences, in which the 
Kantian component generates unconditional contribution, 
directly aligned with the universalization reasoning studied 
here. Most directly, 
\citet{vanleeuwen2024estimating} structurally estimate 
individual-level Kantian morality and report, across 112 
subjects, a mean of $0.13$ and a median of $0.10$, with 
estimates lying mostly in the interior of $[0,1]$ 
(their Table~3). Our main-text specification treats $\kappa$ 
as homogeneous across citizens. Since the compliance response
is linear in $\kappa$, aggregate fiscal capacity in any 
heterogeneous extension depends only on the population mean 
(footnote following Lemma~\ref{lem:optimal_report}), which is 
the natural object of comparison.

Finally, the cultural-dynamics extension in 
Section~\ref{sec:robustness} shows that once a positive share of 
universalizers exists, the trait is sustained endogenously 
through socialization, with the population converging to a 
stable interior steady state.

\paragraph*{Historical illustrations}
The model's most distinctive prediction, the jump-start 
dynamic in which moral norms enable credible fiscal reform 
under weak institutions, maps onto specific historical 
configurations. Two cases illustrate the relevant mechanism.

The Nordic countries are often cited as exemplars of high 
compliance under moderate penalties. Scandinavian societies 
sustain unusually high levels of generalized trust and civic 
participation, rooted in pre-existing equality and impartial 
government that preceded the construction of modern welfare 
states \citep{rothstein2005all,uslaner2008corruption}.\footnote{A 
representative cultural marker: in a 1994 SVT interview, Swedish 
minister Mona Sahlin stated that for a Social Democrat it is 
\textit{h\"aftigt} (``cool'') to pay taxes, adding that ``tax is 
the finest expression of what politics is.'' The remark has since 
been widely cited as emblematic of Nordic tax attitudes.} The 
model rationalizes this pattern: when $\kappa$ is high and 
public spending is perceived as productive (high $\alpha$), 
compliance becomes largely self-enforcing, reducing the 
marginal role of formal enforcement. This corresponds to the 
strong provision state in Proposition~\ref{prop:Elite}, where 
morality and institutional quality are complements.

Botswana provides a more direct illustration of the 
jump-start dynamic. At independence in 1966, formal state 
institutions were minimal, as the British protectorate had
invested almost nothing in administrative capacity 
\citep{acemoglu2003african}. Yet the Tswana had a 
well-developed tradition of participatory governance through 
the \textit{kgotla}, a village assembly in which the chief 
was required to seek consensus before issuing binding decisions 
\citep{acemoglu2003african,robinson2006political}. When diamond 
revenues created a sudden and large increase in the potential 
value of public spending (a positive shock to $\alpha$), the 
post-independence elite channeled these revenues into public 
goods and infrastructure rather than private extraction, a 
pattern that persisted for decades and stands in sharp contrast 
to other resource-rich African states. In the language of the 
model, Botswana at independence approximated the weak-high
state: $\alpha$ had risen sharply, $\sigma$ remained low, but 
pre-existing civic norms ($\kappa > 0$) enabled the elite to 
credibly commit to provision, triggering sustained fiscal 
expansion. This interpretation is consistent with the emphasis 
on pre-colonial institutional persistence in 
\citet{acemoglu2003african} and 
\citet{robinson2006political}\footnote{\citet{michalopoulos2013pre,michalopoulos2015ethnic} 
document the persistence of pre-colonial institutional traits 
across African ethnicities without separating formal 
institutional channels from cultural-norm channels. Because 
formal state capacity was minimal at Botswana's independence, 
the persistent trait operative in 1966 maps to $\kappa$ 
rather than $\sigma$ in the model.}, but adds a behavioral 
channel through which civic norms translate into fiscal 
outcomes via the compliance response.\footnote{\citet{papaioannou2020comment} plots state 
capacity against national versus ethnic identification across 
African countries (Figure~2 therein). Botswana appears as a 
positive outlier, with high state capacity and high national 
identification, consistent with the civic norm channel 
emphasized here. Papaioannou calls for extending Besley's 
framework to model fragility traps in low-income countries.
Our model provides one such extension, and 
Section~\ref{sec:robustness} (Endogenous morality) sketches how 
pre-existing civic norms persist endogenously through cultural 
evolution.}
\paragraph*{Identification and the correlation between morality and institutions}
A natural concern is that moral universalization and 
institutional quality may be positively correlated across 
countries: wherever $\kappa$ is high, $\sigma$ may also be 
high, making it difficult to disentangle the moral channel 
from institutional enforcement.
Three features of the model help address this concern.
First, the theory's most distinctive prediction is not about the \emph{level} of compliance but about the \emph{dynamic response} to a shift in fundamentals.
The jump-start mechanism predicts that, conditional on weak institutions, societies with stronger moral norms should exhibit larger fiscal responses to improvements in the perceived value of public goods.
This conditional prediction generates testable variation even when $\kappa$ and $\sigma$ are correlated in the cross section: identification comes from the interaction between moral norms and shocks to $\alpha$, not from either margin alone.
Equivalently, the moral channel is identifiable in the intermediate fundamentals regime: in the strong provision state ($\alpha > \theta(\sigma)$) the elite provides regardless of $\kappa$, in the transfer state ($\alpha \le \sigma\theta(\sigma)$) no admissible morality can rescue provision, and only in the weak provision state ($\sigma\theta(\sigma) < \alpha \le \theta(\sigma)$) does $\kappa$ flip the allocation. Empirical tests of the jump-start mechanism should target this intermediate regime, where reform is neither materially self-enforcing nor materially infeasible.
Second, the comparative statics in Proposition~\ref{prop:Elite} show that when $\alpha > 1/2$, morality and institutions are complements ($\partial\bar\kappa/\partial\sigma < 0$): stronger institutions \emph{reduce} the morality threshold required for provision.
A positive $\kappa$--$\sigma$ correlation is therefore consistent with the model and does not undermine the moral channel. Rather, it reflects the complementarity the model predicts.
Third, the model implies a sharp exclusion restriction at the compliance margin. In the selfish limit $\kappa = 0$, the moral return $\kappa\,\varphi(g,\alpha,\sigma)$ vanishes. With atomistic citizens in a continuum population, the material transfer $\sigma\,\theta(\sigma)(1-g)T$ is taken as given and drops out of the FOC, so conditional on the observed policy $(t,g)$, the citizen's optimal report in Lemma~\ref{lem:optimal_report} is independent of $\sigma$. Any cross-country compliance response to $\sigma$, holding $t$, $g$, and $\alpha$ fixed, therefore operates through moral internalization, providing an empirical handle that separates the two channels even when their levels are correlated.
What the model adds is that $\kappa$ has independent explanatory power \emph{conditional} on $\sigma$, a prediction amenable to empirical test using cross-country data on tax morale, enforcement capacity, and fiscal responses to reform episodes.

\section{Conclusion}\label{sec:conclusion}

Why do some states sustain high fiscal capacity under weak enforcement while others remain trapped in low-compliance equilibria?
This paper offers an answer grounded in moral internalization: when citizens evaluate their compliance decisions under universalized payoffs, the resulting fiscal capacity depends not only on enforcement but also on the degree of moral universalization $\kappa$ and the effective moral return $\varphi(g,\alpha,\sigma)$, which responds to both the value of public goods $\alpha$ and institutional quality $\sigma$.
The static model shows that morality expands the feasible tax base and can discipline elites into provision even when material incentives favor rent extraction.
The dynamic model identifies a jump-start mechanism through which morality enables credible reform: when the value of public goods increases, sufficiently moral citizens generate a compliance response that makes provision a credible signal, triggering an immediate fiscal expansion.
The framework is deliberately parsimonious: a single behavioral parameter $\kappa$ interacts with standard fiscal primitives to generate these results.
\paragraph*{Empirical and policy implications}
The theory predicts that tax morale and compliance should be 
more responsive to improvements in the perceived value of public 
spending than to changes in deterrence alone, consistent with 
experimental evidence that compliance responds to the perceived 
allocation of tax revenues 
\citep{alm1993fiscal,montenbruck2023fiscal}. The jump-start 
dynamic further predicts that fiscal capacity should respond 
discontinuously to improvements in the value of public spending 
in societies with sufficiently strong moral internalization, a 
pattern testable using reform episodes in countries with varying 
levels of tax morale. For policymakers, the complementarity between moral 
internalization and institutional quality suggests that 
investments in fiscal legitimacy, such as transparent 
reporting of public spending outcomes, can raise compliance 
more cost-effectively than enforcement alone. These insights open the door for empirical 
tests linking survey-based measures of moral universalism, 
perceptions of government effectiveness, and observed tax 
capacity.

\paragraph*{Directions for future research} \rev{We now consider how income heterogeneity might be incorporated into the model. For general income $w$, the misreporting cost is $C(d)=\tfrac{1}{2}d^2$ in the deviation $d=\tilde{w}-w$, so it penalizes the amount concealed rather than income itself (see footnote~\ref{fn:wnorm}).}
As a result, the optimal deviation from truthful reporting in Lemma~\ref{lem:optimal_report} is independent of $w$, and wealth heterogeneity does not generate cross-sectional variation in compliance behavior.
This is a tractable benchmark but not the only natural specification.
If the misreporting cost were instead proportional to income, for example, $C(d,w) = \tfrac{1}{2}(d/w)^2$, penalizing the concealment \emph{rate} rather than the concealment \emph{level}, then wealthier citizens would face a lower marginal cost of concealing a given amount, but would also internalize a larger fiscal externality under universalization, since their hypothetical aggregate impact on the tax base scales with income.
This tension could generate non-trivial cross-sectional predictions: under universalization, high-income citizens may comply \emph{more} than selfish models predict, precisely because their universalized counterfactual involves a larger revenue loss.
Whether this ``pivotality'' channel reinforces or attenuates the aggregate capacity effects identified here is an open question that merits further analysis, particularly in light of empirical evidence that compliance behavior and the forms of noncompliance vary systematically across income groups \citep[see, e.g.,][]{slemrod2007cheating}.

Taken together, the analysis suggests that moral 
universalization deserves consideration alongside enforcement 
and reciprocity as a foundation for fiscal capacity, and that 
its distinctive predictions are amenable to empirical test.

%TC:ignore
\appendix
\renewcommand{\theproposition}{\Alph{section}.\arabic{proposition}}
\renewcommand{\thetheorem}{\Alph{section}.\arabic{theorem}}
\renewcommand{\thelemma}{\Alph{section}.\arabic{lemma}}
\renewcommand{\thedefinition}{\Alph{section}.\arabic{definition}}
\section{Proofs}
\footnotesize
\subsection{Proof of Proposition~\ref{prop:Elite}}\label{app:proof-elite-alignment}
\begin{proof}
Let $\theta \equiv \theta(\sigma)$ and $s \equiv \sigma\,\theta(\sigma)$. 
Under Laffer-maximizing taxation, per-capita revenue is
\begin{equation}
T(\hat{t}, g, \kappa, c, \sigma)
% R2 (w=1). Original: \frac{w c}{4(\cdots)}
= \rev{\frac{c}{4\left(1 - \kappa\,\varphi(g,\alpha,\sigma)\right)}},
\qquad 
\varphi(g,\alpha,\sigma) = g\,\alpha + (1-g)\,s.
\end{equation}
The elite’s objective can then be written as
\begin{equation}
V(g)
= T(\hat{t}, g, \kappa, c, \sigma)\,[\alpha\, g + \theta(1-g)].
\end{equation}

Since $V(g)$ is a ratio of affine functions in $g$, it is single-peaked, so it suffices to compare the two corners, $g \in \{0,1\}$:
\begin{align}
% R2 (w=1). Original numerators: \frac{w c}{4\,(\cdots)}
V(1) &= \rev{\frac{c}{4\,(1-\kappa\,\alpha)}}\,\alpha,\\
V(0) &= \rev{\frac{c}{4\,(1-\kappa\,s)}}\,\theta.
\end{align}
Because $\kappa < 1/\alpha$ and $s < 1$, both denominators are positive. 
Hence
\begin{equation}\label{eq:key_ineq-aligned}
V(1) \ge V(0)
\;\Longleftrightarrow\;
\frac{\alpha}{1-\kappa\,\alpha}
\;\ge\;
\frac{\theta}{1-\kappa\,s}
\;\Longleftrightarrow\;
\alpha - \theta 
\;\ge\;
\kappa\,(\alpha s - \theta\,\alpha)
= -\,\kappa\,\alpha\,\theta\,(1-\sigma).
\end{equation}

\begin{enumerate}
    \item If $\alpha > \theta$, the left-hand side of~\eqref{eq:key_ineq-aligned} is positive and the right-hand side is nonpositive for all $\kappa$, 
    so $V(1) > V(0)$ and $g^* = 1$. 
    This corresponds to the \emph{strong provision state}.

    \item If $\alpha \le s = \sigma\,\theta(\sigma)$, then $\alpha - \theta < 0$ while the right-hand side is nonnegative, 
    so $V(1) < V(0)$ and $g^* = 0$. 
    This corresponds to the \emph{transfer state}.

    \item If $s < \alpha \le \theta$, both sides of~\eqref{eq:key_ineq-aligned} depend on $\kappa$. 
    Solving for indifference yields
    \begin{equation}
    \bar\kappa(\alpha,\sigma)
    = \frac{\theta - \alpha}{\alpha\,\theta\,(1-\sigma)}.
    \end{equation}
    For $\kappa < \bar\kappa(\alpha,\sigma)$, the inequality is reversed and $g^* = 0$.
    For $\kappa > \bar\kappa(\alpha,\sigma)$, we have $g^* = 1$. 
    This defines the \emph{weak provision state}.
\end{enumerate}

Finally, differentiating $\bar\kappa(\alpha,\sigma)$ gives
\begin{align}
\frac{\partial \bar\kappa}{\partial \alpha}
&= -\frac{\theta}{\alpha^2\,\theta\,(1-\sigma)} = -\frac{1}{\alpha^2\,(1-\sigma)} < 0,\\[0.5em]
\frac{\partial \bar\kappa}{\partial \sigma}
&= \frac{1/\alpha - 2}{(1-\sigma)^2}.
\end{align}
Thus $\partial \bar\kappa/\partial \sigma < 0$ if $\alpha > 1/2$, 
$\partial \bar\kappa/\partial \sigma = 0$ if $\alpha = 1/2$, 
and $\partial \bar\kappa/\partial \sigma > 0$ if $\alpha < 1/2$. 
Higher $\alpha$ always reduces the morality required for provision, 
and stronger institutions do so when the value of public spending is sufficiently high ($\alpha > 1/2$).

\end{proof}

\subsection{Off-path beliefs and the D1 refinement}\label{app:D1}

In the within-period signaling game, citizens form a posterior 
$p\in[\alpha^{L},\alpha^{H}]$ about the elite's type after observing 
$g_n\in\{0,1\}$. On path, $p$ is pinned down by Bayes' rule. For 
actions played with zero probability in equilibrium, we restrict 
beliefs using the D1 criterion \citep{banks1987equilibrium}.

\paragraph*{D1 criterion} For an off-path signal $g$, define the 
deviation-profitability sets
\begin{align*}
D^{0}(\alpha;g) &\equiv \{p\in[\alpha^{L},\alpha^{H}]:\ 
U_E(\alpha;g\mid p)\ \ge\ U_E^{*}(\alpha)\},\\
D(\alpha;g) &\equiv \{p\in[\alpha^{L},\alpha^{H}]:\ 
U_E(\alpha;g\mid p)\ >\ U_E^{*}(\alpha)\},
\end{align*}
where $U_E^{*}(\alpha)$ is the equilibrium payoff of type $\alpha$. 
D1 requires off-path beliefs to place zero weight on any type 
$\alpha$ for which there exists another type $\alpha'$ with 
$D^{0}(\alpha;g)\subsetneq D(\alpha';g)$.

\paragraph*{Verification: pool-at-rents (off-path $g=1$)}
If $\gamma_E(\alpha^{L})=\gamma_E(\alpha^{H})=0$, then 
$U_E^{*}(\alpha)=T_0\,\theta(\sigma)$, which is $\alpha$-independent. 
The deviation payoff at belief $p$ is 
$U_E(\alpha;1\mid p)=\alpha\,T_1(p)$ with 
$T_1(p)=\rev{c/[4(1-\kappa p)]}$, strictly increasing in both $\alpha$
and $p$. The threshold
\[
p^{*}(\alpha)\ \equiv\ 
\inf\{p\in[\alpha^{L},\alpha^{H}]:\ \alpha\,T_1(p)\ge T_0\,\theta(\sigma)\}
\]
is therefore strictly decreasing in $\alpha$, so 
$p^{*}(\alpha^{H})<p^{*}(\alpha^{L})$ whenever the thresholds lie 
in the interior of $[\alpha^{L},\alpha^{H}]$. Hence
\[
D^{0}(\alpha^{L};1)
\;=\;[p^{*}(\alpha^{L}),\alpha^{H}]
\;\subsetneq\;(p^{*}(\alpha^{H}),\alpha^{H}]
\;=\;D(\alpha^{H};1),
\]
with the containment trivial when $D^{0}(\alpha^{L};1)=\emptyset$. 
D1 therefore deletes $\alpha^{L}$ from the support of off-path 
beliefs at $g=1$, yielding $p(g=1)=\alpha^{H}$.

\paragraph*{Pool-at-provision (off-path $g=0$)}
If $\gamma_E\equiv 1$, the deviation payoff $T_0\,\theta(\sigma)$ 
is belief-independent. In any pool-at-provision equilibrium, 
neither type strictly prefers to deviate, so 
$D(\alpha^{L};0)=D(\alpha^{H};0)=\emptyset$ and the strict 
set-inclusion condition $D^{0}(\alpha;0)\subsetneq D(\alpha';0)$ 
cannot hold. Stage-game D1 is therefore silent. We adopt the 
convention $p(g=0)=\alpha^{L}$ throughout, which coincides with 
the dynamic-D1 selection based on continuation-value differences 
(see the proof of Lemma~\ref{lem:stage-reduction}, Case~3) and 
tightens the sufficient condition for pool-at-provision.

\paragraph*{Separating candidate}
In the separating candidate 
$(\gamma_E(\alpha^{H})=1,\ \gamma_E(\alpha^{L})=0)$, both actions 
are played with positive probability. Bayes' rule delivers 
$p(g=1)=\alpha^{H}$ and $p(g=0)=\alpha^{L}$ directly, and the D1 
refinement plays no role: the low-type IC 
\eqref{eq:lowtype-IC-D1} is evaluated at the Bayesian posterior 
$p=\alpha^{H}$.

\medskip
\noindent\emph{Remark (sensitivity of the low-type IC to $p$).} 
For completeness, we record how the low-type IC would vary with 
an arbitrary posterior $p\in[\alpha^{L},\alpha^{H}]$ assigned 
after $g=1$ (whenever $1-\kappa p>0$):
\begin{equation}\label{eq:lowtype-IC-generalp-global}
\Delta(\alpha^L\mid p)<0
\ \Longleftrightarrow\
\kappa \;<\; \kappa_{\max}^{L}(p)
\ \equiv\
\frac{\theta - \alpha^L}{\theta\,p - \alpha^L\,s}.
\end{equation}
The numerator $\theta-\alpha^{L}>0$ is independent of $p$. The
denominator $\theta p - \alpha^{L} s$ is strictly increasing in 
$p$. Hence $\kappa_{\max}^{L}(p)$ is strictly decreasing in $p$, 
and the Bayesian posterior $p=\alpha^{H}$ yields the most 
demanding (smallest) upper bound on $\kappa$. Any lower posterior 
would widen the separating interval.

\subsection{Proof of Lemma~\ref{lem:stage-reduction}}\label{app:stage-reduction}
\begin{proof}
For any fixed belief $p_n$, the per-period payoff 
$U_E(\alpha_n;g_n\mid p_n)$ is convex in $g_n$, so any 
per-period best response satisfies $g_n\in\{0,1\}$. The
optimal tax rate is the Laffer rate and is suppressed from 
the notation. Without loss of generality we therefore 
restrict attention to elite strategies 
$\gamma_E:\{\alpha^{L},\alpha^{H}\}\to\{0,1\}$. Fix such an 
MPE, with on-path decoding $p_n(g)$ and off-path beliefs 
to be specified case by case. Let $\delta\in[0,1)$ be the elite's discount factor. 
Citizens' posterior on the current state is summarized by the 
probability $\pi_n\equiv\Pr(\alpha_n=\alpha^{H}\mid\text{history})$.
The belief entering the tax base is the induced expected value 
$p_n=\pi_n\alpha^{H}+(1-\pi_n)\alpha^{L}$, so $T_1(p_n)$ is 
strictly increasing in $\pi_n$. We verify the one-shot 
deviation principle by case, showing that the elite's 
period-$n$ choice affects the continuation value in a way 
that never overturns a strict stage-game IC.

\medskip
\noindent\textbf{Case 1: Separation.} Let 
$\gamma_E(\alpha^{H})=1$ and $\gamma_E(\alpha^{L})=0$. On path, 
$g_n$ fully reveals $\alpha_n$, so $p_n=\alpha_n$ by Bayes' rule. 
Both actions are played with positive probability, so the D1 
refinement plays no role in this candidate.

In every subsequent period $n+k$, $k\ge 1$, the elite plays 
$g_{n+k}=\gamma_E(\alpha_{n+k})$, which is again fully 
revealing. Bayesian updating with a fully revealing signal is 
prior-independent: citizens' period-$(n+k)$ posterior is 
pinned at $p_{n+k}=\alpha_{n+k}$ regardless of any prior 
inherited from period $n$. Any belief component carried 
forward from the period-$n$ history, correctly inferred or 
not, is overwritten in one step. Moreover, the rents tax 
base $T_0=\rev{c/[4(1-\kappa s)]}$ is $\alpha$-independent, so the
elite's period-$(n+k)$ payoff is 
$T_1(\alpha_{n+k})\alpha_{n+k}$ if $\gamma_E(\alpha_{n+k})=1$ 
and $T_0\,\theta(\sigma)$ if $\gamma_E(\alpha_{n+k})=0$, 
depending only on the realized state.

The continuation value from $n+1$ onward therefore depends on 
the true $\alpha_n$ (through the Markov transition) and is 
independent of $g_n$. The period-$n$ IC reduces to the 
stage-game comparison $\Delta(\alpha_n\mid p_n)\gtreqless 0$, 
with $p_n=\alpha^{H}$ following $g_n=1$ and $p_n=\alpha^{L}$ 
following $g_n=0$, both by Bayes' rule. The resulting sufficient 
conditions for separation depend only on the primitives 
$(\alpha^{L},\alpha^{H},\sigma,c,w)$. Neither the transition 
probabilities $(q^{H},q^{L})$ nor any realized state enters, 
because fully revealing signals re-anchor beliefs each period.

\medskip
\noindent\textbf{Case 2: Pooling at rents.} Let 
$\gamma_E\equiv 0$. On path, $g_n$ is uninformative, so 
$\pi_n$ equals the prior at the start of period $n$.
Iterating across periods with uninformative pool-at-rents 
signals yields the stationary marginal $\pi_n=\rho$. Off 
path, $p_n(g_n=1)=\alpha^{H}$ by D1: the deviation payoff 
$T_1(p)\alpha$ is strictly increasing in both $p$ and 
$\alpha$, so the set of citizen responses for which deviation 
is profitable is strictly larger for the high type.

From period $n+1$ onward, the elite plays $g_{n+k}=0$, and 
the tax base is $T_0$ in every period. Both $T_0$ and 
$\theta(\sigma)$ are independent of $\alpha$ and of $p$, so 
the elite's per-period rents payoff $T_0\,\theta(\sigma)$ is 
invariant to citizens' belief about any previous state. The 
continuation value is independent of $g_n$, and the 
period-$n$ IC reduces to $\Delta(\alpha_n\mid p_n)\le 0$ at 
the relevant $p_n$.

\medskip
\noindent\textbf{Case 3: Pooling at provision.} Let 
$\gamma_E\equiv 1$. On path, $g_n$ is uninformative and 
$\pi_n=\rho$. For the off-path action $g_n=0$, observe that 
the citizen's universalized moral term 
$\kappa\,\varphi(0,\alpha,\sigma)=\kappa\,\sigma\,\theta(\sigma)$ 
is $\alpha$-independent, so the optimal report, and hence 
the tax base $T_0$, does not respond to citizens' belief $p$. 
The elite's deviation payoff $T_0\,\theta(\sigma)$ is 
therefore belief-independent, and stage-game D1 does not 
refine off-path beliefs on $g_n=0$.\footnote{Formally, in any 
pool-at-provision equilibrium neither type strictly prefers to 
deviate, so the strict-profitability sets $D(\alpha^{L})$ and 
$D(\alpha^{H})$ are empty and the D1 strict set-inclusion 
condition $D^{0}(\alpha)\subsetneq D(\alpha')$ cannot hold. A 
dynamic-D1 argument based on continuation-value differences 
would select $p_n(g_n=0)=\alpha^{L}$ under mild regularity 
conditions on $\delta$, but we avoid importing that 
apparatus.} We adopt the pessimistic convention 
$p_n(g_n=0)=\alpha^{L}$ (equivalently, $\pi_n=0$), which 
yields the most demanding sufficient condition for 
pool-at-provision and coincides with the D1-like intuition 
that a deviation to rents is attributed to the type with the 
greater contemporaneous gain.

Under this convention, a period-$n$ deviation fixes 
$\pi_n=0$, so citizens' prior on $\alpha_{n+1}$ is 
$\pi_{n+1}=q^{L}<\rho$. In each subsequent period, the elite 
plays $g_{n+k}=1$, which is uninformative, so
\[
\pi_{n+k+1} \;=\; \pi_{n+k}\,q^{H} + (1-\pi_{n+k})\,q^{L}, 
\qquad \pi_{n+1}=q^{L}.
\]
The affine map $\pi\mapsto q^{L}+\pi(q^{H}-q^{L})$ is a 
contraction toward the stationary fixed point 
$\rho=q^{L}/(1-q^{H}+q^{L})$. Initialized below $\rho$, the 
sequence converges monotonically from below. Hence 
$\pi_{n+k}\le\rho$ in every $k\ge 1$, with strict inequality 
when $q^{H}>q^{L}$.

Since $T_1$ is strictly increasing in $\pi$, post-deviation 
per-period compliance is weakly below its on-path level. 
Taking expectations over $\alpha_{n+k}$ conditional on the 
true current state, the per-period elite payoff after a 
deviation is weakly lower in every future period, for either 
realization of $\alpha_n$. The continuation value satisfies
\[
V_E^{\text{dev}}(\alpha) \;\le\; V_E^{\text{eq}}(\alpha),
\qquad \alpha\in\{\alpha^{L},\alpha^{H}\},
\]
with strict inequality when $q^{H}>q^{L}$ and $\delta>0$. 
Hence the stage-game IC $\Delta(\alpha\mid p)\ge 0$ 
evaluated at the on-path belief is \emph{sufficient} for 
pool-at-provision incentive compatibility. It is also 
necessary if and only if $q^{H}=q^{L}$ (the i.i.d.\ case), 
in which case belief paths are invariant to deviations and 
the continuation cost vanishes.

\medskip
\noindent Combining Cases 1--3, the sign of 
$\Delta(\alpha_n\mid p_n(g_n=1))$ determines whether provision is 
optimal for type $\alpha_n$. This stage-game characterization is 
exact in the separating and pool-at-rents configurations and 
sufficient in pool-at-provision (exact iff $q^{H}=q^{L}$). The 
separating region in 
Propositions~\ref{prop:weak-high}--\ref{prop:strong-high} is 
therefore exact. The pool-at-provision region is conservative, with 
the true region weakly larger when $q^{H}>q^{L}$.
\end{proof}

\subsection{Proof of Proposition~\ref{prop:weak-high}}\label{app:proof-weak-high}
\begin{proof}
Let $\theta \equiv \theta(\sigma)$ and $s \equiv \sigma\,\theta(\sigma)$. In the weak-high state we assume
\begin{equation}
\alpha^L < s < \underline{\alpha}(c,\sigma) < \alpha^H \le \theta,
\end{equation}
where $\underline{\alpha}(c,\sigma)=\theta[\sigma+\tfrac{c}{2}(1-\sigma)]$ is defined in Section~\ref{sec:eq-characterization}.
Under Laffer-maximizing taxation, the per-period gain from provision (relative to rents), evaluated at belief $p$, is
\begin{equation}\label{eq:Delta-app}
% R2 (w=1). Original prefactor: \frac{w c}{4}
\Delta(\alpha \mid p)
= \rev{\frac{c}{4}}\Bigg[
\frac{\alpha}{1-\kappa\,p}
-
\frac{\theta}{1-\kappa\,s}
\Bigg].
\end{equation}
Since $\rev{\tfrac{c}{4}}>0$, the sign of $\Delta(\alpha\mid p)$ is the sign of the bracket.

\paragraph{1. Separation: existence and thresholds}
Under separation, beliefs are correct on path: $p=\alpha$. Plugging $p=\alpha$ into \eqref{eq:Delta-app} and rearranging,
\begin{equation}\label{eq:sep-ineq}
\Delta(\alpha;\kappa)\ge 0
\;\Longleftrightarrow\;
\frac{\alpha}{1-\kappa\,\alpha}\;\ge\;\frac{\theta}{1-\kappa\,s}
\;\Longleftrightarrow\;
\alpha - \theta \;\ge\; \kappa\,(\alpha s - \theta \alpha)
= -\,\kappa\,\alpha\,\theta\,(1-\sigma).
\end{equation}
For the high type $\alpha=\alpha^H\le \theta$,
\begin{equation}\label{eq:kappa-minH-deriv}
\Delta(\alpha^H;\kappa)\ge 0
\ \Longleftrightarrow\
\kappa \;\ge\; \kappa_{\min}^H
\;\equiv\;
\frac{\theta - \alpha^H}{\theta\,\alpha^H\,(1-\sigma)} .
\end{equation}
For the low type under the separating candidate, $g=1$ is played on path by the high type, so citizens observing $g=1$ form the Bayesian posterior $p=\alpha^H$. The low-type IC for separation is therefore
\begin{equation}\label{eq:lowtype-IC-D1}
\Delta(\alpha^L\mid p=\alpha^H)<0
\ \Longleftrightarrow\
\frac{\alpha^L}{1-\kappa\,\alpha^H}
<
\frac{\theta}{1-\kappa\,s}
\ \Longleftrightarrow\
\kappa \;<\; \kappa_{\max}^L
\ \equiv\
\frac{\theta - \alpha^L}{\theta\,\alpha^H - \alpha^L\,s}.
\end{equation}
In the weak-high state $s<\alpha^H\le\theta$ implies $\theta\,\alpha^H - \alpha^L s>0$, so $\kappa_{\max}^L$ is well-defined and strictly positive. To verify that the separating region is nonempty, note that $\kappa_{\min}^H$ and $\kappa_{\max}^L$ can both be written as $(\theta-\alpha)/(\theta\alpha^H - \alpha s)$ evaluated at $\alpha=\alpha^H$ and $\alpha=\alpha^L$ respectively. This ratio is strictly decreasing in $\alpha$ (its derivative in $\alpha$ equals $\theta(s-\alpha^H)/(\theta\alpha^H - \alpha s)^2 < 0$ since $s<\alpha^H$), so $\kappa_{\min}^H < \kappa_{\max}^L$. Therefore separation exists for
\begin{equation}
\kappa_{\min}^H \;\le\; \kappa \;<\; \kappa_{\max}^L ,
\end{equation}
provided the interval is nonempty, which was verified above. Moreover, $\alpha^{H}>\underline{\alpha}$ ensures $\kappa_{\min}^{H}<(1-c/2)/\alpha^{H}$ (Section~\ref{sec:eq-characterization}), so the interval intersects the admissible range of Assumption~\ref{ass:kappa}. Within this interval, $\Delta(\alpha^H;\kappa)>0$ and $\Delta(\alpha^L\mid \alpha^H)<0$, so $g^*(\alpha^H)=1$ and $g^*(\alpha^L)=0$.
\paragraph{2. Pooling at rents when $\kappa < \kappa_{\min}^H$}
If $\kappa < \kappa_{\min}^H$, then \eqref{eq:kappa-minH-deriv} fails 
and $\Delta(\alpha^H;\kappa) < 0$. Hence both types strictly prefer 
$g=0$, so the unique equilibrium outcome is pooling at rents with 
tax base $T_0$ (equation~\eqref{eq:T0}). Under D1, $p(g=1)=\alpha^{H}$, 
which yields the maximum deviation payoff $\alpha\,T_1(\alpha^{H})$. 
Since $\Delta(\alpha^{H}\mid\alpha^{H})<0$ for $\kappa<\kappa_{\min}^{H}$, 
monotonicity of $\Delta(\cdot\mid\alpha^{H})$ in $\alpha$ gives 
$\Delta(\alpha^{L}\mid\alpha^{H})<0$ a fortiori, so neither type 
prefers to deviate even at the most favorable off-path belief.

\paragraph{3. Pooling at provision when $\kappa \ge \kappa_{\max}^L$: feasibility conditions}
Suppose both types choose $g=1$ and citizens’ on-path belief is
\[
\bar{\alpha} \;\equiv\; \rho\,\alpha^H + (1-\rho)\,\alpha^L .
\]
The type-$\alpha$ incentive constraint under pooling at provision is $\Delta(\alpha \mid \bar{\alpha})\ge 0$, i.e.
\begin{equation}\label{eq:IC-pool}
\frac{\alpha}{1-\kappa\,\bar{\alpha}} \;\ge\; \frac{\theta}{1-\kappa\,s}
\;\Longleftrightarrow\;
\alpha - \theta \;\ge\; \kappa\,\theta\,(\alpha \sigma - \bar{\alpha}).
\end{equation}

\emph{High type.} For $\alpha=\alpha^H \le \theta$:
\begin{equation}\label{eq:IC-H-correct}
\alpha^H - \theta \;\ge\; \kappa\,\theta\,(\alpha^H \sigma - \bar{\alpha}).
\end{equation}
There are two subcases.
\begin{itemize}
\item If $\alpha^H \sigma \le \bar{\alpha}$, then $(\alpha^H \sigma - \bar{\alpha})\le 0$ and \eqref{eq:IC-H-correct} is equivalent to the \emph{lower bound}
\begin{equation}\label{eq:kappa-Hmin}
\kappa \;\ge\; \kappa^{H,\min}
\ \equiv\
\frac{\theta - \alpha^H}{\theta\,(\bar{\alpha} - \alpha^H \sigma)} \;\ge\; 0,
\end{equation}
with equality only when $\alpha^H=\theta$. Thus the high-type IC under pooling is \emph{not automatic}: it requires sufficiently high morality.
\item If $\alpha^H \sigma > \bar{\alpha}$, then $(\alpha^H \sigma - \bar{\alpha})>0$ and \eqref{eq:IC-H-correct} would impose the \emph{upper bound} $\kappa \le \frac{\alpha^H-\theta}{\theta\,(\alpha^H \sigma - \bar{\alpha})}\le 0$, which is infeasible for $\kappa\ge 0$ unless $\alpha^H=\theta$ (knife-edge). Hence pooling at provision is infeasible in this subcase.
\end{itemize}

\emph{Low type.} For $\alpha=\alpha^L<\theta$, \eqref{eq:IC-pool} is equivalent to
\begin{equation}\label{eq:kappa-pool-deriv}
\kappa \;\ge\; \kappa^{\text{pool}}
\;\equiv\;
\frac{\theta - \alpha^L}{\theta\,(\bar{\alpha} - \alpha^L \sigma)}.
\end{equation}
Since $\bar{\alpha}>\alpha^L \sigma$ (because $\sigma<1$ and $\alpha^H>\alpha^L$), $\kappa^{\text{pool}}$ is well-defined and positive.

Therefore, if $\kappa \ge \kappa_{\max}^L$ (so the low type is willing to provide under separation), the stage-game ICs for pool-at-provision require
\begin{equation}\label{eq:pool-feasible-correct}
\alpha^H \sigma \le \bar{\alpha}
\quad\text{and}\quad
\kappa \;\ge\; \max\{\kappa^{\text{pool}},\,\kappa^{H,\min}\}.
\end{equation}
By Lemma~\ref{lem:stage-reduction}, this condition is sufficient for pool-at-provision and tight when $q^{H}=q^{L}$. When $q^{H}>q^{L}$, it is an upper bound on the required morality, because continuation-value costs from the shifting belief path deter deviation. If instead $\alpha^H \sigma > \bar{\alpha}$, the stage-game high-type IC is infeasible for any $\kappa\ge 0$ (except the knife-edge $\alpha^H=\theta$).

\paragraph{4. Summary of regions}
Combining steps 1–3:
\begin{equation}
\begin{cases}
\text{Pooling at rents $(g=0)$}, & \text{if } \kappa < \kappa_{\min}^H,\\[0.4em]
\text{Separation $(g^*(\alpha^H)=1,\ g^*(\alpha^L)=0)$}, & \text{if } \kappa_{\min}^H \le \kappa < \kappa_{\max}^L,\\[0.4em]
\text{Pooling at provision $(g=1)$}, & \text{if } \kappa \ge \kappa_{\max}^L,\ \alpha^H \sigma \le \bar{\alpha},\ \text{and }\kappa \ge \max\{\kappa^{\text{pool}},\kappa^{H,\min}\}.
\end{cases}
\end{equation}
This is the characterization in Proposition~\ref{prop:weak-high}, with the pool-at-provision conditions understood as sufficient per Lemma~\ref{lem:stage-reduction}.

Finally, collecting thresholds for reference:
\begin{equation}
\kappa_{\min}^{H}
= \frac{\theta-\alpha^{H}}{\theta\,\alpha^{H}\,(1-\sigma)},\quad
\kappa_{\max}^{L}
= \frac{\theta-\alpha^{L}}{\theta\,\alpha^{H}-\alpha^{L}\,s},\quad
\kappa^{\text{pool}}
= \frac{\theta-\alpha^{L}}{\theta\,(\bar{\alpha}-\alpha^{L}\sigma)},\quad
\kappa^{H,\min}
= \frac{\theta - \alpha^H}{\theta\,(\bar{\alpha} - \alpha^H \sigma)}.
\end{equation}
\end{proof}

\subsection{Proof of Corollary~\ref{cor:uniqueness}}\label{app:proof-uniqueness}
\begin{proof}
Fix $\kappa \in (\kappa^{H}_{\min},\,\kappa^{L}_{\max})$. We rule out
each non-separating candidate.

\paragraph{Pooling at rents} Suppose both types choose $g=0$. Under
D1, single-crossing of $\Delta(\alpha\mid p)$ in $\alpha$ implies that
any off-path $g=1$ is attributed to the high-value elite, so
$p(g=1)=\alpha^{H}$. The high-type's deviation gain is
$\Delta(\alpha^{H}\mid \alpha^{H})$, which is strictly positive for
$\kappa>\kappa^{H}_{\min}$ by~\eqref{eq:IC-H}. The high type therefore
strictly prefers to deviate, and pooling at rents is not an equilibrium.

\paragraph{Pooling at provision} Suppose both types choose $g=1$.
The on-path belief is $\bar\alpha = \rho\alpha^{H}+(1-\rho)\alpha^{L}
<\alpha^{H}$, and the low-type IC requires $\Delta(\alpha^{L}\mid
\bar\alpha)\ge 0$, equivalently $\kappa\ge\kappa^{\text{pool}}$, where
$\kappa^{\text{pool}}=(\theta-\alpha^{L})/[\theta(\bar\alpha-\alpha^{L}\sigma)]$
as defined in~\eqref{eq:kappa-pool-summary}. Since
$\kappa^{\text{pool}}$ and $\kappa^{L}_{\max}$ share the numerator
$\theta-\alpha^{L}>0$, and the denominator of $\kappa^{\text{pool}}$
equals $\theta\bar\alpha - \alpha^{L} s < \theta\alpha^{H} - \alpha^{L} s$
(the denominator of $\kappa^{L}_{\max}$), we have
$\kappa^{\text{pool}}>\kappa^{L}_{\max}$. For $\kappa<\kappa^{L}_{\max}$
the low-type IC is therefore violated, and the low type strictly prefers
to deviate to $g=0$.

\paragraph{Semi-pooling and mixed equilibria.} Consider semi-pooling in
which only the low type randomizes ($\gamma_E(\alpha^{L})\in(0,1)$,
$\gamma_E(\alpha^{H})=1$). Low-type indifference requires
$\Delta(\alpha^{L}\mid p(g=1))=0$ for some posterior $p(g=1)\in
[\alpha^{L},\alpha^{H}]$. Since $\Delta(\alpha^{L}\mid p)$ is strictly
increasing in $p$ and $\Delta(\alpha^{L}\mid\alpha^{H})<0$ for
$\kappa<\kappa^{L}_{\max}$, no such posterior exists. By the symmetric
argument, semi-pooling with only the high type randomizing requires
$\Delta(\alpha^{H}\mid\alpha^{H})=0$, which holds only at the
knife-edge $\kappa=\kappa^{H}_{\min}$. Finally, fully-mixed equilibria
would require both types' indifference conditions to pin the same
posterior $p(g=1)$, which is impossible for $\alpha^{H}\neq\alpha^{L}$.

The unique D1-refined MPE is therefore the separating equilibrium of
Proposition~\ref{prop:weak-high}.
\end{proof}

\subsection{Proof of Proposition~\ref{prop:strong-high}}\label{app:proof-strong-high}
\begin{proof}
Let $\theta \equiv \theta(\sigma)$ and $s \equiv \sigma\,\theta(\sigma)$. In the strong-high state we assume
\begin{equation}
\alpha^L < s < \theta < \alpha^H .
\end{equation}
Under Laffer-maximizing taxation, the per-period gain from provision (relative to rents), evaluated at belief $p$, is
\begin{equation}\label{eq:Delta-strong}
% R2 (w=1). Original prefactor: \frac{w c}{4}
\Delta(\alpha \mid p)
= \rev{\frac{c}{4}}\!\left[
\frac{\alpha}{1-\kappa\,p}
-
\frac{\theta}{1-\kappa\,s}
\right].
\end{equation}
Since $\rev{\tfrac{c}{4}}>0$, the sign of $\Delta(\alpha\mid p)$ coincides with the sign of the bracketed term.

\paragraph{1. Separation for small and intermediate $\kappa$}
On path under separation, beliefs are correct, so set $p=\alpha$. Rearranging,
\begin{equation}\label{eq:sep-ineq-strong}
\Delta(\alpha;\kappa)\ge 0
\;\Longleftrightarrow\;
\frac{\alpha}{1-\kappa\,\alpha}\;\ge\;\frac{\theta}{1-\kappa\,s}
\;\Longleftrightarrow\;
\alpha - \theta \;\ge\; \kappa\,(\alpha s - \theta \alpha)
= \kappa\,\alpha\,(s-\theta).
\end{equation}
For the high type, $\alpha=\alpha^H>\theta$ and $s-\theta<0$, so the right-hand side is nonpositive while the left-hand side is strictly positive. Hence
\[
\Delta(\alpha^H;\kappa)>0 \qquad \text{for all feasible } \kappa \in [0,\,1/\alpha^H).
\]
In the separating candidate, $g=1$ is played on path by the high type, so $p(g=1)=\alpha^H$ by Bayes' rule. Separation therefore requires
\[
\Delta(\alpha^L \mid p=\alpha^H;\kappa)<0
\ \Longleftrightarrow\
\frac{\alpha^L}{1-\kappa\,\alpha^H} < \frac{\theta}{1-\kappa\,s}
\ \Longleftrightarrow\
\kappa \;<\; \kappa_{\max}^{L}
\;\equiv\;
\frac{\theta-\alpha^{L}}{\theta\,\alpha^{H}-\alpha^{L}\,s},
\]
where $s<\alpha^H$ ensures the denominator is strictly positive. Therefore, for any $0 \le \kappa < \kappa_{\max}^{L}$ we have $\Delta(\alpha^H;\kappa)>0$ and $\Delta(\alpha^L\mid \alpha^H;\kappa)<0$, yielding separation with $g^*(\alpha^H)=1$ and $g^*(\alpha^L)=0$.

\paragraph{2. Pooling at provision when $\kappa \ge \kappa_{\max}^{L}$: feasibility}
Suppose $\kappa \ge \kappa_{\max}^{L}$ so the low type is willing to provide under separation. Consider pooling with $g=1$ on path and belief
\[
\bar{\alpha} \;=\; \rho\,\alpha^{H} + (1-\rho)\,\alpha^{L}.
\]
Type–$\alpha$’s IC under pooling at provision is $\Delta(\alpha \mid \bar{\alpha})\ge 0$, i.e.
\begin{equation}\label{eq:IC-pool-strong}
\frac{\alpha}{1-\kappa\,\bar{\alpha}} \;\ge\; \frac{\theta}{1-\kappa\,s}
\;\Longleftrightarrow\;
\alpha - \theta \;\ge\; \kappa\,(\alpha s - \theta\,\bar{\alpha})
= \kappa\,\theta\,(\alpha \sigma - \bar{\alpha}).
\end{equation}
\emph{High type.} For $\alpha=\alpha^H>\theta$:
\[
\alpha^H - \theta \;\ge\; \kappa\,\theta\,(\alpha^H \sigma - \bar{\alpha}).
\]
If $\alpha^H \sigma \le \bar{\alpha}$, the right-hand side is $\le 0$, so the high-type IC holds for all $\kappa \ge 0$. If $\alpha^H \sigma > \bar{\alpha}$, it imposes
\begin{equation}\label{eq:kappa-Hmax-strong}
\kappa \;\le\; \kappa^{H,\max}
\;\equiv\;
\frac{\alpha^{H}-\theta}{\theta\,(\alpha^{H}\sigma - \bar{\alpha})},
\end{equation}
which is strictly positive and finite in the strong-high state.

\emph{Low type.} For $\alpha=\alpha^L<\theta$, \eqref{eq:IC-pool-strong} is equivalent to
\begin{equation}\label{eq:kappa-pool-strong}
\kappa \;\ge\; \kappa^{\text{pool}}
\;\equiv\;
\frac{\theta-\alpha^{L}}{\theta\,(\bar{\alpha} - \alpha^{L}\sigma)},
\end{equation}
which is well-defined since $\bar{\alpha}>\alpha^L\sigma$ whenever $\rho>0$.

Therefore, a pooling-at-provision equilibrium exists iff
\begin{equation}\label{eq:pool-feasible-strong}
\kappa \ \ge\ \kappa_{\max}^L
\quad\text{and}\quad
\begin{cases}
\alpha^{H}\sigma \le \bar{\alpha} \ \ \text{and}\ \ \kappa \ \ge\ \kappa^{\text{pool}},\\[0.25em]
\alpha^{H}\sigma > \bar{\alpha} \ \ \text{and}\ \ \max\{\kappa_{\max}^L,\kappa^{\text{pool}}\} \ \le\ \kappa \ \le\ \kappa^{H,\max}.
\end{cases}
\end{equation}
(Feasibility also requires $\,\kappa<1/\alpha^H$ and $\,\kappa<1/\bar{\alpha}$.)

\paragraph{3. Conclusion}
Combining steps 1–2 yields: separation for $0\le \kappa<\kappa_{\max}^L$, and pooling at provision when $\kappa \ge \kappa_{\max}^L$ subject to \eqref{eq:pool-feasible-strong}. This completes the proof.
\end{proof}

\section{Unaligned Elites}\label{app:unaligned}

This appendix extends the static framework by allowing the elite and the citizens to value the public good differently. 
All primitives and timing remain as in Section~\ref{sec:static}, except that the $\alpha$ parameters may differ across groups. 
Let $\alpha_E$ denote the elite’s valuation of public spending and $\alpha_C$ the citizens’ valuation, with $\alpha_E \ne \alpha_C$ in general. 
This asymmetry captures situations in which elites place relatively less weight on public goods or more weight on private rents, while citizens care primarily about provision. 
Moral preferences continue to shape compliance and the effective tax base exactly as before, but now the elite’s allocation choice $g \in \{0,1\}$ trades off its own valuation $\alpha_E$ against the citizens’ response driven by $\alpha_C$.
The following proposition characterizes the elite’s optimal allocation rule.

\begin{proposition}[Elite's allocation under misalignment]
\label{prop:unaligned}\begin{enumerate}
    \item \textbf{Weak state.} If $\alpha_E \le \theta(\sigma)$ and $\alpha_C \le \sigma \theta(\sigma)$, then $g^*=0$ for all $\kappa$.

    \item \textbf{Common-interest state.} If $\alpha_E \ge \theta(\sigma)$ and $\alpha_C \ge \sigma \theta(\sigma)$, then $g^*=1$ for all $\kappa \in [0,1/\alpha_C)$.

    \item \textbf{Contested common-interest state.} If $\alpha_E > \theta(\sigma)$ and $\alpha_C < \sigma \theta(\sigma)$, the morality cutoff is
    \begin{align}
        \bar{\kappa}^{(+)}(\alpha_C,\alpha_E,\sigma)
        = \frac{\alpha_E - \theta(\sigma)}{\alpha_E \sigma \theta(\sigma) - \theta(\sigma)\alpha_C},
    \end{align}
    which satisfies $\bar{\kappa}^{(+)} \in (0,1/\alpha_C)$. Then
    \begin{align}
        g^*=\begin{cases}
            0, & \text{if } \kappa < \bar{\kappa}^{(+)},\\
            1, & \text{if } \kappa \ge \bar{\kappa}^{(+)}.
        \end{cases}
    \end{align}

    \item \textbf{Contested transfer state.} If $\alpha_E < \theta(\sigma)$ and $\alpha_C > \sigma \theta(\sigma)$, the morality cutoff is
    \begin{align}
        \bar{\kappa}^{(-)}(\alpha_C,\alpha_E,\sigma)
        = \frac{\theta(\sigma) - \alpha_E}{\theta(\sigma)\alpha_C - \alpha_E \sigma \theta(\sigma)},
    \end{align}
    which satisfies $\bar{\kappa}^{(-)} \in (0,1/\alpha_C)$. Then
    \begin{align}
        g^*=\begin{cases}
            0, & \text{if } \kappa < \bar{\kappa}^{(-)},\\
            1, & \text{if } \kappa \ge \bar{\kappa}^{(-)}.
        \end{cases}
    \end{align}
\end{enumerate}

On the knife-edge $\alpha_E \sigma \theta(\sigma) = \theta(\sigma)\alpha_C$, the elite is indifferent between $g=0$ and $g=1$.
\end{proposition}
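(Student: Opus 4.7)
The plan is to mirror the structure of the proof of Proposition~\ref{prop:Elite}, adapted to the unaligned case. First, I would rewrite the elite's objective at Laffer-maximizing taxation, noting that in the unaligned setting two distinct marginal values appear: $\alpha_C$ enters the effective moral return $\varphi(g,\alpha_C,\sigma) = g\alpha_C + (1-g)\sigma\theta(\sigma)$ and hence the tax-base denominator, while $\alpha_E$ weights the public-good term in the elite's per-unit payoff. As in the aligned proof, $V(g)$ is a ratio of affine functions in $g$ and so is single-peaked, so it suffices to compare the two corners:
\begin{equation*}
V(1) = \frac{w c}{4(1-\kappa\alpha_C)}\,\alpha_E, \qquad V(0) = \frac{w c}{4(1-\kappa\sigma\theta(\sigma))}\,\theta(\sigma).
\end{equation*}

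Next, I would derive the master comparison. Under the feasibility restriction $\kappa < 1/\alpha_C$ both denominators are strictly positive, so cross-multiplying and collecting the $\kappa$-term yields
\begin{equation*}
V(1) \;\geq\; V(0) \;\Longleftrightarrow\; \alpha_E - \theta(\sigma) \;\geq\; \kappa\bigl[\alpha_E\,\sigma\theta(\sigma) - \theta(\sigma)\,\alpha_C\bigr].
\end{equation*}
Writing $A := \alpha_E\,\sigma\theta(\sigma) - \theta(\sigma)\,\alpha_C$, I would then partition the parameter space according to the signs of $\alpha_E - \theta(\sigma)$ and $A$. In the weak state ($\alpha_E \leq \theta(\sigma)$ and $\alpha_C \leq \sigma\theta(\sigma)$), the dominance argument from the aligned proof applies: $1-\kappa\sigma\theta(\sigma) \leq 1-\kappa\alpha_C$ while $\alpha_E \leq \theta(\sigma)$, giving $V(1) \leq V(0)$ for all feasible $\kappa$. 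The common-interest state is symmetric and delivers $V(1) \geq V(0)$ uniformly. In the two contested states, $A \neq 0$ and isolating $\kappa$ against the ratio $(\alpha_E - \theta(\sigma))/A$ produces the explicit cutoffs $\bar\kappa^{(+)}$ and $\bar\kappa^{(-)}$; the assumed strict inequalities then ensure each lies in $(0,1/\alpha_C)$. The knife-edge $A=0$ reduces the comparison to the sign of $\alpha_E - \theta(\sigma)$ and is therefore independent of $\kappa$, which yields the indifference claim at the end of the statement.

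The main obstacle I anticipate is bookkeeping around the direction of the crossing in the two contested cases. When $A>0$ (contested common-interest state) the inequality $\alpha_E - \theta(\sigma) \geq \kappa A$ is preserved only for small $\kappa$, so the cutoff acts as a ceiling on the set where $V(1) \geq V(0)$; when $A<0$ (contested transfer state) multiplication by $\kappa \geq 0$ reverses the inequality and the cutoff acts as a floor. The threshold expressions $\bar\kappa^{(\pm)}$ must be matched to the correct case so that the sign of the response of $g^*$ to $\kappa$ is consistent with the economic intuition that higher morality amplifies whichever tax base is sustained by the larger of $\alpha_C$ and $\sigma\theta(\sigma)$. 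Once this direction is pinned down, verifying $\bar\kappa^{(\pm)} < 1/\alpha_C$ is routine algebra using $\alpha_C < \sigma\theta(\sigma) < \theta(\sigma)$ in the contested common-interest state and $\alpha_C > \sigma\theta(\sigma)$ together with $\alpha_E < \theta(\sigma)$ in the contested transfer state.
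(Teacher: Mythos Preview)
Your approach is essentially identical to the paper's: write $V(1)$ and $V(0)$ at the Laffer peak, reduce to the linear-in-$\kappa$ inequality $\alpha_E-\theta(\sigma)\ge\kappa\bigl[\alpha_E\,\sigma\theta(\sigma)-\theta(\sigma)\alpha_C\bigr]$, and then partition by the signs of the two sides. Your direct dominance argument for the weak and common-interest states (comparing numerators and denominators separately) is in fact slightly cleaner than the paper's version, which asserts a sign on the right-hand side that need not hold under those hypotheses alone; your route avoids that slip entirely.

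The concern you flag in the last paragraph is exactly right and worth making explicit. In the contested common-interest state ($\alpha_E>\theta(\sigma)$, $\alpha_C<\sigma\theta(\sigma)$) one has $A>0$ and $\alpha_E-\theta(\sigma)>0$, so $V(1)\ge V(0)$ holds precisely for $\kappa\le\bar\kappa^{(+)}$, i.e.\ the cutoff is a \emph{ceiling} and $g^*=1$ for small $\kappa$, $g^*=0$ for large $\kappa$. This matches your economic intuition (higher morality amplifies the tax base tied to the larger of $\alpha_C$ and $\sigma\theta(\sigma)$, here favoring rents), but it is the reverse of what the proposition and the paper's own proof assert in case~(iii). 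So your derivation is correct; you should be prepared to state case~(iii) with the inequality direction your algebra delivers rather than the one printed.
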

\begin{proof}
Let $\theta\equiv \theta(\sigma)$ and $s\equiv \sigma\,\theta(\sigma)$. Under Laffer-maximizing taxation, per-capita revenue is
\begin{align}
% R2 (w=1). Original: \frac{w c}{4(\cdots)}
T(\hat{t}, g, \kappa, c, \sigma)
= \rev{\frac{c}{4\left(1-\kappa\,\varphi(g,\alpha_C,\sigma)\right)}},
\qquad
\varphi(g,\alpha_C,\sigma)= g\,\alpha_C + (1-g)\,s .
\end{align}
The elite’s objective can be written as
\begin{align}
V(g)
= T(\hat{t}, g, \kappa, c, \sigma)\,\big[\alpha_E g + \theta(1-g)\big].
\end{align}

Since $g\in\{0,1\}$ suffices (the objective is a ratio of affine functions in $g$ and thus single-peaked), compare the two corners:
\begin{align}
% R2 (w=1). Original numerators: \frac{w c}{4\,(\cdots)}
V(1) &= \rev{\frac{c}{4\left(1-\kappa\,\alpha_C\right)}}\,\alpha_E,\\
V(0) &= \rev{\frac{c}{4\left(1-\kappa\,s\right)}}\,\theta.
\end{align}
Because $\kappa<1/\alpha_C$ and $s<1$, both denominators are positive. Hence
\begin{align}
V(1)\ge V(0)
\;\Longleftrightarrow\;
\frac{\alpha_E}{1-\kappa \alpha_C}\;\ge\;\frac{\theta}{1-\kappa s}
\;\Longleftrightarrow\;
\alpha_E-\theta \;\ge\; \kappa\big(\alpha_E s - \theta \alpha_C\big).
\label{eq:key_ineq}
\end{align}

\emph{Aligned states.}
If $\alpha_E\le \theta$ and $\alpha_C\le s$, then at $\kappa=0$ the inequality~\eqref{eq:key_ineq} reduces to $\alpha_E \ge \theta$, which fails. For $\kappa>0$, if $\alpha_E s - \theta\alpha_C \ge 0$ the right-hand side is nonnegative while the left-hand side is nonpositive, so the inequality still fails. If $\alpha_E s - \theta\alpha_C < 0$, indifference would require $\kappa = (\theta-\alpha_E)/(\theta\alpha_C - \alpha_E s)$, but $\alpha_C \le s$ implies this threshold is at least $1/\alpha_C$ (since $(\theta-\alpha_E)\alpha_C \ge \theta\alpha_C - \alpha_E s \Leftrightarrow s \ge \alpha_C$). Hence $V(1)<V(0)$ for all feasible $\kappa \in [0,1/\alpha_C)$ and $g^*=0$ (weak state). 
A symmetric argument establishes the common-interest state: if $\alpha_E\ge \theta$ and $\alpha_C\ge s$, then $V(1)\ge V(0)$ at $\kappa=0$, and no feasible $\kappa$ reverses the ranking (the indifference threshold, when it exists, again exceeds $1/\alpha_C$), so $g^*=1$.

\emph{Contested states.}
When preferences are misaligned, \eqref{eq:key_ineq} delivers a morality cutoff.

\smallskip
\noindent
(i) If $\alpha_E>\theta$ and $\alpha_C<s$, then $\alpha_E s - \theta \alpha_C>0$ and the cutoff is
\begin{align}
\bar{\kappa}^{(+)}(\alpha_C,\alpha_E,\sigma)
= \frac{\alpha_E-\theta}{\alpha_E s - \theta \alpha_C}\in(0,1/\alpha_C),
\end{align}
so $g^*=0$ for $\kappa<\bar{\kappa}^{(+)}$ and $g^*=1$ for $\kappa\ge \bar{\kappa}^{(+)}$ (contested common-interest state).

\smallskip
\noindent
(ii) If $\alpha_E<\theta$ and $\alpha_C>s$, then $\alpha_E s - \theta \alpha_C<0$ and the cutoff is
\begin{align}
\bar{\kappa}^{(-)}(\alpha_C,\alpha_E,\sigma)
= \frac{\theta-\alpha_E}{\theta \alpha_C - \alpha_E s}\in(0,1/\alpha_C),
\end{align}
so $g^*=0$ for $\kappa<\bar{\kappa}^{(-)}$ and $g^*=1$ for $\kappa\ge \bar{\kappa}^{(-)}$ (contested transfer state).

\smallskip
\noindent
On the knife-edge $\alpha_E s=\theta \alpha_C$, inequality \eqref{eq:key_ineq} reduces to $\alpha_E\gtreqless \theta$, and when also $\alpha_E=\theta$ the elite is indifferent between $g=0$ and $g=1$.
\end{proof}
%TC:endignore
% ============================
% Appendix: Cultural Evolution
% ============================

\section{Universalization vs.\ reciprocity}\label{app:reciprocity}

This appendix formalizes the difference between universalization 
and reciprocity-based explanations of state capacity. We retain 
the dynamic signaling environment of 
Section~\ref{sec:dynamics-aligned}, with two states 
$\alpha^L < \alpha^H$ and the elite privately informed, but 
replace Homo Moralis with a reciprocity preference. The 
comparison delivers one structural observation 
(Remark~\ref{rmk:recip-structure}) and one parametric ordering 
(Lemma~\ref{lem:recip-dominance}) of the within-period fiscal 
multiplier under the two behavioral foundations.

\subsection*{Reciprocity preferences}

Following \citet{besley2020}, a reciprocity citizen rewards the 
elite for the observed allocation $g$, with no inference about 
the underlying value of public goods $\alpha$. Let 
$\lambda \in [0,1)$ denote the strength of the reciprocity 
motive, and let $\psi:\{0,1\}\to\mathbb{R}_+$ encode the reward, 
with $\psi_1 \equiv \psi(1) \ge \psi_0 \equiv \psi(0)$. The 
structural feature defining reciprocity is that $\psi$ depends 
only on the observed action $g$, not on the inferred state 
$\alpha$ that the action might signal.\footnote{In 
\citet{besley2020}, reciprocity enters the civic-minded 
citizen's utility through the equilibrium gap $G-eB$, where $e$ 
is the elite's population share. This gap is positive when 
spending favors the public good and negative when it favors 
rents. We adopt the parametric $\psi(g)$ formulation to keep 
the comparison transparent, with the gap-based microfoundation 
nested as a special case. Besley conducts his analysis on the 
common-interest range $\alpha\in[1,A]$, a restriction on 
material incentives in his fiscal geometry rather than on where 
reciprocity is well-defined as a preference. The analog 
common-interest threshold in our environment is $\alpha>s$, 
where $s\equiv\sigma\theta(\sigma)$ is the share of diverted 
revenue recaptured by citizens as transfers.} In a separating equilibrium, observing 
$g=1$ is informationally equivalent to learning $\alpha=\alpha^H$, 
but under reciprocity this inference is payoff-irrelevant: the 
reward $\psi_1$ is a preference primitive, fixed across 
economies that differ only in $\alpha^H$. Allowing $\psi$ to 
respond to the inferred state would close the gap between the 
two frameworks by construction, converting reciprocity into 
universalization.

The citizen's optimal report follows the same functional form 
as Lemma~\ref{lem:optimal_report}, with the moral return 
$\varphi(g,\alpha,\sigma)$ replaced by $\psi(g)$:
\begin{equation}\label{eq:recip-report}
\tilde{w}^R(\lambda;g,t,c,\sigma) 
\;=\; w + \frac{t}{c}\bigl[\lambda\psi(g) - 1\bigr],
\end{equation}
yielding Laffer-maximizing per-capita revenue
\begin{equation}\label{eq:recip-Laffer}
% R2 (w=1). Original: \frac{wc}{4[1-\lambda\psi(g)]}
\hat{T}^R(g) \;=\; \rev{\frac{c}{4\bigl[1-\lambda\psi(g)\bigr]}},
\end{equation}
which is invariant to citizens' beliefs about $\alpha$. The 
elite's per-period payoff gain from provision relative to rents 
is
\begin{equation}\label{eq:recip-Delta}
% R2 (w=1). Original prefactor: \frac{wc}{4}
\Delta^R(\alpha;\lambda)
\;=\; \rev{\frac{c}{4}}\!\left[\frac{\alpha}{1-\lambda\psi_1}
- \frac{\theta(\sigma)}{1-\lambda\psi_0}\right].
\end{equation}
Because $\hat{T}^R(g)$ is belief-independent, $\Delta^R$ does 
not depend on citizens' posteriors. The incentive constraints 
for separation reduce to 
$\Delta^R(\alpha^H;\lambda) > 0 > \Delta^R(\alpha^L;\lambda)$, 
with no role for off-path beliefs or equilibrium refinements.

\subsection*{Structural separation}

The substantive comparison concerns the within-period fiscal 
multiplier, the factor by which credible provision raises the 
tax base relative to rents. Under universalization, this 
multiplier is
\[
J^U(\kappa;\sigma) 
\;\equiv\; \frac{\hat{T}(1)}{\hat{T}(0)}
\;=\; \frac{1-\kappa s}{1-\kappa\alpha^H},
\]
where the numerator uses $\varphi(0,\alpha,\sigma)=s$ under 
rents and the denominator uses $\varphi(1,\alpha^H,\sigma)=\alpha^H$ 
under provision in the separating equilibrium. Under 
reciprocity, from~\eqref{eq:recip-Laffer},
\[
J^R(\lambda;\sigma) 
\;\equiv\; \frac{\hat{T}^R(1)}{\hat{T}^R(0)}
\;=\; \frac{1-\lambda\psi_0}{1-\lambda\psi_1}.
\]

\begin{remark}[Structural separation]\label{rmk:recip-structure}
Fix $(\psi_0, \psi_1, \lambda)$ and $(\kappa, \sigma)$, and 
compare economies that differ only in $\alpha^H$. For any 
reciprocity preference with $\psi_0 \le \psi_1 < 1/\lambda$, 
the multiplier $J^R$ is invariant to $\alpha^H$. The 
universalization multiplier $J^U$ is strictly increasing in 
$\alpha^H$, with
\[
\frac{\partial J^U}{\partial \alpha^H} 
\;=\; \frac{\kappa\,(1-\kappa s)}{(1-\kappa\alpha^H)^2} 
\;>\; 0.
\]
\end{remark}

The two frameworks make different predictions about the fiscal 
response to a shift in the high-state value of public goods. 
Two economies with identical institutions and identical 
reciprocity strength but different $\alpha^H$ generate identical 
multipliers under reciprocity and different multipliers under 
universalization. This separation is parametrization-invariant: 
it holds for every admissible $(\psi_0,\psi_1)$ and does not 
depend on matching the two models' preference parameters. The 
economic content is that reciprocity rewards the observed 
action, while universalization rewards the inferred state. In 
a signaling environment, provision is informative about 
$\alpha$, and the channel through which credible reform raises 
fiscal capacity runs through citizens' inference. Reciprocity 
severs this channel by construction.

\subsection*{Parametric dominance}

Within a matched calibration, the multiplier comparison admits 
a sharp characterization. The institutional baseline 
$s\equiv\sigma\theta(\sigma)$ appears in both models: under 
universalization as the moral return to compliance under rents, 
under reciprocity as the share of diverted revenue recaptured 
by citizens through institutional cohesion. Setting 
$\psi_0 = s$ matches the two frameworks on this common 
environmental anchor, and $\lambda = \kappa$ equates the 
strength of the moral motive across specifications.

\begin{lemma}[Parametric dominance]\label{lem:recip-dominance}
Under the normalization $\psi_0 = s$, $\lambda = \kappa$, and 
with $\alpha^H \in [s,1/\kappa)$:
\begin{enumerate}
\item For $\psi_1 \in [s,\alpha^H)$: 
$J^U(\kappa;\sigma) > J^R(\kappa;\sigma)$.
\item At $\psi_1 = \alpha^H$: $J^U = J^R$.
\item For $\psi_1 \in (\alpha^H,1/\kappa)$: $J^R > J^U$.
\end{enumerate}
\end{lemma}

\begin{proof}
With $\psi_0 = s$ and $\lambda = \kappa$, $J^U > J^R$ iff 
$1-\kappa\alpha^H < 1-\kappa\psi_1$, i.e.\ $\psi_1 < \alpha^H$. 
The boundary and reverse cases follow by inspection.
\end{proof}

The boundary $\psi_1 = \alpha^H$ has a transparent 
interpretation: it is the unique reciprocity calibration that 
exactly matches the universalization multiplier, requiring 
citizens to reward provision \emph{as if} they had inferred 
the high state. Any reciprocity preference that does not 
implicitly perform this inference, i.e.\ any 
$\psi_1 < \alpha^H$, sits in the universalization-dominated 
region. The reverse-dominance region $\psi_1 > \alpha^H$ 
requires citizens to over-respond relative to any rational 
inference about $\alpha$, which has no natural microfoundation.

\paragraph*{Equilibrium refinements.}
The D1 refinement required under universalization 
(\ref{app:D1}) plays no role under reciprocity: the incentive 
constraints \eqref{eq:recip-Delta} are belief-independent, so 
off-path beliefs cannot affect the elite's payoff. The 
signaling complications of the universalization model are a 
substantive consequence of the belief channel rather than an 
artifact of refinement.

\section{Cultural evolution of moral universalization}
\label{app:cultural}
This appendix supplies the formal derivations and proofs for 
the endogenous-morality extension sketched in 
Section~\ref{sec:robustness}, in the spirit of cultural evolution 
models \citep{sandholm2010, sethi2001, besley2020}. Like 
\citet{besley2020}, we embed cultural dynamics into a fiscal 
model by allowing the population share of a behavioral type to 
evolve according to relative experienced utilities. Two differences are that 
the evolving trait is moral universalization rather than 
reciprocity, and that the within-period game involves 
asymmetric information and signaling, making the equilibrium 
allocation $g^*(\mu)$ depend on the cultural state.

\paragraph*{Timing} Each period $n$ proceeds as follows.
\begin{enumerate}
\item The period opens with the universalizer share $\mu_n\in[0,1]$ 
inherited from period $n-1$, so aggregate morality is 
$\kappa_n \equiv \mu_n\kappa_H$.
\item Nature draws $\alpha_n\in\{\alpha^{L},\alpha^{H}\}$ from 
the Markov chain of Definition~\ref{def:dynamic-game}. The elite 
privately observes $\alpha_n$.
\item The within-period signaling game of 
Section~\ref{sec:dynamics-aligned} is played to completion: the 
elite chooses $g_n$, citizens form posterior $p_n$ and choose 
reports $\tilde{w}_n$.
\item Per-period payoffs $U_E(\alpha_n;g_n\mid p_n)$ and 
$U^{(\kappa_J)}(\tilde{w}_n\mid p_n)$ are realized for the elite 
and each citizen type $J\in\{M,H\}$.
\item Citizens are socialized: $\mu_{n+1}$ is determined by a 
revision protocol that favors the type with higher realized 
utility, yielding the dynamic specified in 
equation~\eqref{eq:mu-dynamics} below.\footnote{What replicates is 
a cultural trait shaped by experienced utility, socialization, 
and peer influence. Under material fitness, materialists always 
dominate by revealed preference since they optimize exactly the 
criterion governing selection.}
\end{enumerate}

The two scales are well separated. Within any period $\mu_n$ is 
fixed and the stationary equilibria of 
Section~\ref{sec:dynamics-aligned} characterize behavior at 
$\kappa_n = \mu_n\kappa_H$.\footnote{This treats the elite as 
myopic: it maximizes its per-period payoff taking $\mu_n$ as 
given, consistent with the myopia property of the revision 
protocol. A forward-looking elite would face a richer dynamic 
problem in which provision today raises the future universalizer 
share and hence future fiscal capacity. The stationary 
characterization serves as a tractable benchmark for that 
extension.} Across periods, $\mu_n$ evolves 
gradually toward an interior steady state. The goal is not to 
replace the main analysis, but to show how a positive level of 
moral universalization can be sustained endogenously through 
socialization and peer effects, and how the long-run fiscal 
regime is determined by the interaction between the cultural 
steady state and the within-period game.
\paragraph*{Types and aggregate moral universalization}\label{app:cultural:types}
The population of citizens consists of two cultural types:
\begin{itemize}
  \item \emph{Materialists} $M$, with moral universalization parameter $\kappa=0$.
  \item \emph{Universalizers} $H$, with moral universalization parameter $\kappa=\kappa_H>0$.
\end{itemize}
Since the optimal report in Lemma~\ref{lem:optimal_report} is 
linear in $\kappa$ at interior solutions (guaranteed by 
Assumption~\ref{ass:kappa}), compliance decisions aggregate 
linearly across types and $\kappa_n = \mu_n\kappa_H$ (step~1 
of the timing) is a sufficient statistic for the tax base.
\paragraph*{Policy environment and per-period utilities}\label{app:cultural:payoffs}
Fix the institutional environment $(c,\sigma)$ and let $s\equiv \sigma \theta(\sigma)$. Under the D1-refined equilibrium of Section~\ref{sec:dynamics-aligned}, citizens' posteriors equal realized states, so the Laffer-maximizing per-capita revenue at cultural state $\mu_n$, fundamentals $\alpha_n$, and policy $g_n\in\{0,1\}$ is
\begin{equation}\label{eq:Tmu}
T(g_n;\alpha_n,\mu_n)
\equiv g_n\,T_1(\alpha_n,\mu_n) + (1-g_n)\,T_0(\mu_n),
\end{equation}
with
\begin{equation}\label{eq:T10mu}
% R2 (w=1). Original numerators: \frac{wc}{4(\cdots)}
T_1(\alpha,\mu) \equiv \rev{\frac{c}{4\left(1-\mu\,\kappa_H\,\alpha\right)}},
\qquad
T_0(\mu) \equiv \rev{\frac{c}{4\left(1-\mu\,\kappa_H\,s\right)}}.
\end{equation}
Let $U^{(\kappa_H)}(g_n,\alpha_n,\mu_n)$ and $U^{(0)}(g_n,\alpha_n,\mu_n)$ denote the per-period utility of a universalizer and a materialist respectively, under policy $g_n$, fundamentals $\alpha_n$, and population share $\mu_n$.

Given the elite's equilibrium policy rule $g^*(\alpha,\mu)$, define the \emph{expected} indirect utility of each type at share $\mu$, integrating over the prior:
\begin{align}
U^{(\kappa_H)}(\mu) &\equiv \rho\,U^{(\kappa_H)}\!\left(g^*(\alpha^H,\mu),\,\alpha^H,\,\mu\right) + (1-\rho)\,U^{(\kappa_H)}\!\left(g^*(\alpha^L,\mu),\,\alpha^L,\,\mu\right), \\
U^{(0)}(\mu) &\equiv \rho\,U^{(0)}\!\left(g^*(\alpha^H,\mu),\,\alpha^H,\,\mu\right) + (1-\rho)\,U^{(0)}\!\left(g^*(\alpha^L,\mu),\,\alpha^L,\,\mu\right),
\end{align}
where $\rho = \Pr(\alpha = \alpha^H)$.

\paragraph*{Cultural fitness advantage}\label{app:cultural:advantage}
Define the \emph{state-conditional} cultural fitness advantage 
of universalizers as the per-period utility difference
\begin{equation}\label{eq:cultural-advantage}
\overline{\Delta}_C(g;\mu\mid\alpha) \equiv 
U^{(\kappa_H)}\!\left(g,\alpha,\mu\right) - 
U^{(0)}\!\left(g,\alpha,\mu\right),
\end{equation}
and the unconditional advantage by integrating over the prior:
\begin{equation}\label{eq:cultural-advantage-uncond}
\overline{\Delta}_C(g;\mu) \equiv 
\rho\,\overline{\Delta}_C(g;\mu\mid\alpha^H) + 
(1-\rho)\,\overline{\Delta}_C(g;\mu\mid\alpha^L).
\end{equation}
The sign and monotonicity of 
$\overline{\Delta}_C(g;\mu\mid\alpha)$ determine whether moral 
universalization persists, vanishes, or converges to an interior 
steady state. Because the sign turns out to be independent of 
$\alpha$, the unconditional advantage $\overline{\Delta}_C(g;\mu)$ 
that drives the replicator inherits it. We derive a closed-form 
expression below and summarize the key properties here.
\begin{proposition}[Cultural fitness advantage at fixed allocation]
\label{prop:cultural-fitness}
Fix $g\in\{0,1\}$ and $\alpha\in\{\alpha^L,\alpha^H\}$, and define $A(g) \equiv \kappa_H\,\varphi(g,\alpha,\sigma)$. Under Assumption~\ref{ass:kappa} applied to the maximum moral type, $\kappa_H < 1/\alpha^H$, we have $A(g)<1$, so the expressions below are well-defined for all $\mu\in[0,1]$. The state-conditional cultural fitness advantage of universalizers is
\begin{equation}\label{eq:Delta-closed}
% R2 (w=1). Original prefactor: \frac{wc}{8}
\overline{\Delta}_C(g;\mu\mid\alpha) = \rev{\frac{c}{8}}\,
\frac{A(g)^2\,(1-2\mu)}{(1-\mu\, A(g))^2},
\end{equation}
which satisfies:
\begin{enumerate}
\item $\overline{\Delta}_C(g;\mu\mid\alpha) > 0$ for $\mu < 1/2$, 
\;$\overline{\Delta}_C(g;\mu\mid\alpha) = 0$ for $\mu = 1/2$, 
\;$\overline{\Delta}_C(g;\mu\mid\alpha) < 0$ for $\mu > 1/2$;
\item $\frac{\partial}{\partial\mu}\overline{\Delta}_C(g;\mu\mid\alpha) < 0$ 
for all $\mu \in [0,1]$: the advantage is strictly decreasing 
in $\mu$.
\end{enumerate}
Since the sign and monotonicity are independent of $\alpha$, 
the same properties hold for the unconditional advantage 
$\overline{\Delta}_C(g;\mu)$ defined in~\eqref{eq:cultural-advantage-uncond}.
\end{proposition}

Since $\varphi(g,\alpha,\sigma)\le\alpha^H$ for all $g$ and all realized $\alpha$, the condition $\kappa_H<1/\alpha^H$ implies $\mu\kappa_H\varphi(g,\alpha,\sigma)<1$ for every $\mu\in[0,1]$. The aggregate moral level $\kappa_n = \mu_n\kappa_H$ therefore satisfies Assumption~\ref{ass:kappa} throughout the cultural transition, and the within-period equilibrium is well-defined at every cultural state.

The economic intuition is as follows.
When universalizers are rare ($\mu$ small), the Laffer-maximizing 
tax rate is moderate and the moral return from compliance exceeds 
the private cost of over-reporting relative to materialists.
Universalizers enjoy a net utility advantage from their moral 
satisfaction.
As universalizers become prevalent, aggregate compliance rises, 
which pushes the Laffer-maximizing tax rate higher.
Both types bear this higher tax burden, but universalizers, who 
report more than materialists, bear it disproportionately.
At the same time, the moral return from compliance is partially 
capitalized into the fiscal system, reducing the marginal benefit 
of further moral behavior.
The crossover occurs exactly at $\mu = 1/2$: below this 
threshold, universalizers are advantaged. Above it, materialists 
are.

\paragraph*{A generic cultural evolutionary dynamic}
\label{app:cultural:dynamic}
We adopt a standard revision-protocol representation in discrete 
time \citep{sandholm2010}. Two properties characterize this 
class of dynamics: \emph{inertia}, whereby agents revise their 
cultural type only sporadically rather than continuously, and 
\emph{myopia}, whereby revision decisions are conditioned on 
current payoffs rather than forward-looking expectations. 
For brevity, let $M$ denote the materialist type ($\kappa=0$) 
and $H$ the universalizer type ($\kappa=\kappa_H$). 
Let $s_{MH}(\mu)$ be the switching rate from $M$ to $H$ and 
$s_{HM}(\mu)$ the switching rate from $H$ to $M$. The cultural 
composition evolves according to
\begin{equation}\label{eq:mu-dynamics}
\mu_{n+1}-\mu_n = (1-\mu_n)\,s_{MH}(\mu_n)
-\mu_n\,s_{HM}(\mu_n).
\end{equation}
We impose only the sign restriction that switches favor the 
type with higher cultural fitness:
\begin{equation}\label{eq:revision-sign}
s_{MH}(\mu)>0 \ \Longleftrightarrow\ 
\overline{\Delta}_C(g;\mu)>0,
\qquad
s_{HM}(\mu)>0 \ \Longleftrightarrow\ 
\overline{\Delta}_C(g;\mu)<0.
\end{equation}
We work with the discrete-time replicator
\begin{equation}\label{eq:replicator}
\mu_{n+1}-\mu_n = \eta\,\mu_n(1-\mu_n)\,
\overline{\Delta}_C(g;\mu_n), \qquad \eta>0,
\end{equation}
a standard special case of the revision protocol 
\eqref{eq:mu-dynamics}. The replicator \eqref{eq:replicator} 
applies equally when $g$ is constant or given by the endogenous 
equilibrium rule $g^*(\mu_n)$ of the within-period signaling 
game, with $\overline{\Delta}_C$ evaluated at the prevailing 
allocation. Linearity of the optimal report in $\kappa$ 
(Lemma~\ref{lem:optimal_report}) ensures that the elite's 
incentive constraints at cultural state $\mu$ take the same 
form as in Propositions~\ref{prop:weak-high}--\ref{prop:strong-high} 
at $\kappa_n=\mu_n\kappa_H$, so $g^*(\mu)$ is well-defined for 
each $\mu\in(0,1)$.

\begin{proposition}[Convergence to interior morality]
\label{prop:cultural-convergence}
Maintain Assumption~\ref{ass:kappa} at the maximum moral type ($\kappa_H<1/\alpha^H$), and let $\kappa^{\mathrm{pr}}\equiv\bar\kappa(\alpha,\sigma)$ denote the provision threshold from Proposition~\ref{prop:Elite}. Suppose further $\kappa_H/2\neq\kappa^{\mathrm{pr}}$ (a generic condition that excludes a knife-edge case). Then under the replicator \eqref{eq:replicator} with $g=g^*(\mu_n)$:
\begin{enumerate}
\item The interior point $\mu^*=1/2$ is globally asymptotically 
stable on $(0,1)$, with aggregate morality 
$\kappa^*=\kappa_H/2>0$. The boundary states $\mu=0$ and 
$\mu=1$ are unstable.
\item The long-run fiscal regime is $g^*(\mu^*)=1$ if 
$\kappa_H/2>\kappa^{\mathrm{pr}}$ and $g^*(\mu^*)=0$ 
otherwise.
\end{enumerate}
\end{proposition}

\begin{figure}[!htb]
\centering
\includegraphics{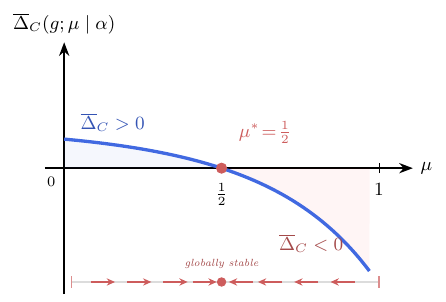}
\caption{Cultural fitness advantage and convergence to interior morality.
The curve plots the state-conditional advantage 
$\overline{\Delta}_C(g;\mu\mid\alpha)$ of 
Proposition~\ref{prop:cultural-fitness} at an illustrative 
$A(g)=\kappa_H\varphi(g,\alpha,\sigma)$. It is strictly concave 
on $[0,1]$ and asymmetric about $\mu=1/2$, with a shallow positive 
region and a steep descent as $\mu\to 1$ driven by the 
$(1-\mu A)^{-2}$ factor. For $\mu < 1/2$, 
universalizers earn higher utility and their share grows. For 
$\mu > 1/2$, materialists are favored. The unique interior 
steady state $\mu^* = 1/2$ is globally asymptotically stable 
(Proposition~\ref{prop:cultural-convergence}). Arrows indicate 
the direction of the replicator dynamic~\eqref{eq:replicator}.}
\label{fig:cultural-phase}
\end{figure}

\begin{proof}
By Proposition~\ref{prop:cultural-fitness}, $\overline{\Delta}_C(g;\mu\mid\alpha)=\rev{\frac{c}{8}}\,\frac{A(g)^2(1-2\mu)}{(1-\mu A(g))^2}$ has the same sign as $1-2\mu$ for every $(g,\alpha)$. Integrating over $\alpha$ with weights $(\rho,1-\rho)$ preserves the sign, so the expected advantage driving \eqref{eq:replicator} also has the same sign as $1-2\mu$. A switch in $g^*(\mu)$ at the policy threshold $\mu^{\mathrm{switch}}\equiv\kappa^{\mathrm{pr}}/\kappa_H$ therefore does not alter the direction of motion of $\mu_n$, and the replicator \eqref{eq:replicator} evaluated at $g=g^*(\mu_n)$ satisfies $\mu_{n+1}>\mu_n$ for $\mu_n\in(0,1/2)$ and $\mu_{n+1}<\mu_n$ for $\mu_n\in(1/2,1)$. Hence $\mu^*=1/2$ is globally attracting on $(0,1)$.\footnote{For $\eta$ small enough that the step never overshoots $\mu^*=1/2$ (i.e., $\mu_{n+1}\le 1/2$ whenever $\mu_n\le 1/2$, and $\mu_{n+1}\ge 1/2$ whenever $\mu_n\ge 1/2$), convergence is monotone. This regime is consistent with the inertia property of revision protocols \citep{sandholm2010} and we maintain it throughout.} At $\mu=0$ (resp.\ $\mu=1$), $\overline{\Delta}_C(g;0)>0$ (resp.\ $\overline{\Delta}_C(g;1)<0$), so a small perturbation toward the interior grows. This proves part~(1).

Part~(2) follows from Proposition~\ref{prop:Elite}: at 
$\kappa^*=\kappa_H/2$, $g^*=1$ if 
$\kappa_H/2>\kappa^{\mathrm{pr}}$ and $g^*=0$ otherwise.
The knife-edge $\kappa_H/2=\kappa^{\mathrm{pr}}$ is 
excluded by assumption.
\end{proof}

\begin{corollary}[Long-run fiscal regime]
\label{cor:endogenous-credibility}
Let $\mu^* = 1/2$ be the steady state of Proposition~\ref{prop:cultural-convergence}, so $\kappa^* = \kappa_H/2$. Applying Proposition~\ref{prop:weak-high} at $\kappa^*$, the long-run fiscal regime is:
\begin{enumerate}
\item \emph{Pooling at rents} ($g^*\equiv 0$) if $\kappa_H/2 < \kappa_{\min}^H$;
\item \emph{D1-refined separation} ($g^*(\alpha^H)=1$, $g^*(\alpha^L)=0$) if $\kappa_H/2\in[\kappa_{\min}^H,\kappa_{\max}^L)$;
\item \emph{Pooling at provision} ($g^*\equiv 1$), feasibility permitting,\footnote{The additional conditions $\alpha^H\sigma\le\bar{\alpha}$ and $\kappa_H/2\ge\max\{\kappa^{\text{pool}},\kappa^{H,\min}\}$ from Proposition~\ref{prop:weak-high} must also hold.} if $\kappa_H/2\ge\kappa_{\max}^L$.
\end{enumerate}
In case (2), credible reform persists without exogenous maintenance of morality.
\end{corollary}

The three cases correspond to the three regions of the within-period game at $\kappa=\kappa^*$. Which one obtains depends on the \emph{level} of the maximum moral type $\kappa_H$ relative to the separating region of Proposition~\ref{prop:weak-high}. Cultural evolution fixes the share $\mu^*=1/2$ but not the level.

\paragraph*{Scope of the convergence result}
The convergence to $\mu^* = 1/2$ does not depend on the within-period signaling structure. Because $\overline{\Delta}_C(g;\mu\mid\alpha)$ inherits the sign of $(1-2\mu)$ for every $(g,\alpha)$, the moral steady state is the same whether the within-period game involves D1 separation, pooling, or no signaling at all: the same $\mu^*$ obtains under complete information about $\alpha$. Signaling enters the appendix only through the endogenous policy rule $g^*(\mu)$ that appears in the replicator~\eqref{eq:replicator}. It does not affect the location or stability of $\mu^*$. The role of signaling is rather to pin down the long-run \emph{fiscal} regime: at $\kappa^* = \kappa_H/2$, the within-period equilibrium maps morality into one of the three regions of Corollary~\ref{cor:endogenous-credibility}. Cultural dynamics therefore deliver the moral state. The within-period game allocates it between provision and rents.

The result establishes that moral universalization persists at 
a positive level as the generic long-run outcome of cultural 
evolution. Unlike the tipping dynamics in \citet{besley2020}, 
where convergence to a high- or low-capacity steady state 
depends on initial conditions, the convergence result here is 
global: any interior initial condition leads to $\mu^* = 1/2$ 
and hence to $\kappa^* = \kappa_H/2 > 0$. This provides a 
cultural-evolutionary rationale for the assumption $\kappa > 0$ 
maintained throughout the main analysis.
\subsection{Proof of Proposition~\ref{prop:cultural-fitness}}
\label{app:cultural:delta}
Fix $g\in\{0,1\}$ and let $\varphi_g \equiv \varphi(g,\alpha,\sigma)$, 
$A(g) \equiv \kappa_H\,\varphi_g$, and 
$D(\mu) \equiv 1 - \mu\,A(g)$.
% R2: removed redundant local normalization and the erroneous "scale linearly" claim (revenue scales as w^2). Normalization is now global; see Section~\ref{sec:baseline}.
\rev{Recall the normalization $w=1$ from Section~\ref{sec:baseline}.}

\paragraph{Step 1: Laffer-maximizing tax rate and revenue}
\begin{equation}
\hat{t}(g,\mu) = \frac{c/2}{D(\mu)},
\qquad
\hat{T}(g,\mu) = \frac{c}{4\,D(\mu)}.
\end{equation}

\paragraph{Step 2: Type-specific reports and deviations}
Using $\hat{t}/c = 1/(2D)$, the optimal reports are
\begin{equation}
\tilde{w}^M(g,\mu) = 1 - \frac{1}{2D(\mu)},
\qquad
\tilde{w}^H(g,\mu) = 1 + \frac{A(g)-1}{2D(\mu)},
\end{equation}
with deviations $d^M = -1/(2D)$ and $d^H = (A(g)-1)/(2D)$.

\paragraph{Step 3: Indirect utilities}
\begin{equation}
U^{(\kappa_J)}(g,\mu) = z^{(\kappa_J)}(g,\mu) + \varphi_g
\left[(1-\kappa_J)\,\hat{T}(g,\mu) +
\kappa_J\,\hat{t}(g,\mu)\,\tilde{w}^{(\kappa_J)}(g,\mu)
\right],
\end{equation}
where $z^{(\kappa_J)} = 1 - \hat{t}\,\tilde{w}^{(\kappa_J)}
- \frac{c}{2}(d^{(\kappa_J)})^2$.
\paragraph{Step 4: Cultural fitness advantage}
Substituting and simplifying yields the state-conditional advantage
\begin{equation}
\overline{\Delta}_C(g;\mu\mid\alpha) =
U^{(\kappa_H)}(g,\mu) - U^{(0)}(g,\mu) =
\frac{c}{8}\,\frac{A(g)^2\,(1-2\mu)}{D(\mu)^2},
\end{equation}
where the dependence on $\alpha$ enters through $A(g)=\kappa_H\,\varphi_g$.

\paragraph{Step 5: Monotonicity}
Differentiating with respect to $\mu$:
\begin{equation}
\frac{\partial}{\partial\mu}\overline{\Delta}_C(g;\mu\mid\alpha)
= \frac{c\,A(g)^2}{4}\,
\frac{1 - A(g)(1-\mu)}{(\mu\,A(g) - 1)^3}.
\end{equation}
Since $A(g) < 1$ by assumption, the numerator satisfies
$1 - A(g)(1-\mu) \ge 1 - A(g) > 0$ and the denominator
$(\mu\,A(g) - 1)^3 < 0$ for all $\mu\in[0,1]$, so
$\frac{\partial}{\partial\mu}\overline{\Delta}_C(g;\mu\mid\alpha) < 0$.

\paragraph{Step 6: Unique root}
$\overline{\Delta}_C(g;\mu\mid\alpha) = 0$ iff $1 - 2\mu = 0$,
i.e.\ $\mu^* = 1/2$. The unconditional advantage
$\overline{\Delta}_C(g;\mu)=\rho\,\overline{\Delta}_C(g;\mu\mid\alpha^H)
+(1-\rho)\,\overline{\Delta}_C(g;\mu\mid\alpha^L)$ inherits the
sign, monotonicity, and unique root.
\qed


\begin{thebibliography}{63}
\providecommand{\natexlab}[1]{#1}
\providecommand{\url}[1]{\texttt{#1}}
\expandafter\ifx\csname urlstyle\endcsname\relax
  \providecommand{\doi}[1]{doi: #1}\else
  \providecommand{\doi}{doi: \begingroup \urlstyle{rm}\Url}\fi

\bibitem[Acemoglu(2005)]{ACEMOGLU20051199}
Daron Acemoglu.
\newblock Politics and economics in weak and strong states.
\newblock \emph{Journal of Monetary Economics}, 52\penalty0 (7):\penalty0
  1199--1226, 2005.
\newblock ISSN 0304-3932.
\newblock \doi{https://doi.org/10.1016/j.jmoneco.2005.05.001}.
\newblock URL
  \url{https://www.sciencedirect.com/science/article/pii/S0304393205000838}.
\newblock Political economy and macroeconomics.

\bibitem[Acemoglu et~al.(2003)Acemoglu, Johnson, and
  Robinson]{acemoglu2003african}
Daron Acemoglu, Simon Johnson, and James~A. Robinson.
\newblock An {A}frican success story: {B}otswana.
\newblock In Dani Rodrik, editor, \emph{In Search of Prosperity: Analytic
  Narratives on Economic Growth}, pages 80--119. Princeton University Press,
  2003.

\bibitem[Algan and Cahuc(2013)]{algan2013trust}
Yann Algan and Pierre Cahuc.
\newblock Trust and growth.
\newblock \emph{Annu. Rev. Econ.}, 5\penalty0 (1):\penalty0 521--549, 2013.

\bibitem[Alger(2025)]{alger2025norms}
Ingela Alger.
\newblock Norms and norm change-driven by social Kantian preferences.
\newblock 2025.

\bibitem[Alger and Laslier(2022)]{algerlaslier2022}
Ingela Alger and Jean-Fran\c{c}ois Laslier.
\newblock Homo moralis goes to the voting booth: Coordination and information
  aggregation.
\newblock \emph{Journal of Theoretical Politics}, 34\penalty0 (2):\penalty0
  280--312, 2022.

\bibitem[Alger and Weibull(2013)]{alger2013homo}
Ingela Alger and J{\"o}rgen~W Weibull.
\newblock Homo moralis—preference evolution under incomplete information and
  assortative matching.
\newblock \emph{Econometrica}, 81\penalty0 (6):\penalty0 2269--2302, 2013.

\bibitem[Alger and Weibull(2016)]{alger2016evolution}
Ingela Alger and J{\"o}rgen~W Weibull.
\newblock Evolution and Kantian morality.
\newblock \emph{Games and Economic Behavior}, 98:\penalty0 56--67, 2016.

\bibitem[Alger et~al.(2025)Alger, Dierks, and Laslier]{algerdierkslaslier2025}
Ingela Alger, Konrad Dierks, and Jean-Fran\c{c}ois Laslier.
\newblock Does universalization ethics justify participation in large
  elections?
\newblock TSE Working Paper 21-1193, Toulouse School of Economics, 2025.
\newblock Revised April 2025.

\bibitem[Alm et~al.(1993)Alm, Jackson, and McKee]{alm1993fiscal}
James Alm, Betty~R. Jackson, and Michael McKee.
\newblock Fiscal exchange, collective decision institutions, and tax
  compliance.
\newblock \emph{Journal of Economic Behavior \& Organization}, 22\penalty0
  (3):\penalty0 285--303, 1993.

\bibitem[Banks and Sobel(1987)]{banks1987equilibrium}
Jeffrey~S. Banks and Joel Sobel.
\newblock Equilibrium selection in signaling games.
\newblock \emph{Econometrica}, 55\penalty0 (3):\penalty0 647--661, 1987.

\bibitem[Bergstrom(2003)]{bergstrom2003}
Theodore~C. Bergstrom.
\newblock The algebra of assortative encounters and the evolution of
  cooperation.
\newblock \emph{International Game Theory Review}, 5\penalty0 (3):\penalty0
  211--228, 2003.

\bibitem[Besley(2020)]{besley2020}
Timothy Besley.
\newblock State capacity, reciprocity, and the social contract.
\newblock \emph{Econometrica}, 88\penalty0 (4):\penalty0 1307--1335, 2020.

\bibitem[Besley and Persson(2009)]{besley2009origins}
Timothy Besley and Torsten Persson.
\newblock The origins of state capacity: Property rights, taxation, and
  politics.
\newblock \emph{American economic review}, 99\penalty0 (4):\penalty0
  1218--1244, 2009.

\bibitem[Besley and Persson(2011)]{besley2011pillars}
Timothy Besley and Torsten Persson.
\newblock \emph{Pillars of prosperity: The political economics of development
  clusters}.
\newblock Princeton University Press, 2011.

\bibitem[Binmore(1998)]{binmore1998evolution}
Ken~G Binmore.
\newblock The evolution of fairness norms.
\newblock \emph{Rationality and Society}, 10\penalty0 (3):\penalty0 275--301,
  1998.

\bibitem[Binmore(1994)]{binmore1994game}
Kenneth~George Binmore.
\newblock \emph{Game theory and the social contract: just playing}, volume~2.
\newblock MIT press, 1994.

\bibitem[Chaudhuri(2011)]{chaudhuri2011sustaining}
Anurag Chaudhuri.
\newblock Sustaining cooperation in laboratory public goods experiments: A
  selective survey of the literature.
\newblock \emph{Experimental Economics}, 14\penalty0 (1):\penalty0 47--83,
  2011.

\bibitem[Cho and Kreps(1987)]{cho1987signaling}
In-Koo Cho and David~M. Kreps.
\newblock Signaling games and stable equilibria.
\newblock \emph{The Quarterly Journal of Economics}, 102\penalty0 (2):\penalty0
  179--221, 1987.
\newblock \doi{10.2307/1885060}.

\bibitem[Cho and Sobel(1990)]{cho1990strategic}
In-Koo Cho and Joel Sobel.
\newblock Strategic stability and uniqueness in signaling games.
\newblock \emph{Journal of Economic Theory}, 50\penalty0 (2):\penalty0
  381--413, 1990.
\newblock \doi{10.1016/0022-0531(90)90009-9}.

\bibitem[Collier(2017)]{collier2017culture}
Paul Collier.
\newblock Culture, politics, and economic development.
\newblock \emph{Annual Review of Political Science}, 20\penalty0 (1):\penalty0
  111--125, 2017.

\bibitem[Dwenger et~al.(2016)Dwenger, Kleven, Rasul, and
  Rincke]{dwenger2016extrinsic}
Nadja Dwenger, Henrik Kleven, Imran Rasul, and Johannes Rincke.
\newblock Extrinsic and intrinsic motivations for tax compliance: Evidence from
  a field experiment in {G}ermany.
\newblock \emph{American Economic Journal: Economic Policy}, 8\penalty0
  (3):\penalty0 203--232, 2016.

\bibitem[Eichner and Pethig(2021)]{eichner2020climate}
Thomas Eichner and R{\"u}diger Pethig.
\newblock Climate policy and moral consumers.
\newblock \emph{Scandinavian Journal of Economics}, 123\penalty0 (4):\penalty0 1190--1226, 2021.

\bibitem[Eichner and Pethig(2022)]{eichner2020kantians}
Thomas Eichner and R{\"u}diger Pethig.
\newblock Kantians defy the economists’ mantra of uniform Pigovian emissions
  taxes.
\newblock \emph{Ecological Economics}, 200:\penalty0 107514, 2022.

\bibitem[Ekmekci and Kos(2023)]{ekmekci2023signaling}
Mehmet Ekmekci and Nenad Kos.
\newblock Signaling covertly acquired information.
\newblock \emph{Journal of Economic Theory}, 214:\penalty0 105746, 2023.
\newblock \doi{10.1016/j.jet.2023.105746}.

\bibitem[Fischbacher and G{\"a}chter(2010)]{fischbacher2010social}
Urs Fischbacher and Simon G{\"a}chter.
\newblock Social preferences, beliefs, and the dynamics of free riding in
  public goods experiments.
\newblock \emph{American Economic Review}, 100\penalty0 (1):\penalty0 541--556,
  2010.

\bibitem[Fischbacher et~al.(2001)Fischbacher, G{\"a}chter, and
  Fehr]{fischbacher2001people}
Urs Fischbacher, Simon G{\"a}chter, and Ernst Fehr.
\newblock Are people conditionally cooperative? {E}vidence from a public goods
  experiment.
\newblock \emph{Economics Letters}, 71\penalty0 (3):\penalty0 397--404, 2001.

\bibitem[Herrmann and Th{\"o}ni(2009)]{herrmann2009measuring}
Benedikt Herrmann and Christian Th{\"o}ni.
\newblock Measuring conditional cooperation: A replication study in {R}ussia.
\newblock \emph{Experimental Economics}, 12\penalty0 (1):\penalty0 87--92,
  2009.

\bibitem[Juan-Bartroli(2024)]{juan2024injunctive}
Pau Juan-Bartroli.
\newblock On injunctive norms: Theory and experiment.
\newblock 2024.

\bibitem[Juan-Bartroli and Karag{\"o}zo{\u{g}}lu(2026)]{juan2024moral}
Pau Juan-Bartroli and Emin Karag{\"o}zo{\u{g}}lu.
\newblock Moral preferences in bargaining.
\newblock \emph{Economic Theory}, 81:\penalty0 429--452, 2026.

\bibitem[Juan-Bartroli and
  Rivero-Wildemauwe(2025)]{juanbartroli2025socialpreferencesmoralconcerns}
Pau Juan-Bartroli and José~Ignacio Rivero-Wildemauwe.
\newblock Social preferences or moral concerns: What drives rejections in the
  ultimatum game?, 2025.
\newblock URL \url{https://arxiv.org/abs/2510.22086}.

\bibitem[Kotlikoff et~al.(1988)Kotlikoff, Persson, and
  Svensson]{kotlikoff1988social}
Laurence~J Kotlikoff, Torsten Persson, and Lars~EO Svensson.
\newblock Social contracts as assets: A possible solution to the
  time-consistency problem.
\newblock \emph{The American Economic Review}, pages 662--677, 1988.

\bibitem[Kwon et~al.(2023)Kwon, Zhi-Xuan, Tenenbaum, and Levine]{kwon2023not}
Joe Kwon, Tan Zhi-Xuan, Joshua Tenenbaum, and Sydney Levine.
\newblock When it is not out of line to get out of line: The role of
  universalization and outcome-based reasoning in rule-breaking judgments.
\newblock In \emph{Proceedings of the Annual Meeting of the Cognitive Science
  Society}, volume~45, page~45, 2023.

\bibitem[Laffont(1975)]{laffont1975macroeconomic}
Jean-Jacques Laffont.
\newblock Macroeconomic constraints, economic efficiency and ethics: An
  introduction to Kantian economics.
\newblock \emph{Economica}, 42\penalty0 (168):\penalty0 430--437, 1975.

\bibitem[Levine et~al.(2020)Levine, Kleiman-Weiner, Schulz, Tenenbaum, and
  Cushman]{levine2020logic}
Sydney Levine, Max Kleiman-Weiner, Laura Schulz, Joshua Tenenbaum, and Fiery
  Cushman.
\newblock The logic of universalization guides moral judgment.
\newblock \emph{Proceedings of the National Academy of Sciences}, 117\penalty0
  (42):\penalty0 26158--26169, 2020.
\newblock \doi{10.1073/pnas.2014505117}.
\newblock URL \url{https://www.pnas.org/doi/abs/10.1073/pnas.2014505117}.

\bibitem[Lindbeck et~al.(1999)Lindbeck, Nyberg, and
  Weibull]{lindbeck1999social}
Assar Lindbeck, Sten Nyberg, and J{\"o}rgen~W Weibull.
\newblock Social norms and economic incentives in the welfare state.
\newblock \emph{The Quarterly Journal of Economics}, 114\penalty0 (1):\penalty0
  1--35, 1999.

\bibitem[L{\'o}pez-Cantor(2025)]{lopezcantor2025individual}
Santiago L{\'o}pez-Cantor.
\newblock Individual morality and inequality: Public goods and the political
  support for redistribution.
\newblock 2025.

\bibitem[Maskin and Tirole(2001)]{maskin2001markov}
Eric Maskin and Jean Tirole.
\newblock Markov perfect equilibrium: {I}. observable actions.
\newblock \emph{Journal of Economic Theory}, 100\penalty0 (2):\penalty0
  191--219, 2001.
\newblock \doi{10.1006/jeth.2000.2785}.

\bibitem[Michalopoulos and Papaioannou(2013)]{michalopoulos2013pre}
Stelios Michalopoulos and Elias Papaioannou.
\newblock Pre-colonial ethnic institutions and contemporary african
  development.
\newblock \emph{Econometrica}, 81\penalty0 (1):\penalty0 113--152, 2013.

\bibitem[Michalopoulos and Papaioannou(2015)]{michalopoulos2015ethnic}
Stelios Michalopoulos and Elias Papaioannou.
\newblock On the ethnic origins of african development: Chiefs and precolonial
  political centralization.
\newblock \emph{Academy of Management Perspectives}, 29\penalty0 (1):\penalty0
  32--71, 2015.

\bibitem[Montenbruck(2023)]{montenbruck2023fiscal}
Laura Montenbruck.
\newblock Fiscal exchange and tax compliance: {S}trengthening the social
  contract under low state capacity.
\newblock Job Market Paper, University of Mannheim, 2023.
\newblock URL
  \url{https://lauramontenbruck.github.io/research/fiscal-exchange/jmp_montenbruck.pdf}.

\bibitem[Mu{\~n}oz-Sobrado(2022)]{munoz2022taxing}
Esteban Mu{\~n}oz-Sobrado.
\newblock Taxing moral agents.
\newblock CESifo Working Paper No.~9867, 2022.

\bibitem[Norman(2020)]{norman2020evolution}
Thomas~WL Norman.
\newblock The evolution of monetary equilibrium.
\newblock \emph{Games and Economic Behavior}, 122:\penalty0 233--239, 2020.

\bibitem[Papaioannou(2020)]{papaioannou2020comment}
Elias Papaioannou.
\newblock A comment on: ``{S}tate capacity, reciprocity, and the social
  contract'' by {T}imothy {B}esley.
\newblock \emph{Econometrica}, 88\penalty0 (4):\penalty0 1351--1358, 2020.

\bibitem[Quattrone and Tversky(1984)]{quattrone1984causal}
George~A. Quattrone and Amos Tversky.
\newblock Causal versus diagnostic contingencies: On self-deception and on the
  voter's illusion.
\newblock \emph{Journal of Personality and Social Psychology}, 46\penalty0
  (2):\penalty0 237--248, 1984.

\bibitem[Rivero-Wildemauwe(2025)]{rivero2025trade}
Jos{\'e}~Ignacio Rivero-Wildemauwe.
\newblock Trade among moral agents with information asymmetries.
\newblock \emph{arXiv preprint arXiv:2505.20551}, 2025.

\bibitem[Robinson and Parsons(2006)]{robinson2006political}
James~A. Robinson and Q.~Neil Parsons.
\newblock State formation and governance in {B}otswana.
\newblock \emph{Journal of African Economies}, 15\penalty0 (suppl\_1):\penalty0
  100--140, 2006.

\bibitem[Roemer(2010)]{roemer2010kantian}
John~E. Roemer.
\newblock Kantian equilibrium.
\newblock \emph{Scandinavian Journal of Economics}, 112\penalty0 (1):\penalty0
  1--24, 2010.

\bibitem[Roemer(2019)]{roemer2019cooperate}
John~E. Roemer.
\newblock \emph{How We Cooperate: A Theory of Kantian Optimization}.
\newblock Yale University Press, 2019.

\bibitem[Ross et~al.(1977)Ross, Greene, and House]{ross1977false}
Lee Ross, David Greene, and Pamela House.
\newblock The ``false consensus effect'': An egocentric bias in social
  perception and attribution processes.
\newblock \emph{Journal of Experimental Social Psychology}, 13\penalty0
  (3):\penalty0 279--301, 1977.

\bibitem[Rothstein and Uslaner(2005)]{rothstein2005all}
Bo~Rothstein and Eric~M. Uslaner.
\newblock All for all: Equality, corruption, and social trust.
\newblock \emph{World Politics}, 58\penalty0 (1):\penalty0 41--72, 2005.

\bibitem[Salonia(2025)]{salonia2025foundation}
Enrico~Mattia Salonia.
\newblock A foundation for universalisation in games.
\newblock CEIS Tor Vergata Research Paper Series, Vol.~23, Issue~5, No.~603,
  2025.

\bibitem[Sandholm(2010)]{sandholm2010}
William~H. Sandholm.
\newblock \emph{Population Games and Evolutionary Dynamics}.
\newblock MIT Press, 2010.

\bibitem[Sarkisian(2017)]{sarkisian2017team}
Roberto Sarkisian.
\newblock Team incentives under moral and altruistic preferences: Which team to
  choose?
\newblock \emph{Games}, 8\penalty0 (3):\penalty0 37, 2017.

\bibitem[Sarkisian(2021{\natexlab{a}})]{sarkisian2021optimal}
Roberto Sarkisian.
\newblock Optimal incentives schemes under homo moralis preferences.
\newblock \emph{Games}, 12\penalty0 (1):\penalty0 28, 2021{\natexlab{a}}.

\bibitem[Sarkisian(2021{\natexlab{b}})]{sarkisian2021screening}
Roberto Sarkisian.
\newblock Screening teams of moral and altruistic agents.
\newblock \emph{Games}, 12\penalty0 (4):\penalty0 77, 2021{\natexlab{b}}.

\bibitem[Sethi and Somanathan(2001)]{sethi2001}
Rajiv Sethi and E.~Somanathan.
\newblock Preference evolution and reciprocity.
\newblock \emph{Journal of Economic Theory}, 97:\penalty0 273--297, 2001.

\bibitem[Shafir and Tversky(1992)]{shafir1992thinking}
Eldar Shafir and Amos Tversky.
\newblock Thinking through uncertainty: Nonconsequential reasoning and choice.
\newblock \emph{Cognitive Psychology}, 24\penalty0 (4):\penalty0 449--474,
  1992.

\bibitem[Slemrod(2007)]{slemrod2007cheating}
Joel Slemrod.
\newblock Cheating ourselves: The economics of tax evasion.
\newblock \emph{Journal of Economic Perspectives}, 21\penalty0 (1):\penalty0
  25--48, 2007.

\bibitem[Tabellini(2010)]{tabellini2010culture}
Guido Tabellini.
\newblock Culture and institutions: Economic development in the regions of
  europe.
\newblock \emph{Journal of the European Economic Association}, 8\penalty0
  (4):\penalty0 677--716, 2010.

\bibitem[Uslaner(2008)]{uslaner2008corruption}
Eric~M. Uslaner.
\newblock \emph{Corruption, Inequality, and the Rule of Law}.
\newblock Cambridge University Press, 2008.

\bibitem[van Leeuwen and Alger(2024)]{vanleeuwen2024estimating}
Boris van Leeuwen and Ingela Alger.
\newblock Estimating social preferences and {K}antian morality in strategic
  interactions.
\newblock \emph{Journal of Political Economy Microeconomics}, 2\penalty0
  (4):\penalty0 665--706, 2024.
\newblock \doi{10.1086/732125}.

\bibitem[Weingast(1997)]{weingast1997political}
Barry~R Weingast.
\newblock The political foundations of democracy and the rule of the law.
\newblock \emph{American political science review}, 91\penalty0 (2):\penalty0
  245--263, 1997.

\bibitem[Weingast(2005)]{weingast2005constitutional}
Barry~R Weingast.
\newblock The constitutional dilemma of economic liberty.
\newblock \emph{Journal of Economic Perspectives}, 19\penalty0 (3):\penalty0
  89--108, 2005.

\end{thebibliography}
\end{document}